\documentclass{article}

\usepackage[utf8]{inputenc}
\usepackage[english]{babel}
\usepackage{xcolor,colortbl}
\usepackage{url}
\usepackage{graphicx}
\usepackage{amsfonts}
\usepackage{amsmath,amssymb}
\usepackage{amsthm}
\usepackage[linesnumbered, ruled, vlined]{algorithm2e}
\usepackage{balance}
\usepackage{subcaption}
\usepackage{multirow}

\usepackage{typed-checklist}

\usepackage{enumitem}

\usepackage{wrapfig}

\DeclareGraphicsExtensions{.png,.jpg,.pdf}
\graphicspath{{../body/figures/}}

\newcommand{\silence}[1]{}

\definecolor{orange}{rgb}{0.93, 0.57, 0.13}

\def\other1#1{\emph{\color{magenta}O1: #1}}
\def\other2#1{\emph{\color{red}O2: #1}}
\def\other3#1{\emph{\color{purple}O3: #1}}

\newcommand{\result}[1]{\textcolor{blue}{\emph{#1}}}

\def\ANALYZE{\textsf{ANALYZE} }

\newif\ifDRAFT
\DRAFTfalse       

\ifDRAFT
\newcommand{\isDraft}[1]{#1}
\else
\newcommand{\isDraft}[1]{}
\fi

\ifDRAFT
\def\inlineRem#1{{\noindent\bf\fcolorbox{red}{white}{\color{blue}{$\|$~{#1}~$\|$}}}}
\else
\newcommand{\inlineRem}[1]{}
\fi

\ifDRAFT
\def\rem#1{\colorbox{yellow}{\scriptsize\color{brown} @@} \marginpar{\scriptsize\color{brown} #1}}
\else
\newcommand{\rem}[1]{}
\fi

\ifDRAFT
\def\sideNote#1{\marginpar{\indent \color{gray}{\footnotesize{\emph{#1}}}}}
\else
\newcommand{\sideNote}[1]{}
\fi

\def\sticky#1{\colorbox{yellow}{\scriptsize\color{blue} @@} \marginpar{\scriptsize\color{blue} #1}}

\newcommand{\forcepagebreak}{\clearpage \newpage}

\theoremstyle{definition}

\newtheorem{theorem}{Theorem}[section]

\title{Semantics and Multi-Query Optimization Algorithms for the Analyze Operator}

\author{
    Marios Iakovidis  $^{[0009-0006-9061-3450]}$ \\
	University of Ioannina, Ioannina 45110, Greece \\
    \textsf{miakovidis@uoi.gr}\\
    \and
    Panos Vassiliadis $^{[0000-0003-0085-6776]}$  \\
	University of Ioannina, Ioannina 45110, Greece \\
	\textsf{pvassil@cs.uoi.gr}
}

\begin{document}

    \maketitle

\fbox{%
  \parbox{0.9\textwidth}{%
    \footnotesize {A short version of this paper has been accepted at DOLAP 2026:\\ M. Iakovidis, P. Vassiliadis. Multi-Query Optimization for the novel Analyze Operator. Accepted at 28th International Workshop on Design, Optimization, Languages and Analytical Processing of Big Data (DOLAP) co-located with EDBT/ICDT 2026, Tampere, Finland - March 24, 2026.
    }
  }
}

\begin{abstract}
In their hunt for highlights, i.e., interesting patterns in the data, data analysts have to issue \textit{groups} of related queries and \textit{manually} combine their results. To the extent that the analyst goals are based on an \textit{intention} on what to discover (e.g., contrast a query result to peer ones, verify a pattern to a broader range of data in the data space, etc), the integration of \textit{intentional} query operators 
in analytical engines can enhance the efficiency of these analytical tasks. In this paper, we introduce, with well-defined semantics, the \textit{ANALYZE operator}, a novel cube querying intentional operator that provides a $360^{o}$ view of data. We define the semantics of an ANALYZE query as a tuple of five internal, facilitator cube queries, that (a) report on the specifics of a particular subset of the data space, which is part of the query specification, and to which we refer as the \textit{original query}, (b) contrast the result with results from peer-subspaces, or \textit{sibling queries}, and, (c) explore the data space in lower levels of granularity via 
\textit{drill-down queries}. 
We introduce formal query semantics for the operator and we theoretically prove that we can obtain the exact same result by merging the facilitator cube queries into a smaller number of queries. This effectively introduces a \textit{multi-query optimization (MQO)} strategy for executing an ANALYZE query. We propose three alternative algorithms, (a) a simple execution without optimizations (\textit{Min-MQO}), (b) a total merging of all the facilitator queries to a single one (\textit{Max-MQO}), and (c) an intermediate strategy, \textit{Mid-MQO}, that merges only a subset of the facilitator queries. Our experimentation demonstrates that Mid-MQO achieves consistently strong performance across several contexts, Min-MQO always follows it, and Max-MQO excels for queries where the siblings are sizable and significantly overlap.



\end{abstract}


\section{Introduction}\label{sec:intro}
Routinely, data analysts find themselves issuing \textit{groups} of related queries in an attempt to uncover hidden gems in the data. Such significant findings, or \textit{highlights}, are practically verifications of the existence of interesting properties  (e.g., a trend, a peak, an outlier, etc) in a subset of the data that is of interest. Automating the extraction of highlights is, therefore, an important task in accelerating the work of data analysts, and allowing for more extensive and complete exploration of the data space. 
To the extent that the underlying analyst goals are based on an \textit{intention} on what to discover, (e.g., contrast a query result to results of queries exploring subsets of the data space that are similar to the one under scrutiny, verify a pattern to a broader range of data in the dataspace, forecast the trend etc) we have introduced the need for \textit{intentional} query operators \cite{DBLP:conf/dolap/VassiliadisM18, DBLP:journals/is/VassiliadisMR19} that abstract these data explorations into higher level operators. 
As such, these operators should come with a formal model for data, formal semantics, and, of course,  with optimization methods for their execution.

Related work on the subject of \textit{Automated Highlight Extraction} provides tools for multidimensional data exploration to discover interesting data patterns \cite{DBLP:conf/vldb/Sarawagi99, DBLP:conf/edbt/SarawagiAM98,DBLP:journals/tvcg/WangSZCXMZ20, DBLP:journals/vldb/AbuzaidKSGXSASM21, DBLP:conf/sigmod/00040HZ21}. The primary focus of these tools is to explore a data space, find data patterns, and measure how interesting they are using interestingness metrics. However, the research landscape is characterized by (a) the absence of a widely accepted formal model that encompasses the proposed tasks in a coherent framework (the Intentional Model of \cite{DBLP:conf/dolap/VassiliadisM18, DBLP:journals/is/VassiliadisMR19} is a first attempt), (b) at large, the omission of the opportunity to utilize multidimensional hierarchical data spaces (in the Business Intelligence sense), by focusing almost solely on relational data, and (c) to the best of our knowledge, the absence of an operator that provides an all-encompassing view of the data of interest to an analyst in a single invocation. 

In this paper, we propose the \emph{ANALYZE operator}, a novel cube query operator that formalizes the intention of providing a $360^{o}$ view of the data to the data analyst, via a single invocation. Being an algebraic operator, ANALYZE comes with (a) a well-defined multidimensional, hierarchical underlying data model with data cubes and dimensions as its basic elements \cite{PV22}, (b) well defined semantics on the expected results, and also, in our case, (c) optimization strategies for its efficient execution. The simplicity of a hierarchical multidimensional model facilitates a simple query operator that exploits the model's ability to compute sibling, ancestor and descendant values at different levels of coarseness smoothly. 
To facilitate comprehension, we will use the following query, as a reference example throughout the paper:

\hrulefill

    \begin{tabular}{ll}
        \textsc{analyze} &{\color{blue}$sum(Store~Sales)$ as $SumSales$}\\ 
        \textsc{from} &{\color{blue}$Sales$}\\ 
        \textsc{for} &
            {\color{blue}$Date.Quarter~=~1997-Q3$ $\wedge$} \\
            &{\color{blue}$Customer.state~=~'CA'$ $\wedge$} \\
            &{\color{blue}$Promo.Media~=~'Daily~Paper'$} 
         \\
        \textsc{group by} &{\color{blue}$month,~customerRegion$}\\
        \textsc{as} &{\color{blue}$PaperPromoCA1997Q3$} \\
    \end{tabular}

\hrulefill

The semantics of an ANALYZE query include five internal, \textit{facilitator} cube queries. First, the operator has to deal with the information requested for a very specific subset of the data space that is obtained via (a) a set of filters and (b) a pair of groupers that are used in a GROUP-BY fashion to aggregate a measure. We call this the \textit{original query} of the operator as it targets a subset of the data space of interest. 
In our example here, the three filters on 1997-Q3, CA and Daily Paper constrain the data involved to a subspace of interest. The result groups data by month and customer region and reports the sum of store sales.
Second, we want to contrast the result with similar, peer results. To this end, we use the combination of grouper and filter conditions to obtain \textit{sibling} subspaces of the data that query subspaces sharing the same ancestor values with the filters of the original query.
In our case, since we want to constrain time to 1997-Q3, we need to find similar quarters to 1997-Q3 to contrast the results to. This is where the hierarchical nature of the multidimensional space kicks in, by allowing us to define as siblings the values sharing the same ancestor value: assuming that we group quarters in years, we contrast the original query to the quarters of the year 1997, to which 1997-Q3 belongs. Similarly, to find the peers of CA, we contrast CA to the rest of the states of the United States.
Finally, we explore the data space in lower levels of granularity via what we call \textit{drill-down queries} (very much in the traditional OLAP sense), and report the sum of store sales for the months of the quarter of 1997-Q3 and the customers residing in CA. 
We would like to stress here that the emphasis is on the algebraic, operator side and not on the language side: an ANALYZE query could be produced via a CLI call, an SQL-like expression, or a point-n-click interface, yet they would all end-up in the execution of the same algebraic operator.

Apart from introducing formal query semantics for the operator, however, we have been able to theoretically prove that we can obtain the exact same result by merging the facilitator cube queries into a smaller number of merged queries that exploit cube usability results to reduce redundant computation. This theoretical result effectively allows us to introduce a \textit{multi-query optimization (MQO)} strategy that merges several collaborator queries into one or more merged queries, in an attempt to speed up the execution of an ANALYZE query. 
In fact, we have devised several strategies of Multi-Query Optimization at merging the underlying facilitator queries into one. In the simplest case, no optimization is performed and the five facilitator queries are executed independently: we refer to this as \textit{Min-MQO}, as it introduces the least degree of query merging. Second, we follow the theoretical result and merge all the facilitator queries into a single one, a strategy that we call \textit{Max-MQO}, as it implements the original theoretical result of merging everything. Finally, based on our original experimental observations that identified cases of low performance for the aforementioned strategies, we also introduce an intermediate strategy, \textit{Mid-MQO}, that strikes a balance between the two extremes and merges only a subset of the facilitator queries.  In all merging strategies, once the results of the merged queries are obtained, they are post-processed, such that the exact facilitator queries are populated correctly. 

\silence{
\textit{Our experimentation demonstrates that Mid-MQO achieves consistently strong performance across several contexts, with the rest of the algorithms providing comparable performance in very specific niches.} \textcolor{blue}{Our first contribution is a novel intentional query operator: we introduce the ANALYZE operator, an intentional cube query operator that encompasses original, sibling, and drill-down queries into a single, unified execution, to provide a $360^{o}$ view of the data of interest to the analyst. We provide a theoretical result on the possibility of merging (a subset or all) the constituents of an ANALYZE query into a single, merged query that avoids query overhead. Finally, we introduce three multi-query optimization strategies (Min-MQO, Mid-MQO, Max-MQO) and a thorough experimentation where we evaluate their performance, robustness, and scalability.}
}
\silence{The original idea behind introducing MQO was that with fewer queries there are fewer query overheads. However, our experiments have shown that the key to the performance of the algorithms is the subset of the fact table that participates in the computation of the final result. The higher this subset, the heavier the computation. To the extent that the maximum merging into a single merged query requires involving large, detailed subsets of the fact table, this often leads to prohibitive delays. Mid-MQO, on the other hand, limits the degree of merging and avoids large, detailed subsets of the data space, thus providing higher performance. Our experiments in several data sets demonstrate this property consistently. We have conducted an extensive empirical study on a large number of data sets -also, in various data sizes- like TPC-DS, Foodmart, Northwind, pkdd99, showing that Mid-MQO consistently achieves the best efficiency; only in very niche cases do we see the simple, or, the maximum-merging algorithms being of similar, or even better performance than the Mid-MQO algorithm.

In summary, our contributions are as follows.
\begin{itemize}
	\item A novel intentional query operator: we introduce the ANALYZE operator, an intentional cube query operator that encompasses original, sibling, and drill-down queries into a single, unified execution, to provide a $360^{o}$ view of the data of interest to the analyst. We provide formal semantics for the operator.
	\item A theoretical result on the possibility of merging (a subset or all) the constituents of an ANALYZE query into a single, merged query that avoids query overhead.
	\item The introduction of three multi-query optimization strategies (Min-MQO, Mid-MQO, Max-MQO) and a thorough experimentation where we evaluate their performance, robustness, and scalability.
\end{itemize}
We would also like to highlight here that we have integrated the implementations of these algorithms within the Delian Cubes system \cite{Delian_Cubes_Engine}, demonstrating their practical applicability in an analytical cube query engine.
}
We have extensively evaluated the performance of each MQO strategy via various query workloads on  multiple datasets, to assess both the effect of data size and query selectivity to the execution cost as well as the optimal strategy. Our experiments demonstrate that \textit{Mid-MQO} is the most efficient algorithm that scales smoothly with data size and selectivity and typically achieves the best performance. Although in many cases \textit{Max-MQO} has the worst performance, when the sibling queries are sizable and significantly overlap, \textit{Max-MQO} is the fastest algorithm; we can predict via a decision tree when this is the case. Finally, \textit{Min-MQO} is never the optimal strategy, although it follows \textit{Mid-MQO} fairly closely.

The remainder of this paper is structured as follows. Section \ref{sec:rw} reviews related work in exploratory data analysis and pattern discovery. Section \ref{sec:background} introduces the formal modeling background. Section \ref{sec:analyze} presents the design and semantics of the ANALYZE operator. Section \ref{sec:MQO} develops our multi-query optimization strategies. Section \ref{sec:exps} reports our experimental evaluation. Section \ref{sec:end} concludes with a discussion of the findings and future directions.

\section{Related Work}\label{sec:rw}

\silence{\cite{DBLP:conf/icdt/MeliouAHHMV25} categorizes data analytics as \textit{descriptive}, \textit{diagnostic}, \textit{prescriptive} and \textit{predictive}. Descriptive analytics discover patterns, diagnostic analytics analyze observations, prescriptive analytics provide optimal parameter configuration, and predictive analytics utilize historical data to predict future trends. 

\textbf{Pattern Discovery in Data}. \cite{DBLP:conf/dolap/GolabS21} categorized recent data exploration techniques based on: \textit{coverage}, \textit{contrast} and \textit{information}. Coverage-based methods \cite{DBLP:journals/pvldb/DasADY11,DBLP:journals/pvldb/GolabKKS10,DBLP:conf/icde/JoglekarGP16} aim to identify data patterns that cover tuples with specific values of a measure attribute. 
These methods use a minimum percentage of tuples corresponding to a measure value to be returned. The objective is to select the fewest possible patterns that cover the minimum percentage value. The advantage of these methods is that they are concise and easy to optimize. On the other hand, finding the right minimum percentage value may require a lot of trial and error on the user's end. Contrast-based methods select data patterns that present significant differences for a binary measure value. Various pattern ranking measures are used such as \textit{Risk Ratio} \cite{DBLP:journals/tods/AbuzaidBDGMNRS18}, \textit{Mean Shift} \cite{DBLP:journals/vldb/AbuzaidKSGXSASM21}, \textit{Diagnosis Cost} \cite{DBLP:conf/sigmod/WangDM15} and \textit{Intervention} \cite{DBLP:conf/sigmod/RoyS14}. These methods recognize outliers and differences between data subsets that coverage methods cannot detect, but the result may not be concise. Finally, information-based methods utilize statistics such as \textit{maximum entropy, Kullback-Leibler divergence} \cite{DBLP:journals/debu/GebalyFGKS18} and \textit{minimal sum of squared errors} \cite{DBLP:journals/dke/GolfarelliGR14} between the estimated distribution and the true distribution of a measure, in order to quantify the information. The goal of these methods is to discover surprising and informative patterns that the coverage and contrast methods are unable to detect. However, these methods are expensive and are not easy to optimize since the statistics' value shifts as the algorithm progresses. Considering the effectiveness of each method, information-based methods provide more information about the measure's distribution, and contrast-based methods provide better outlier detection, while coverage-based methods provide more concise results.}

\textbf{Automated EDA.} 
Exploratory Data Analysis (EDA) \cite{DBLP:conf/sigmod/MiloS20} allows data analysts to interact with dataset management tools to gain highlights. The goal of EDA is the production of highlights, in order to find interesting, surprising and important facts of a data subspace (likely a query result) and present them using data narration methods \cite{DBLP:conf/vldb/Sarawagi99, DBLP:conf/edbt/SarawagiAM98, DBLP:conf/vldb/Sarawagi00, DBLP:conf/vldb/SatheS01, DBLP:conf/sigmod/IdreosPC15, DBLP:conf/sigmod/TangHYDZ17, DBLP:conf/cikm/ElMS19, DBLP:conf/sigmod/DingHXZZ19, DBLP:journals/tvcg/WangSZCXMZ20, DBLP:conf/sigmod/00040HZ21, DBLP:conf/cikm/PersonnazABFS21, 
DBLP:journals/tvcg/ShiXSSC21, 
DBLP:journals/cacm/BieRHHSW22, DBLP:conf/edbt/Amer-YahiaMP23, 
DBLP:journals/tvcg/SunCCWSC23, DBLP:conf/chi/LiYZWQW23, DBLP:conf/emnlp/MaDWHZ23, 
DBLP:journals/pvldb/AmerYahia24, DBLP:journals/pvldb/XingWJ24, DBLP:conf/edbt/LipmanMSWZ25}. Over the years, several operators have been proposed to complement the fundamental ones. The \emph{DIFF} operator \cite{DBLP:conf/vldb/Sarawagi99} returns the set of tuples that most successfully describe the difference of values between two cells of a cube that are given as input. The same author also describes a method that profiles the exploration of a user and uses the Maximum Entropy principle to recommend which unvisited parts of the cube can be the most surprising in a subsequent query \cite{DBLP:conf/vldb/Sarawagi00}. Finally, the \emph{RELAX} operator allows to verify whether a pattern observed at a certain level of detail is present at a coarser level of detail too \cite{DBLP:conf/vldb/SatheS01}.

Alternative operators have also been proposed in the Cinecubes method \cite{DBLP:conf/dolap/GkesoulisV13, DBLP:journals/is/GkesoulisVM15}. The goal of this effort is to facilitate automated reporting, given an  original OLAP query as input. To achieve this purpose two operators (expressed as \emph{acts}) are proposed, namely, (a) \emph{put-in-context}, i.e., compare the result of the original query to query results over similar, sibling values; and (b) \emph{give-details}, where  drill-downs of the original query's groupers are performed.
\silence{The roots of automating EDA go back to \cite{DBLP:conf/edbt/SarawagiAM98} that extracts exceptions at different levels of aggregation in the data, soon complemented by Sarawagi's DIFF operator that detects the most appropriate tuples that describe the difference between two cube cells. The work was complemented by \cite{DBLP:conf/vldb/Sarawagi00} that proposes areas of the cube space for exploration, and the RELAX operator \cite{DBLP:conf/vldb/SatheS01} that checks a difference is present in other parts of the data space too.
\cite{DBLP:conf/edbt/SarawagiAM98} facilitates a \textit{discovery-driven exploration} method to replace the standard OLAP \textit{hypothesis-driven exploration}. The method compares non-anticipated cell values with the actual cell values at different levels of aggregation, to discover non-\textit{exceptions}. The anticipated cell values are decided using statistical models. Exceptions are summarized by surprise metrics and visualized to present the degree of exception.
\cite{DBLP:conf/vldb/Sarawagi99} proposes an operator that summarizes the differences in multidimensional aggregates, to understand why an aggregate value is different in one cell compared to another. The operator utilizes information-theory metrics such as \textit{total data transmission cost} and its goal is to find cell combinations that minimize data transmission cost, maximize information, and detect significant differences between them. The pairs of cube cells, which are ranked as the most significantly different, are reported to the user with details of any increases/decreases between them.
MetaInsight \cite{DBLP:conf/sigmod/00040HZ21} proposes a novel method for extracting data patterns that are similar or unusual. Using a scoring function that considers importance, conciseness, and actionability, MetaInsight searches for commonness and exceptions in the underlying data and ranks the results according to usefulness. \cite{DBLP:journals/vldb/AbuzaidKSGXSASM21} proposes the DIFF operator that compares the difference between a test and a control relation. The difference metrics supported are \textit{support},\textit{odds ratio},\textit{risk ratio}, and \textit{mean shift}. The metrics and their values are defined by the user. The DIFF operator utilizes its own query type (DIFF query) that is transformed to standard SQL.
Datashot \cite{DBLP:journals/tvcg/WangSZCXMZ20} presents an automated  pipeline for fact sheet creation. 
The auto-generation of a fact sheet is based on fact extraction where the framework extracts data facts with respect to statistical and interestingness metrics such as impact and significance. Then, the framework ranks the facts, to select those to be recommended to the user, and finally the recommendations are visually presented.}

\textbf{Intentional Model}. The Intentional Model \cite{DBLP:conf/dolap/VassiliadisM18, DBLP:journals/is/VassiliadisMR19} envisions Business Intelligence tools, where users will apply intentional operators over data, to simplify the querying step of data analysis operations. The user expresses high-level requirements, like 'analyze', 'assess', 'predict', that have to be addressed via auxiliary queries, ML models, highlights and data stories. (a) auxiliary queries that collect related data; (b) models, in the ML sense of data that extract patterns from the collected data, (c) highlights, interesting parts of the data and models, ranked with respect to data interestingness, and (d) data stories, presenting the findings via data visualization and storytelling methods. Although the general framework of the intentional model is specified in  \cite{DBLP:conf/dolap/VassiliadisM18, DBLP:journals/is/VassiliadisMR19}, the exact implementation of the operators is an open problem (to which we currently contribute with respect to the ANALYZE operator).
\cite{DBLP:journals/tkde/FranciaGMRV23, DBLP:conf/edbt/FranciaGMRV21} explore the ASSESS operator for contrasting query results to specified 'benchmarks' of expected performance.
\cite{DBLP:journals/isf/FranciaMPR22} presents the DESCRIBE operator which generates cubes annotated with model components (e.g., clusters, outliers).
\cite{DBLP:journals/is/FranciaRM24} suggests an EXPLAIN operator, which uses statistical models to explain why a measure takes certain values in relation to other measures.
The most similar work to the current one, and the root of the Intentional Model, is the Cinecubes method \cite{DBLP:conf/dolap/GkesoulisV13, DBLP:journals/is/GkesoulisVM15}, which produces a data story as a PowerPoint presentation with results of auxiliary drill-down and sibling queries. 

\textbf{Multi-Query Optimization (MQO)}.\cite{DBLP:journals/tods/Sellis88} establishes the foundational evidence that processing multiple queries together yields considerable cost reductions compared to independent query execution. 
Sellis presents algorithms that search the space of possible combined execution plans, trading off the cost of materializing shared subplans against the benefits of reused computation, and develops heuristics to keep the evaluation cost manageable for realistic workloads. Following up on the aforementioned work,
\cite{DBLP:journals/tkde/SellisG90} provides one of the first formal treatments of MQO, as the task of generating an optimal execution strategy for a set of concurrent queries by identifying and exploiting common subexpressions, shared join paths, and reusable intermediate results. Cost-model extensions are introduced for evaluating shared plans, enabling the optimizer to reason about materialization and reuse.
\cite{DBLP:reference/db/RoyS09} discusses a comprehensive experimental validation for Volcano optimizers. \cite{DBLP:conf/edbt/HongRKGD09, DBLP:conf/icde/LeKDL12} provide extra rewriting techniques. \cite{DBLP:journals/corr/KathuriaS15} provides approximate optimization techniques.
\cite{DBLP:reference/db/RoyS09} \rem{to encycl. lemma ti akribws kanei??} discusses comprehensive experimental validation that demonstrates significant performance improvements with acceptable optimization overhead through the proposed Volcano-based heuristic algorithms. Volcano-based algorithms are designed to make MQO computationally efficient and extensible to complex queries and cost models. The work discusses algorithmic primitives that decompose MQO into tractable subproblems (e.g., bounded-cost search, grouping of queries by similarity) and combines them into an optimizer that scales to larger workloads compared to prior exhaustive approaches. \cite{DBLP:conf/edbt/HongRKGD09} proposes a rule-based approach that takes advantage of declarative transformation rules to detect and restructure shared computations between queries. The rule-based design merges equivalent subexpressions, factors out common operators, and introduces shared materialized results where cost.  
\cite{DBLP:journals/pvldb/WangC13} reports performance improvements over state-of-the-art techniques in MapReduce frameworks. The authors formulate techniques to identify common computations across multiple MapReduce jobs. In this context, MapReduce jobs are rewritten or scheduled so that shared map or reduce stages are executed once, and their outputs are reused across jobs. 
\cite{DBLP:conf/icde/LeKDL12} introduces techniques that identify common subgraph patterns, such as join motifs, and leverages them to build unified execution plans capable of serving many SPARQL queries simultaneously. The authors propose structural indexing and graph-pattern clustering methods to reduce the combinatorial explosion in multi-query plan enumeration. 
Beyond performance gains, \cite{DBLP:journals/corr/KathuriaS15} advances the theoretical foundation by providing the first approximation factor guarantees for MQO algorithms. The authors propose cost-aware grouping and plan-construction techniques that exploit structural properties of query workloads, enabling the discovery of beneficial shared subexpressions without exhaustive enumeration.

\textbf{Comparison to the State of the Art}. The main contribution of this work is the formal introduction of a novel operator along with its optimization techniques. Although conceptually related, our work here does not pertain to the core of the area of multi-query optimization: our MQO algorithms do not explore alternative query execution plans, but rather exploit theoretical guarantees of correctness.  Compared to the most similar line of work, \cite{DBLP:conf/dolap/GkesoulisV13, DBLP:journals/is/GkesoulisVM15}, we change the query semantics to address efficiency issues (thus we face a new problem), provide formal semantics of an operator and, and most importantly, we introduce multi-query optimization strategies for the speed up of query processing. Compared to the rest of the corpus of related literature, to the best of our knowledge, there is no other work with multi-query optimization strategies for a formal, single operator that provides a $360^{o}$ view of the data with opportunities for optimization exactly due to the integration of various queries. 

\silence{\textbf{Comparison}. Compared to \cite{DBLP:conf/dolap/GkesoulisV13, DBLP:journals/is/GkesoulisVM15} we change the semantics of the internal queries, and, most importantly, we introduce multi-query optimization strategies for the further speed up of the execution. Compared to the rest of the corpus of the related literature, to the best of our knowledge, there is no other work with multi-query optimization strategies for an operator that provides a $360^{o}$ view of the data.}

\begin{figure}
  \begin{center}
    \includegraphics[width=0.8\textwidth]{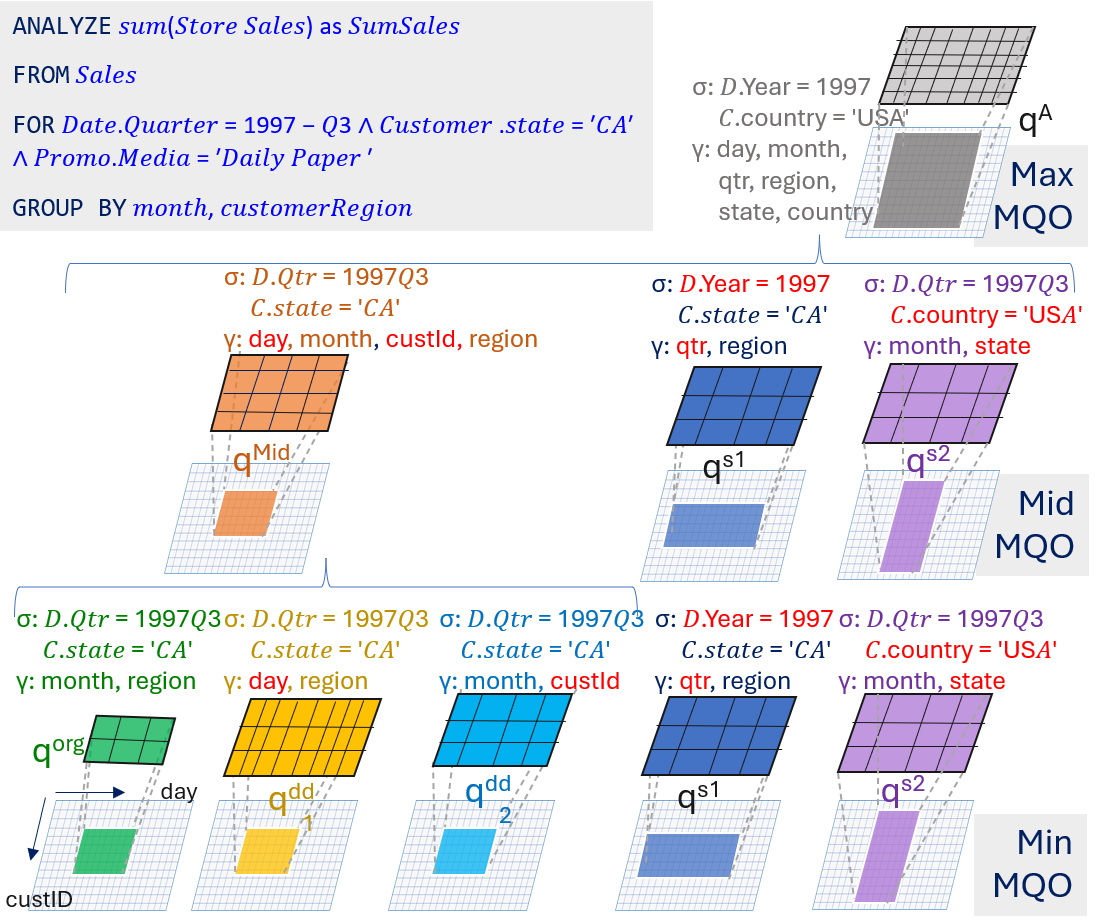}
  \end{center}
  \caption{Facilitator queries per algorithm}
      \label{fig:refFig}
\end{figure}

\section{Preliminaries \& Formal Background}\label{sec:background}

In our deliberations, we assume the formal model of \cite{PV22} (also used in \cite{DBLP:conf/dolap/Vassiliadis23}) for the definition of the multidimensional space, cubes and cube queries. 

\silence{
We follow a simplified apodosis of the formalities here to allow a concise description. The reader is assumed to have knowledge of fundamental OLAP concepts. \sticky{Carbon DOLAP23}
}

\subsection{Multidimensional Space}

\textbf{Multidimensional space}. Data are defined in the context of a multidimensional space. The multidimensional space includes a finite set of dimensions. Dimensions  provide the context for factual measurements and will be structured in terms of dimension levels, which are 
abstraction levels that aid in observing the data at different levels of granularity. For example, the dimension $Time$ is structured on the basis of the dimension levels $Day$, $Month$, $Year$, $All$. A \textit{dimension level} $L$ includes a name and a finite set of values, $dom$($L$), as its domain. Following the traditional OLAP terminology, the values that belong to the domains of the levels are called \textit{dimension members}, or simply \textit{members} (e.g., the values Paris, Rome, Athens are members of the domain of level $City$, and, subsequently, of dimension $Geography$).
A \textit{dimension} is a non-strict partial order of a finite set of \textit{levels}, obligatorily including (a) a most detailed level at the lowest possible level of coarseness, and (b) an upper bound, which is called $ALL$, with a single value `All'. We denote the partial order of dimensions with $\preceq$, i.e., $D.L_{low}$ $\preceq$ $ D.L_{high}$ signifies that $D.L_{low}$ is at a lower level of coarseness than $D.L_{high}$ in the context of dimension $D$ -- e.g., $Geo.City$ $\preceq$ $Geo.Country$. To ease notation, unless explicitly mentioned otherwise, in the sequel, we will assume total orders of levels, which means that there is a linear order of the levels starting from the lower level and ending at ALL with a linear chain of precedence.

We can map the members at a lower level of coarseness to values at a higher level of coarseness via an \textit{ancestor function} $anc_{L^l}^{L^h}()$. Given a member of a level $L_{l}$ as a parameter, say $v_{l}$, the function $anc_{L^l}^{L^h}()$ returns the corresponding ancestor value, for $v_{l}$, say $v_{h}$, at the level $L_{h}$, i.e., $v_{h}$ = $anc_{L^l}^{L^h}(v_{l})$. The inverse of an ancestor function is not a function, but a mapping of a high level value to a set of \textit{descendant values} at a lower level of coarseness (e.g., $Continent$ Europe is mapped to the set of all European cities at the $City$ level), and is denoted via the notation $desc_{L^h}^{L^l}()$. For 
example $Europe$ = $anc_{City}^{Continent}(Athens)$. See \cite{PV22} for more constraints and explanations. The union of the domains of all levels of a dimension will be abusively called the domain of the dimension, $dom(D)$. The multidimensional space is generated from the Cartesian product of the domains of its dimensions, thus each point c in the multidimensional space carries the coordinates $c(c_{1}, \ldots, c_{n})$, with exactly one value for each dimension. 

\textbf{Cubes}. Facts are structured in cubes. A \textit{cube} $C$ is defined with respect to several dimensions, fixed at specific levels and also includes a number of \textit{measures} to hold the measurable aspects of its facts. Thus the schema of a cube is a set of attributes, including a set of dimension levels (over different dimensions) and a set of measures that include factual measurements for the data stored in the cube.

\sloppy
Formally, the schema of a cube $schema(C)$, is a tuple, say $[D_{1}.L_{1},..., D_{n}.L_{n},,M_{1}, ..., M_{m}]$, or simply $[L_{1}, ..., L_{n}, M_{1}, ..., M_{m}]$, with the combination of the dimension levels (each coming from a different dimension) acting as primary key and context for the measurements and a set of measures as placeholders for the (aggregate) measurements. 

If all the dimension levels of a cube schema are the lowest possible levels of their dimension, the cube is a 
\textit{detailed cube}, typically denoted via the notation $C^{0}$ with a schema $[D_{1}.L^{0}_{1}, ..., D_{n}.L^{0}_{n}, M^{0}_{1}, ..., M^{0}_{m}]$. The result of a query $q$ is a set of cells that we denote as $q.cells$.
Each record of a cube $C$ under a schema
$[D_1.L_1, \dots, D_n.L_n$, $M_1, \dots, M_m]$, also known as a \textit{cell}, is a tuple $c$ = $[l_{1}, \ldots , l_{n}$, $m_{1}, \ldots ,$ $m_{m}]$,  such that $l_{i} \in dom(D_{i}.L_{i})$ and $m_{j}$ $\in$ $dom(M_{j})$. The vector $[l_{1}, \ldots , l_{n}]$ signifies the \textit{coordinates} of a cell. Equivalently, a $cell$ can be thought as a point in the multidimensional space of the cube's dimensions annotated with, or hosting, a set of measures. A cube includes a finite set of cells as its extension.

Overall, a cube database includes both hierarchies and cubes as distinct but related first-class citizens. A cube is defined with respect to levels that belong to the hierarchies, and, therefore, its position in the multidimensional hierarchical space constructed by the hierarchies is determined by these levels. 

\subsection{Cube Queries}

\textbf{Queries}. A cube query is a cube too, specified by: (a) the detailed cube over which it is imposed, (b) a selection condition that isolates the facts that qualify for further processing, (c) the grouping levels, which determine the 
coarseness of the result, and, (d) an aggregation over some or all measures of the cube that accompanies the grouping levels in the final result.

\begin{center}
    $q$ = $<C^{0}$, $\phi$, $[L_{1}, ..., L_{n},
M_{1}, ..., M_{m}]$, $[agg_{1}(M^{0}_{1} )$, ...,$agg_{m}(M^{0}_{m})]$ $>$
\end{center}



We assume selection conditions which are conjunctions of atomic filters of the form $L$ = $value$, or in general $L~\in~\{v_1,\dots, v_k\}$. 

Selection conditions of this form can eventually be translated to their \textit{equivalent} selection conditions at the detailed level, via the conjunction of the detailed equivalents of the atoms of $\phi$. Specifically, assuming an atom $L~\in~\{v_1,\dots, v_k\}$, then $L^{0}$ $\in$ $\{desc_L^{L^0}(v_{1}) \cup ... \cup desc_L^{L^0}(v_{k})\}$, eventually producing an expression $L^{0}$ $\in$ $\{v'_{1} ,..., v'_{k'}\}$ is its detailed equivalent, called \textit{detailed proxy}. The reason for deriving $\phi^{0}$ is that $\phi^{0}$, as the conjunction of the respective atomic filters at the most detailed level, is directly applicable over $C^{0}$ and produces exactly the same subset of the multidimensional space as $\phi$, albeit at a most detailed level of granularity. For example, assume $Year~\in~\{2018,2019\}$, its detailed proxy is $Day~\in~\{2018/01/01, \ldots, 2019/12/31\}$. For a dimension $D$ that is not being explicitly filtered by any atom, one can equivalently assume a filter of the form $D.ALL~=~all$.


We also assume aggregation functions $agg_{i}$ include aggregate functions like $\{sum,max, min, count, ...\}$ with the respective well-known semantics.


\textbf{Cube Query Semantics}. The semantics of the query are:
(i) apply $\phi^0$, the detailed equivalent of the selection condition over $C^{0}$ and produce a subset 
of the detailed cube, say $q^{0}$, known as the
detailed area of the query;
(ii) map each dimension member to its ancestor value at the level specified by the grouping levels and group the tuples with the same coordinates in the same same-coordinate group;
(iii) for each same-coordinate group, apply the aggregate functions to the measures of its cells, thus producing a single value per aggregate measure.

A cube query is also a cube under the schema $[L_{1}, ..., L_{n},M_{1}, ..., M_{m}]$ and with the cells of the query result (denoted as $q.cells$) as its extension.

\textbf{Constraints}. In the context of this work, we imply several assumptions that mainly serve the purpose of clarity. First, we work with cube queries that involve a single measure. Second, we assume strictly two aggregator levels for the result, and third, for all atomic filters, we assume that if the dimension with a filter is also a grouper, the atomic filter is expressed in a level greater than or equal to the grouper.

\begin{itemize}
\item We work with cube queries that involve a single measure 
\item We assume strictly two aggregator levels for the result 
\item For all the atomic filters, we assume that, if the dimension with a filter is also a grouper, the atomic filter is expressed in a level greater or equal than the grouper. We will denote with $D.L^{\sigma}$ the level of the atomic filter ($D.L^{\sigma} = v$) and with $D.L^{\gamma}$ the respective grouper, and require that if such a pair exists in the query definition, then $D.L^{\gamma}$ $\preceq$ $ D.L^{\sigma}$
\end{itemize}
The rationale behind the above constraints is as follows:
\begin{itemize}
    \item To avoid overloading notation, we opt to work with a single measure for the purpose of clarity.It is straightforward to generalize our results to multiple measures by joining single-measured cube query results over their dimension levels.
    \item We also opt to work with exactly 2 grouper levels. This is a traditional, reasonable constraint that stems from the presentational constraint of having to present results in a 2D screen, either in a tabular form, or a chart.
    \item The constraint on the relationship of filters and groupers refer to the notion of perfect rollability of \cite{PV22}. For example, assuming we are grouping results by $city$, if a filter is applied over the $Geography$ dimension it should involve a specific $city$, or a $country$ (whose measures we group by city), or a $continent$. Had we filtered for a certain $neighborhood$, and then requested to group by city would (a) produce results, although (b) with at least unclear semantics, and, in any case (c) violating the perfect rollability requirement. Hence, we believe the constraint is reasonable and useful. \rem{silenced signatures}
\end{itemize}

Then, we will employ the following query expression:
\begin{center}
    $q$ = $<C^{0}$, $\phi$, $[L_{\alpha}, L_{\beta},M]$, $agg(M^{0})$ $>$
\end{center}

\section{The Analyze Operator}\label{sec:analyze}
In this section, we introduce the \ANALYZE operator in terms of semantics, syntax and execution strategies.

\subsection{Syntax \& Semantics}
Assume the setup already discussed in Section~\ref{sec:background}, with a detailed cube $C^0$ defined over a list of hierarchical dimensions. The syntax of the \ANALYZE operator is as follows: 

    \begin{tabular}{ll}
        \textsc{analyze} & \color{blue}{$agg(M^0)$ as $M$ \textsc{from} $C^0$}\\ 
        \textsc{for} & \color{blue}{$\phi$} \\
        \textsc{group by} & \color{blue}{$L_{\alpha}, L_{\beta}$}\\
        \textsc{as} &\color{blue}{$queryName$} \\
    \end{tabular}

\noindent with the variables participating in the definition having the obvious semantics already introduced in Section~\ref{sec:background}.

To complement this SQL-like definition, we will also employ an algebraic representation of the operator, as 
$analyze (C^{0}, \phi, [D_{\alpha}.L_{\alpha}, D_{\beta}.L_{\beta},M],agg(M^0) )$, or equivalently as:

\begin{center}
    $analyzeOp$ = $\langle$ $C^{0}$, $\phi$, [$D_{\alpha}.L_{\alpha}$,      $D_{\beta}.L_{\beta}$,$M$], $agg(M^0)$ $\rangle$
\end{center}

\textbf{Example}. To exemplify our discussions, we will employ a cubefied version of 
the well-known Foodmart schema \cite{Foodmart}. Foodmart includes a cube, \textsf{Sales} with 3 measures, contextualized by 5 dimensions, namely \textsf{Product, Date, Promotion, Customer}, and \textsf{Store}, each with its own levels.
All dimensions include the level \textsf{All} as the uppermost level of coarseness.
The query presented in Section~\ref{sec:intro} will be used as our reference example.

\textbf{Semantics}. The semantics of the operator involve the introduction of three separate groups of intermediate cube queries, which we call facilitator queries, with each group producing a set of distinct cube query results. Assuming a given \ANALYZE query $AQ$, the query set of $AQ.queries$ is a triplet of query sets $< Q^{org}, Q^{sib}, Q^{dd} >$ 
and the result of $AQ$, $AQ.result$ is a triplet with the results of the respective queries in the three query sets.

The 3 query sets that the operator produces are as follows:
\begin{itemize}
    \item
    (a) $Q^{org}$ is a singleton set, comprising exactly one query $q^{org}$, to which we will refer as \textit{the original query}, that computes the aggregate value for $M^0$ as specified by the variables of the operator. 
    \item 
    (b) $Q^{sib}$ is a set of sibling queries, whose goal is to compare the behavior of key values in the operator definition against the behavior of their peer values.
    \item
    (c) $Q^{dd}$ is a set of drill-down queries, whose goal is to provide more detailed data to the analyst for the values produced by the original query.
\end{itemize}
Figure~\ref{fig:refFig} demonstrates the queries of the reference example, along with the auxiliary queries that the subsequent MQO algorithms will produce. In the sequel, we present the semantics and rationale of each of these auxiliary, facilitator queries.

\subsubsection{Sibling queries}

The key word concerning siblings is \textit{context}:
Sibling queries aim to compare (a) the different slices of the data space that pertain to the original query against (b) their peers, in order to put them in context. By comparing data slices to their peer, the data slices are effectively assessed by the analyst and, thus, contextualized.

\textbf{Original idea}. Intuitively, given an original query, a sibling query differs from it as follows: instead
of an atomic selection formula $L_{i}$ =$v_{i}$ of the original query, the sibling query
contains a formula of the form $L_{i}$ $\in$
$children(parent(v_{i}))$. 

\hrulefill

Formally, assume an original query $q^{org}$

\begin{center}
$q^{org}$ = $\langle$ $C^{0}$, $\phi_{\alpha}$ $\wedge$ $\phi_{\beta}$ $\wedge$ $\phi_{\Box}$, [$D_{\alpha}.L_{\alpha}^{\gamma}$, $D_{\beta}.L_{\beta}^{\gamma}$,$M$],
$agg(M^0)$ $\rangle$
\end{center}

\noindent with the following constraints:
\begin{itemize}
    \item $\phi_{\alpha}$ an optional selection atom of the form: $D_{\alpha}.L_{\alpha}^{\sigma}$ = $v_{\alpha}$, $L_{\alpha}^{\gamma}$ $\preceq$ $L_{\alpha}^{\sigma}$
    \item $\phi_{\beta}$ an optional selection atom of the form: $D_{\beta}.L_{\beta}^{\sigma}$ = $v_{\beta}$, $L_{\beta}^{\gamma}$ $\preceq$ $L_{\beta}^{\sigma}$
    \item $\phi_{\Box}$ an optional conjunction of atoms for dimensions other than the groupers $D_{\alpha}$, $D_{\beta}$, with all its atoms $\phi_{i}$ being of the form, $\phi_{i}$: $L_{i}$ = $v_{i}$
\end{itemize}
Observe that for each grouper dimension we discriminate between the grouper level (e.g., $L_{\beta}^{\gamma}$) and the filtering level, e.g., $L_{\beta}^{\sigma}$. In our example, we group by $L_{\beta}^{\gamma}$: $state$, whereas our $L_{\beta}^{\sigma}$ level is $country$ ($country~=~'USA'$), at a higher level than the grouper.

In its simplest possible form, a sibling query of $q^{org}$ with respect to dimension say $D_{\alpha}$, would be

\begin{center}
$q^{\color{blue}{s^\alpha}}$ = $\langle$ $C^{0}$, {\color{blue}{$\phi_{\alpha}^{\star}$}} $\wedge$ $\phi_{\beta}$ $\wedge$ $\phi_{\Box}$, [$D_{\alpha}.L_{\alpha}^{\gamma}$, $D_{\beta}.L_{\beta}^{\gamma}$,$M$],
$agg(M^0)$ $\rangle$ 
\end{center}

\noindent with $\phi_{\alpha}^{\star}$ an atom of the form: $D_{\alpha}.L_{\alpha}^{\sigma}$ = {\color{blue}{$v_{\alpha}^{\star}$}}, under the constraint that $anc_{L_\alpha^\sigma}^{L_{\alpha+1}^\sigma}(v_\alpha)$ = $anc_{L_\alpha^\sigma}^{L_{\alpha+1}^\sigma}(v_\alpha^\star)$.

\hrulefill

In other words, we replace the selection value of exactly one atom with a sibling value (expressed by the constraint that their mother-value, one level higher, is the same). In our example, if we replaced the constraint $quarter~=~1997-Q3$ with $year~=~1997-Q4$, retaining the rest of the query untouched, this creates a sibling query for the Date dimension. See \cite{DBLP:journals/is/GkesoulisVM15} for a detailed discussion of the general case.

\textbf{Definition of a convenient variant for the sibling computation}. A clear problem here is that if a value used in a selection atom has too many sibling values, this produces a large number of queries, even if we create siblings only for just this dimension. The problem generalizes if we take all selection atoms into consideration. To avoid this complexity we simplify sibling generation by (i) considering siblings only for the atoms $\phi_{\alpha}$ and $\phi_{\beta}$, and, (ii) by merging all sibling slices into one, by slightly adapting the aggregation level.

\hrulefill

Given the original query $q^{org}$

\begin{center}
$q^{org}$ = $\langle$ $C^{0}$, $\phi_{\alpha}$ $\wedge$ $\phi_{\beta}$ $\wedge$ $\phi_{\Box}$, [$D_{\alpha}.L_{\alpha}^{\gamma}$, $D_{\beta}.L_{\beta}^{\gamma}$,$M$],
$agg(M^0)$ $\rangle$ 
\end{center}

we generate two sibling queries, each for one of the groupers as:

\begin{center}
$q^{s^\alpha}$ = $\langle$ $C^{0}$, {\color{blue}{$\phi_{\alpha}^{\star}$}} $\wedge$ $\phi_{b}$ $\wedge$ $\phi_{\Box}$, [{\color{blue}{$D_{\alpha}.L_{\alpha}^{\sigma}$}}, $D_{\beta}.L_{\beta}^{\gamma}$,$M$],
$agg(M^0)$ $\rangle$, 
{\color{blue}{$\phi_{\alpha}^{\star}$: $L_{\alpha+1}^{\sigma}$ = $anc_{L_\alpha^\sigma}^{L_{\alpha+1}^\sigma}(v_\alpha)$}}
\end{center}

\begin{center}
$q^{s^\beta}$ = $\langle$ $C^{0}$, $\phi_{\alpha}$ $\wedge$ {\color{blue}{$\phi_{\beta}^{\star}$}} $\wedge$ $\phi_{\Box}$, [$D_{\alpha}.L_{\alpha}^{\gamma}$, {\color{blue}{$D_{\beta}.L_{\beta}^{\sigma}$}} ,$M$],
$agg(M^0)$ $\rangle$, 
{\color{blue}{$\phi_{\beta}^{\star}$: $L_{\beta+1}^{\sigma}$ = $anc_{L_\beta^\sigma}^{L_{\beta+1}^\sigma}(v_\beta)$}}
\end{center}

\hrulefill
Intuitively, we perform the following two alterations to the original query:
\begin{itemize}
    \item Instead of asking for the value of the filtering level of the dimension under moderation, we ask for its mother value. So, instead of a $state~=~CA$ filter, we introduce a filter $mama(state)~=~mama(CA)$, i.e., $country~=~USA$.
    \item We aggregate by the former filtering level. In this example, now that we have selected as the entire $USA$ as the subspace to work with, we group by the former selector level, $state$, thus producing states as the grouper values of the result and effectively comparing all the sibling states in $USA$ to one another (including our reference one, $CA$).
\end{itemize}

In our example:
\begin{center}
$q^{org}$ = $\langle$ $Sales$, $Date.Quarter~=~1997Q3$ $\wedge$ $Customer.State~=~'CA'$ $\wedge$ $Promo.media~=~'Daily~Paper'$, [$Date.Month$, $Customer.Region$, $MORG$],
$sum(StoreSales)$ $\rangle$ 
\end{center}

\begin{center}
$q^{s^{\color{blue}{Date}}}$ = $\langle$ $Sales$, {\color{blue}{$Date.Year~=~'1997'$}} $Customer.State~=~'CA'$ $\wedge$ $Promo.media~=~'Daily~Paper'$, 
[{\color{blue}{$Date.Quarter$}}, $Customer.Region$, $MSDATE$],
$sum(StoreSales)$ $\rangle$ 
\end{center}

\begin{center}
$q^{s^{\color{blue}{Cust}}}$ = $\langle$ $Sales$, $Date.Quarter~=~1997Q3$ $\wedge$  {\color{blue}{$Customer.Country~=~'USA'$}} $\wedge$ $Promo.media~=~'Daily~Paper'$, [$Date.Month$, {\color{blue}{$Customer.State$}},$MSCUST$],
$sum(StoreSales)$ $\rangle$ 
\end{center}

\subsubsection{Drill-Down queries}

Apart from contextualizing the result of the original query against its peers, we can also provide further details for the displayed data, by drilling into the dimension hierarchies of the produced results. Ideally, this requires drilling into each of the cells of the results of the original query -- see \cite{DBLP:journals/is/GkesoulisVM15} on how this was traditionally done. However, this produces a significantly large number of queries to execute. We adopt a simple solution to this problem by drilling into each of the aggregator levels, by going down one level in their hierarchy. Thus, we produce two drill down queries, each drilling a level down into the detail of one of the grouper dimensions.

\hrulefill

Given the original query $q^{org}$

\begin{center}
$q^{org}$ = $\langle$ $C^{0}$, $\phi$, [$D_{\alpha}.L_{\alpha}^{\gamma}$, $D_{\beta}.L_{\beta}^{\gamma}$,$M$],
$agg(M^0)$ $\rangle$ 
\end{center}

we generate two drill-down queries, each for one of the groupers as:

\begin{center}
$q^{dd^{\color{blue}{\alpha}}}$ = $\langle$ $C^{0}$, $\phi$, 
[{\color{blue}{$D_{\alpha}.L_{\alpha-1}^{\gamma}$}}, 
$D_{\beta}.L_{\beta}^{\gamma}$,$M^{dd^{\color{blue}{\alpha}}}$],
$agg(M^0)$ $\rangle$ 
\end{center}

\begin{center}
$q^{dd^{\color{blue}{\beta}}}$ = $\langle$ $C^{0}$, $\phi$, 
[$D_{\alpha}.L_{\alpha}^{\gamma}$, 
{\color{blue}{$D_{\beta}.L_{\beta-1}^{\gamma}$}},$M^{dd^{\color{blue}{\beta}}}$],
$agg(M^0)$ $\rangle$ 
\end{center}

\hrulefill \\
In our example:
\begin{center}
$q^{org}$ = $\langle$ $Sales$, $Date.Quarter~=~1997Q3$ $\wedge$ $Customer.State~=~'CA'$ $\wedge$ $Promo.media~=~'Daily~Paper'$, 
[$Date.Month$, $Customer.Region$, $MORG$],
$sum(StoreSales)$ $\rangle$ 
\end{center}

\begin{center}
$q^{s^{\color{blue}{Date}}}$ = $\langle$ $Sales$, {\color{blue}{$Date.Year~=~'1997'$}} $Customer.State~=~'CA'$ $\wedge$ $Promo.media~=~'Daily~Paper'$, 
[{\color{blue}{$Date.Quarter$}}, $Customer.Region$, $MSDATE$],
$sum(StoreSales)$ $\rangle$ 
\end{center}

\begin{center}
$q^{s^{\color{blue}{Cust}}}$ = $\langle$ $Sales$, $Date.Quarter~=~1997Q3$ $\wedge$  {\color{blue}{$Customer.Country~=~'USA'$}} $\wedge$ $Promo.media~=~'Daily~Paper'$, [$Date.Month$, {\color{blue}{$Customer.State$}},$MSCUST$],
$sum(StoreSales)$ $\rangle$ 
\end{center}

\begin{center}
$q^{dd^{\color{blue}{Date}}}$ = $\langle$ $Sales$, $Date.Quarter~=~1997Q3$ $\wedge$ $Customer.State~=~'CA'$ $\wedge$ $Promo.media~=~'Daily~Paper'$, 
[{\color{blue}{$Date.Day$}},
  $Customer.Region$, $MDDDate$],
$sum(StoreSales)$ $\rangle$ 
\end{center}

\begin{center}
$q^{dd^{\color{blue}{Cust}}}$ = $\langle$ $Sales$, $Date.Quarter~=~1997Q3$ $\wedge$ $Customer.State~=~'CA'$ $\wedge$ $Promo.media~=~'Daily~Paper'$, 
[$Date.Month$,
  {\color{blue}{$Customer.CustomerId$}}, $MDDCust$],
$sum(StoreSales)$ $\rangle$ 
\end{center}


\section{Multiple Query Optimization for Analyze Queries}\label{sec:MQO}
So far, we have seen that the \ANALYZE operator is actually a composition of 5 facilitator queries to the underlying database. In this section, we introduce a method that replaces the need of issuing five queries to the database with a merging of (a subset of) them, along with algorithms that exploit this merging by issuing a single query to (a) collect all the data that the 5 queries require, at an appropriate level of aggregation, and (b) post-process them, in memory, to produce the results of the 5 queries correctly.
This is (a) costly, as the cost of multiple queries adds up the query overheads, and, (b) avoidable, as we will show in the sequel.

The method aims to gain in speed as (a) the query overheads are avoided, (b) multiple requests for the same data are reduced to a single fetch from the database (even, if intra-DBMS caching would have minimized this cost to a certain extent in the naive execution) and, (c) the fact that we are post-processing cells close to what would have been query results means that the amount of information we are post-processing is small -- hence, we can do it efficiently in main memory.

\newcommand{\tuple}[1]{\ensuremath{\left \langle #1 \right \rangle }
}

\subsection{Theoretical Background}
We base our theoretical construction on the Cube Usability Theorem \cite{PV22, DBLP:conf/dolap/Vassiliadis23}, which states that it is possible to derive the result of a cube query by filtering already accessed data by another cube query and re-aggregate them, if certain conditions are held.

For completeness, we repeat here the formulation of the theorem, and redirect the reader to the respective articles for the details of the proof and the supporting concepts.

\hrulefill

\begin{theorem}[Cube Usability (\cite{DBLP:conf/dolap/Vassiliadis23,  PV22})]\label{theor:cubeUsability}
Assume the following two queries: 

\[q^n = \tuple{ \mathbf{DS}^{0},\ \phi^n,\ [L_1^n,\ldots,L_n^n, M_1,\ldots,M_m],\ [agg_1(M_1^0),\ldots,agg_m(M_{m}^0)]\ }\]

and

\[q^b = \tuple{\mathbf{DS}^{0},\ \phi^b,\ [L_1^b,\ldots,L_n^b, M_1,\ldots,M_m],\ [agg_1(M_1^0),\ldots,agg_m(M_{m}^0)] }\]

The query $q^b$ is \emph{usable for computing}, or simply, \emph{usable for} query $q^n$, if the following conditions hold:
\begin{enumerate}[label=\roman*.]
    \item both queries have exactly the same underlying detailed cube $\mathbf{DS}$, 
    \item both queries have exactly the same dimensions in their schema and the same aggregate measures $agg_i(M_i^0)$, $i$ $\in$ 1 .. $m$ (implying a 1:1 mapping between their measures), with all $agg_i$ belonging to a set of known distributive functions. 
    \item both queries have exactly one atom per dimension in their selection condition, of the form $D.L~\in~\{v_1, \ldots, v_k\}$ and selection conditions are conjunctions of such atoms,
    \item both queries have schemata that are perfectly rollable with respect to their selection conditions, which means that grouper levels are perfectly rollable with respect to the respective atom of their dimension, 
    \item all schema levels of query $q^n$ are ancestors (i.e, equal or higher) of the respective levels of $q^b$, i.e., $D.L^{b}$ $\preceq$ $D.L^{n}$, for all dimensions $D$, and,
    \item for every atom of $\phi^n$, say $\alpha^n$, if (i) we obtain $\alpha^{n@L^{b}}$ (i.e., its detailed equivalent at the respective schema level of the previous query $q^b$, $L^b$) to which we simply refer as $\alpha^{n@b}$, and, (ii) compute its signature $\alpha^{{n@b}^{+}}$, then (iii) this signature is a subset of the grouper domain of the respective dimension at $q^b$ (which involves the respective atom $\alpha^b$ and the grouper level $L^b$), i.e., $\alpha^{{n@b}^{+}}$ $\subseteq$ $gdom(\alpha^b, L^{b})$.
\end{enumerate}
\end{theorem}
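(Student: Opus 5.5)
To establish usability I would exhibit an explicit rewriting of $q^n$ over the cells of $q^b$ and show that it returns exactly $q^n.cells$. The rewriting has three steps. First, filter the cells of $q^b$ with $\phi^{n@b}$, the conjunction of the atoms $\alpha^{n@b}$, each obtained by translating the atom $\alpha^n$ of $\phi^n$ to its equivalent at level $L^b$ of its dimension. Second, map each remaining coordinate to its ancestor at $L^n$ with $anc_{L^b}^{L^n}$; condition (v) guarantees this ancestor exists. Third, re-aggregate each same-coordinate group with the combiner of $agg_i$ (sum for sum and count, min for min, and so on). The claim to prove is that, for every cell, this computation equals the semantics of $q^n$ evaluated directly over $\mathbf{DS}^0$: apply $\phi^{n,0}$, roll up to the $L^n$ levels, and aggregate.

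The argument splits into two lemmas. The first concerns filters. By condition (iii) each query has one atom per dimension, and by condition (iv) each grouper level is perfectly rollable with respect to its atom. Together these imply that a detailed fact $f$ satisfies $\phi^n$ exactly when its $L^b$-ancestor coordinates satisfy $\phi^{n@b}$: each atom is decided by a single coordinate, and rolling that coordinate up to a level at or below the filtering level preserves membership in the descendant set. Condition (vi) then ensures that every value admitted by $\phi^{n@b}$ lies in $gdom(\alpha^b,L^b)$, i.e. among the groups that survive $\phi^b$. So the facts selected by $\phi^n$ are exactly the union of the detailed areas of the $q^b$ cells that pass $\phi^{n@b}$. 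Moreover, each such $q^b$ cell contains only facts of $q^n$, because its coordinate already decides every atom of $\phi^n$. This gives $q^{n,0}=\bigcup\{\text{area}(c) : c\in q^b.cells,\ c\models\phi^{n@b}\}$, and the union is disjoint.

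The second lemma concerns grouping. The $q^b$ cells partition the facts by their $L^b$ coordinates, and by condition (v) the $L^n$ coordinates are a function of the $L^b$ coordinates through the ancestor functions, which compose along the linear chain of levels. Hence each $q^n$ group is a disjoint union of $q^b$ groups. Because the $agg_i$ are distributive (condition (ii)), and conditions (i) and (ii) ensure the same base cube and a 1:1 mapping between measures, aggregating the partial aggregates with the combiner yields $agg_i$ over the union.

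The main obstacle is the first lemma, specifically the interplay between condition (vi) and the precise definitions of the translated atom $\alpha^{n@b}$, its signature $\alpha^{n@b+}$, and $gdom$. The risk is that a $q^b$ cell could straddle the boundary of an atom of $\phi^n$, containing some facts that satisfy it and some that do not. I would rule this out first by using perfect rollability of $q^n$: its filter level is at or above $L^n$, which is at or above $L^b$, so membership is constant on descendants of an $L^b$ value. Only after that would I invoke (vi) to rule out values that $q^b$ filtered away.
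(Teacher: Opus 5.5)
The paper does not prove this theorem itself. It restates it from \cite{PV22, DBLP:conf/dolap/Vassiliadis23} and refers the reader there, so there is no in-paper argument to compare against. Your plan is to select the cells of $q^b$ with the atoms of $\phi^n$ translated to the $L^b$ levels, roll the survivors up with $anc_{L^b}^{L^n}$, and re-aggregate with the distributive combiner. This is the natural argument, and it matches how the paper applies the theorem. In the Multi-Query Usability proof, $\phi_\alpha$ is translated to $L^{\gamma}_{\alpha-1}$ via $desc$, checked against the grouper domain of $q^A$, and the surviving cells are regrouped; \textsf{updateMap} in MaxMQO plays the role of your combiner. The split into a filter lemma and a grouping lemma is sound, and your combiners (in particular sum for count) are the right ones.

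One step needs to be made explicit. The inclusion $q^{n,0}\subseteq\bigcup\{\mathrm{area}(c)\}$ does not follow from (vi) alone. Condition (vi) only says that the $L^b$-values admitted by $\alpha^{n@b}$ occur in $gdom(\alpha^b,L^b)$. It does not say that \emph{every} fact below such a value passes $\alpha^b$. That is what perfect rollability of $q^b$ gives you: $L^b\preceq L^{\sigma,b}$, so membership in $\alpha^b$ is constant on the descendants of each $L^b$-value.

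Without it, the remaining conditions do not suffice. Suppose $q^b$ filters a single neighborhood and groups by city, while $q^n$ filters and groups by that neighborhood's city. Then (v) and (vi) hold, but the $q^b$ cell holds only a partial slice of the city.

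You cite (iv) for both queries, but your last paragraph uses it only for $q^n$, to stop a $q^b$ cell from straddling an atom of $\phi^n$. You then invoke (vi) merely to exclude values that $q^b$ filtered away. Add the symmetric use of (iv) for $q^b$, which excludes values that $q^b$ filtered only \emph{partially}. With both uses in place, the surviving $q^b$ cells carry exactly the measures of $q^n$ evaluated at the $L^b$ levels, and your grouping lemma completes the proof.
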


\hrulefill \\

In our case, we introduce the auxiliary, \textit{all-encompassing query} $q^A$ for the \ANALYZE operator. The all-encompassing query $q^{A}$ is characterized by: (0) the same basic cube and measure aggregation, (1) groupers involving the highest levels of aggregation that are low enough to answer any of the internal queries of the \ANALYZE operator, and, (2) the broadest possible selection condition to encompass all the detailed tuples needed to answer any internal query. In the rest of this section we will first show that the all-encompassing query $q^{A}$ can answer all the internal queries, and, second, we will accompany the feasibility result with two algorithms that actually compute the answer.

\hrulefill

\begin{theorem}[Multi-Query Usability] 
Assume the query 

$analyze$ = $\langle$ $C^{0}$, $\phi_{\alpha}$ $\wedge$ $\phi_{\beta}$ $\wedge$ $\phi_{\Box}$, [$D_{\alpha}.L_{\alpha}^{\gamma}$, $D_{\beta}.L_{\beta}^{\gamma}$,$M$], $agg(M^0)$ $\rangle$

\noindent producing five internal queries as prescribed in the definition of the \ANALYZE operator.
The following \textit{all-encompassing query} $q^{A}$ can answer all the internal queries of the \ANALYZE operator.

\begin{center}
\noindent $q^{A}$=$\langle$$C^{0}$, $\phi_{\alpha}^{\star}\wedge\phi_{\beta}^{\star}\wedge\phi_{\Box}$, 
    [$L_{\alpha-1}^{\gamma}, L_{\beta-1}^{\gamma}, L_{\alpha}^{\gamma}, L_{\beta}^{\gamma}, L_{\alpha}^{\sigma}, L_{\beta}^{\sigma}$,
    $M^A$], $agg(M^0)$$\rangle$,

\noindent with {$\phi_{\alpha}^{\star}$: $L_{\alpha+1}^{\sigma}$ = $anc_{L_\alpha^\sigma}^{L_{\alpha+1}^\sigma}(v_\alpha)$} and
{$\phi_{\beta}^{\star}$: $L_{\beta+1}^{\sigma}$ = $anc_{L_\beta^\sigma}^{L_{\beta+1}^\sigma}(v_\beta)$}
\end{center}
\end{theorem}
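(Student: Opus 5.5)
The plan is to apply the Cube Usability Theorem (Theorem~\ref{theor:cubeUsability}) five times, with $q^b := q^A$ and $q^n$ ranging over $q^{org}, q^{s^\alpha}, q^{s^\beta}, q^{dd^\alpha}, q^{dd^\beta}$.

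A preliminary step is needed because $q^A$ carries three levels per grouper dimension, which the theorem does not literally allow. I will observe that within one dimension $D_\alpha$ the levels $L_{\alpha-1}^{\gamma}\preceq L_{\alpha}^{\gamma}\preceq L_{\alpha}^{\sigma}$ form a chain. Under the total-order assumption, the coarser coordinates are functionally determined by the finest one through $anc$. Hence grouping by all three yields exactly the same groups, with the same aggregate values, as grouping by $L_{\alpha-1}^{\gamma}$ alone; the extra columns are merely annotations. I will therefore work with the one-level-per-dimension reduct $\hat q^A=\langle C^0,\phi_{\alpha}^{\star}\wedge\phi_{\beta}^{\star}\wedge\phi_{\Box},[L_{\alpha-1}^{\gamma},L_{\beta-1}^{\gamma},\ldots],agg(M^0)\rangle$. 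Here all non-grouper dimensions are grouped at the level of their $\phi_{\Box}$ atom, or at $ALL$. This reduct contains the same information as $q^A$ and fits the theorem's format.

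Then I check conditions (i)--(vi) for each $q^n$.
\begin{itemize}
\item Conditions (i) and (ii) are immediate: there is a single cube and a single measure, and we assume $agg$ is distributive.
\item For (iii), I pad unfiltered dimensions with $D.ALL=all$, so every query has exactly one atom per dimension.
\item For (iv), perfect rollability of $\hat q^A$ holds because $L_{\alpha-1}^{\gamma}\preceq L_{\alpha}^{\gamma}\preceq L_{\alpha}^{\sigma}\preceq L_{\alpha+1}^{\sigma}$. For the sibling queries, grouping at $L_{\alpha}^{\sigma}$ under a filter at $L_{\alpha+1}^{\sigma}$ is rollable by construction.
\item For (v), every grouper of every facilitator query is $L_{\alpha}^{\gamma}$, $L_{\alpha-1}^{\gamma}$ or $L_{\alpha}^{\sigma}$ (and likewise for $\beta$). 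Each of these is $\succeq L_{\alpha-1}^{\gamma}$, the grouper of $\hat q^A$.
\end{itemize}

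The main work is condition (vi), a containment of filters. For each facilitator query and each dimension, I translate its atom to the level $L_{\alpha-1}^{\gamma}$ and show that its signature lies within $gdom(\phi_{\alpha}^{\star},L_{\alpha-1}^{\gamma})$.
\begin{itemize}
\item Atoms $D_\alpha.L_{\alpha}^{\sigma}=v_\alpha$ (in $q^{org}$, $q^{s^\beta}$ and both drill-downs): the descendants of $v_\alpha$ at $L_{\alpha-1}^{\gamma}$ are among the descendants of $anc(v_\alpha)$ at that level, because $v_\alpha$ is a descendant of its own parent. This uses transitivity of $anc$ along the chain.
\item Atoms $\phi_{\alpha}^{\star}$ (in $q^{s^\alpha}$): they coincide with the atom of $\hat q^A$, so containment is trivial.
\item Atoms of $\phi_{\Box}$: they are identical across all queries.
\item Absent atoms: if $\phi_\alpha$ is optional and missing, I take $v_\alpha=all$ and treat $\phi_{\alpha}^{\star}$ as $ALL=all$ as well.
\end{itemize}

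I expect the main obstacle to be the notational mismatch rather than any real mathematics. First, the theorem's requirement of equal schema dimensionality must be reconciled with $q^A$'s six grouper columns; the functional-dependency reduction above is my answer to this. Second, the signature and $gdom$ notions are only cited from \cite{PV22}. I will use them in the concrete sense of the sets of grouper-level members reachable from the atom's values via $desc$. Once usability is established, each facilitator result is obtained by filtering the cells of $q^A$ with the facilitator's selection, projecting onto its grouper columns (already present in $q^A$), and re-aggregating with $agg$. This yields the claim.
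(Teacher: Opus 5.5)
Your proposal is correct and follows essentially the same route as the paper: instantiate the Cube Usability Theorem with $q^b=q^A$ for each of the five facilitator queries, and discharge (i)--(v) by construction. You then establish (vi) by observing that $desc_{L_{\alpha}^{\sigma}}^{L_{\alpha-1}^{\gamma}}(v_\alpha)\subseteq desc_{L_{\alpha+1}^{\sigma}}^{L_{\alpha-1}^{\gamma}}(anc_{L_\alpha^\sigma}^{L_{\alpha+1}^\sigma}(v_\alpha))$, the grouper domain of $q^A$, with nothing to check for $\phi_{\alpha}^{\star}$ atoms; your explicit functional-dependency reduction of $q^A$'s six grouper columns to $L_{\alpha-1}^{\gamma}, L_{\beta-1}^{\gamma}$ makes precise a step the paper takes only implicitly, by listing just these two levels for $q^A$ in Table~\ref{tab:mqoProof1}.
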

\hrulefill
\begin{proof}
Some of the requirements of the Cube Usability Theorem are provided by construction of the operator and the all-encompassing query. We refer the reader to Table~\ref{tab:mqoProof1} as a clear demonstration of why the items in the following list hold. Specifically:
\begin{enumerate}[label=\roman*.]
    \item All queries are on the same detailed cube.
    \item All queries have the same dimensions, the same measure and distributive aggregate function.
    \item All queries have one atom per dimension of the form $D.L = v$ ($true$ or equivalently, $D.ALL = all$ atoms are implied). In fact, the queries have non-trivial atoms for exactly the same dimensions, and even more, they can possibly differ only in the atoms of the grouper dimensions.
    \item All queries have selector levels higher than the grouper levels by construction. We have already assumed that for all dimensions $D$ having $D.L^{\gamma}$ as a grouper level and $D.L^{\phi}$ as the level involved in the selection condition's atom for $D$, $D.L^{\gamma}$ $\preceq$ $D.L^{\phi}$, i.e., the selection condition is defined at a higher level than the grouping.    
    Therefore, $D.L^{\gamma}_{i-1}$ $\preceq$ $D.L^{\gamma}_i$ $\preceq$ $ D.L^{\sigma}_i$
    by construction.
    \item All queries have their grouper levels higher or equal to the ones of the all-encompassing one (again, remember that $D.L^{\gamma}_i$ $\preceq$ $ D.L^{\sigma}_i$).
    \item Apart from the atoms for the grouper dimensions, all selection conditions in all queries are identical  (i.e., they all share the same $\phi_{\Box}$). The grouper dimensions' atoms also fulfill the constraint (vi) as we will show right away.
\end{enumerate}

Therefore, all we need to show is the sixth constraint (iv), stating that once the all-encompassing query has been computed, it is possible to apply a selection condition to its result that will yield the same result as the original query. We start by showing that, given the result of the all-encompassing query which has been computed having applied $\phi_{\alpha}^{\star}$, it is attainable to produce the exact same result with having applied $\phi_{\alpha}$ (the proof is identical for $\beta$).


\begin{table*}[t]
\begin{tabular}{lcccccccc}
                  & $C^0$ 
                  & \multicolumn{3}{c}{$\phi$: $\phi_{\alpha}$ $\wedge$ $\phi_{\beta}$ $\wedge$ $\phi_{\Box}$} 
                  & \multicolumn{3}{c}{Schema }                                    
                  & $agg$    \\
                  
                  & $[i]$ 
                  & \multicolumn{3}{c}{($[iii][iv][vi]$)} 
                  & \multicolumn{3}{c}{($[ii][iv][v]$)}                                    
                  & $[ii]$    \\
\hline
$q^{A}$           
    & $C^{0}$ 
    & $\phi_{\alpha}^{\star}$
    & $\phi_{\beta}^{\star}$
    & $\phi_{\Box}$
    & $D_{\alpha}.L_{\alpha-1}^{\gamma}$ 
    & $D_{\beta}.L_{\beta-1}^{\gamma}$ 
    & $M^{A}$   & $agg(M^0)$ \\
\hline
$q^{org}$         
    & $C^{0}$ 
    & $\phi_{\alpha}$                     
    & $\phi_{\beta}$                    
    & $\phi_{\Box}$                
    & $D_{\alpha}.L_{\alpha}^{\gamma}$   
    & $D_{\beta}.L_{\beta}^{\gamma}$   
    & $M^{org}$ & $agg(M^0)$ \\

$q^{dd^{\alpha}}$ 
    & $C^{0}$ 
    & $\phi_{\alpha}$                     
    & $\phi_{\beta}$                    
    & $\phi_{\Box}$                
    & $D_{\alpha}.L_{\alpha-1}^{\gamma}$ 
    & $D_{\beta}.L_{\beta}^{\gamma}$   
    & $M^{dda}$ & $agg(M^0)$ \\

$q^{dd^{\beta}}$  
    & $C^{0}$ 
    & $\phi_{\alpha}$                     
    & $\phi_{\beta}$                    
    & $\phi_{\Box}$                
    & $D_{\alpha}.L_{\alpha}^{\gamma}$   
    & $D_{\beta}.L_{\beta-1}^{\gamma}$ 
    & $M^{ddb}$ & $agg(M^0)$\\

$q^{s^{\alpha}}$  
    & $C^{0}$ 
    & $\phi_{\alpha}^{\star}$                 
    & $\phi_{\beta}$             
    & $\phi_{\Box}$                
    & $D_{\alpha}.L_{\alpha}^{\sigma}$  
    & $D_{\beta}.L_{\beta}^{\gamma}$   
    & $M^{ssa}$ & $agg(M^0)$ \\

$q^{s^{\beta}}$  
    & $C^{0}$ 
    & $\phi_{\alpha}$                 
    & $\phi_{\beta}^{\star}$             
    & $\phi_{\Box}$                
    & $D_{\alpha}.L_{\alpha}^{\gamma}$  
    & $D_{\beta}.L_{\beta}^{\sigma}$   
    & $M^{ssb}$ & $agg(M^0)$
\end{tabular}
\caption{Table for the proof of Multi-Query Usability of the all-encompassing query. The numbers in the header correspond to the id of the respective requirement in the Cube Usability Theorem.}
\label{tab:mqoProof1}
\end{table*}

\begin{table*}[h!]
    \centering
    \begin{tabular}{lllll}

Query & $D_\alpha$: \textit{Date} & $D_\beta$: \textit{Customer} 
    & $Grouper_\alpha$ & $Grouper_\beta$ \\
    \hline

$q^{A}$ 
&{\color{blue}{$Year~=~'1997'$}}
&{\color{blue}{$Country~=~'USA'$}}
&[{\color{blue}{$Day$}}
&{\color{blue}{$Customer$}}]\\
\hline
    
$q^{org}$ & $Quarter~=~1997Q3$ & $State~=~'CA'$ & $[Month$ & $Region]$\\

$q^{dd^{\color{blue}{Date}}}$ 
&$Quarter~=~1997Q3$ 
&$State~=~'CA'$ 
&[{\color{blue}{$Day$}}
&$Region$] \\

$q^{dd^{\color{blue}{Cust}}}$ 
&$Quarter~=~1997Q3$ 
&$State~=~'CA'$ 
&[$Month$
&{\color{blue}{$Customer$}}]\\

$q^{s^{\color{blue}{Date}}}$ & {\color{blue}{$Year~=~'1997'$}} & $State~=~'CA'$ 
& [{\color{blue}{$Quarter$}}  &$Region$]\\

$q^{s^{\color{blue}{Cust}}}$ &$Quarter~=~1997Q3$ 
&{\color{blue}{$Country~=~'USA'$}}
&[$Month$ &{\color{blue}{$State$}}]\\

    \end{tabular}
    \caption{Reference example revisited for the sake of theoretical explanations}
    \label{tab:motExMQOTheory}
\end{table*}

Let us summarize the situation concerning the selection atom of dimension $D_{\alpha}$. 
\begin{itemize}
    \item The ``previous'' query, which is the all-encompassing one, $q^A$, has a grouping level $D_{\alpha}.L_{\alpha-1}^{\gamma}$ and a selection condition {$\phi_{\alpha}^{\star}$: $L_{\alpha+1}^{\sigma}$ = $anc_{L_\alpha^\sigma}^{L_{\alpha+1}^\sigma}(v_\alpha)$}. So practically, the produced grouper domain of $q^A$ values for $D_{\alpha}$  are:\\
    $v^g$: s.t., 
    $anc_{L_{\alpha-1}^\gamma}^{L_{\alpha+1}^\sigma}(v^g)$ = $anc_{L_\alpha^\sigma}^{L_{\alpha+1}^\sigma}(v_\alpha)$,\\
    i.e., the children of $v_\alpha$'s mother ($v_\alpha$'s siblings) at the level $D_{\alpha}.L_{\alpha-1}^{\gamma}$.
    
    \item Concerning $D_{\alpha}$, all queries have either an atom $\phi_{\alpha}$: $L_{\alpha}^{\sigma}$ = $v_\alpha$, or an atom $\phi_{\alpha}^{\star}$: $L_{\alpha+1}^{\sigma}$ = $anc_{L_\alpha^\sigma}^{L_{\alpha+1}^\sigma}(v_\alpha)$.

    \item For the ``new'' queries with an atom $\phi_{\alpha}^{\star}$, there is nothing to be done: the all-encompassing one query has restricted the domain to exactly the needed values. 
    

    \item The rest of the queries, must adapt the list of returned values from $q^A$ to the ones they actually want, which are the ones satisfying $\phi_\alpha$. If this is the case, then an appropriate selection condition has to be applied to the results of $q^A$ to select only the necessary values.\\
    The transformation of the $\phi_\alpha$ atom to the level of $q^A$ schema, $D_{\alpha}.L_{\alpha-1}^{\gamma}$ is:\\  $D_{\alpha}.L_{\alpha-1}^{\gamma}$ $\in$ $desc_{L_{\alpha}^{\sigma}}^{L_{\alpha-1}^{\gamma}}(v_\alpha)$ \\
    which produces the children of $v_\alpha$, and, which, in turn is a clear subset of the grouper domain of $q^A$. So, if we apply this selection predicate to the results of $q^A$ we restrict the set of $D_{\alpha}$ values to exactly the ones we need.
\end{itemize}

The same process is applied to $\phi_\beta$ and $\phi_\beta^{\star}$.\\

Then, the (vi) item of the checklist is proved, and, therefore, the entire theorem holds.
\end{proof}

Getting back to our example, Table~\ref{tab:motExMQOTheory} is showing only the parts of relevance from the query definitions. Observe that, concerning the $Date$ dimension, the values produced by $q^A$ are the days (grouper level is $Day$) of 1997 (selection atom is $Year~=~1997$), i.e., $1997/01/01, 1997/01/02$, etc. Now, for all the queries who want the 3rd quarter of 1997, we need to reapply the filter for $Quarter~=~1997Q3$, \textit{but at the grouper level of $q^A$, i.e., Day}! Therefore, the filter applied to the results of $q^A$ will be $Day$ $\in$ $desc_{Quarter}^{Day}(1997Q3)$. The fact that we filter at a high level and we group at a lower level guarantees us perfect rollability \cite{DBLP:conf/dolap/Vassiliadis23, PV22}, which in turn allows us to reuse the results of the all encompassing query. Observe also that due to the fact that the groupers of $q^A$ are lower or equal to the ones of all the other queries, we can further group the values of $q^A$ to the higher levels of all the rest of the \ANALYZE internal queries.

\begin{algorithm}
\caption{Algorithm MaxMQO}\label{algo:AnalyzeMQO}
\SetAlgoLined
\KwIn{Original query $q^{org}$ = $\langle$ $C^{0}$, $\phi$, [$D_{\alpha}.L_{\alpha}^{\gamma}$, $D_{\beta}.L_{\beta}^{\gamma}$,$M$],
$agg(M^0)$ $\rangle$ , 
$phi$: $\phi_{\alpha}$ $\wedge$ $\phi_{\beta}$ $\wedge$ $\phi_{\Box}$, $phi_i:$ $D_{i}.L_{i}^{\sigma}$ = $v_{i}$, $i: \alpha~or~\beta$
}
\KwOut{Updated results for $q^{org}$, $q^{s^{A}}$, $q^{s^{B}}$, $q^{dd^{A}}$, $q^{dd^{B}}$
}

\Begin{
    Let $H^{org},H^{s^A},H^{s^B},H^{dd^A},H^{dd^B}$ be maps of the form $H: \langle~dom(L_1^{\gamma}),dom(L_2^{\gamma})~\rangle \rightarrow dom(M^0)$ \;
    Let $aggF^{\star} = adapter(aggF)$\;
    
\hspace*{\fill} 
    
    \tcc*[h]{Construct and execute the MQO query} 
    
    Let $\phi_{\alpha}^{\star}$: $L_{\alpha+1}^{\sigma}$ = $anc_{L_\alpha^\sigma}^{L_{\alpha+1}^\sigma}(v_\alpha)$ and
    $\phi_{\beta}^{\star}$: $L_{\beta+1}^{\sigma}$ = $anc_{L_\beta^\sigma}^{L_{\beta+1}^\sigma}(v_\beta)$
    \;
    Let $q^{A}$ = $\langle$ $C^{0}$, $\phi_{\alpha}^{\star}~\wedge~\phi_{\beta}^{\star}~\wedge~\phi_{\Box}$, 
    [$L_{\alpha-1}^{\gamma}, L_{\beta-1}^{\gamma}, L_{\alpha}^{\gamma}, L_{\beta}^{\gamma}, L_{\alpha}^{\sigma}, L_{\beta}^{\sigma}$,
    $M^A$], $agg(M^0)$ $\rangle$
    \;
    $q^{A}.cells$ = $q^{A}.execute()$\;

\hspace*{\fill} 
    
    \ForEach{tuple $t$ in the result set $q^{A}.cells$}{
        Assume $t$ = $\langle v_{\alpha-1}^{\gamma}, v_{\beta-1}^{\gamma}, v_{\alpha}^{\gamma}, v_{\beta}^{\gamma}, v_{\alpha}^{\sigma}, v_{\beta}^{\sigma}, m \rangle$\;
        
        \tcc*[h]{if t satisfies $\phi_\alpha \wedge \phi_\beta$: update org \& dd's}
        
        \If{$\sigma_{\phi_\alpha \wedge \phi_\beta}(t)$}{ 
            $updateMap(H^{org}, \langle \langle v_{\alpha}^{\gamma}, v_{\beta}^{\gamma} \rangle, m\rangle,agg)$\;
            $updateMap(H^{dd^{A}}, \langle \langle v_{\alpha-1}^{\gamma}, v_{\beta}^{\gamma} \rangle, m\rangle,agg)$\;
            $updateMap(H^{dd^{B}}, \langle \langle v_{\alpha}^{\gamma}, v_{\beta-1}^{\gamma} \rangle, m\rangle,agg)$\;            
        }
        \If{$\sigma_{\phi_\beta}(t)$}{
            $updateMap(H^{s^A}, \langle \langle v_{\alpha}^{\sigma}, v_{\beta}^{\gamma} \rangle, m\rangle,agg)$\;
        }
        \If{$\sigma_{\phi_\alpha}(t)$}{    
            $updateMap(H^{s^B}, \langle \langle v_{\alpha}^{\gamma}, v_{\beta}^{\sigma} \rangle, m\rangle,agg)$\;
        }
    }

    $q^{org}.cells = H^{org}$; $q^{dd^{A}}.cells = H^{dd^{A}}, q^{dd^{B}}.cells = H^{dd^{B}}$\;
    $q^{s^{A}}.cells = H^{s^{A}}; q^{s^{B}}.cells = H^{s^{B}}$ \;
    
    \Return $\langle q^{org}.cells, q^{s^{A}}.cells, q^{s^{B}}.cells, q^{dd^{A}}.cells, q^{dd^{B}}.cells \rangle$\;
}

\hspace*{\fill} 

\SetKwFunction{FupdateMap}{updateMap}
\SetKwProg{Fn}{Function}{:}{}
\Fn{\FupdateMap{$H:map$, $\langle \langle g_1,g_2\rangle, m \rangle$: $Tuple\langle Pair \langle grouper,grouper \rangle,measure \rangle$, $aggF^{\star}: agg.~Func.$}}{

    \If{$H.containsKey(\langle g_1,g_2\rangle)$}{ 
        $curValue$ = $H.get(\langle g_1,g_2\rangle)$\;
        $H.put(\langle g_1,g_2\rangle, aggF^{\star}(currValue,m))$\;
    }
    \Else{
    $H.put(\langle g_1,g_2\rangle, m)$\;
    }
}

\end{algorithm}

\subsection{The {Max Multi-Query Optimization} Algorithm}
The \textbf{MaxMQO Algorithm} (Algorithm \ref{algo:AnalyzeMQO}) describes how to compute all the facilitator queries of the ANALYZE operator with single access to the underlying database. The algorithm takes as input the definition of the original query, which is sufficient to determine the entire set of facilitator queries and to produce their results as output. The steps of the algorithm is described below.

\paragraph{Step 1.} First, the algorithm constructs five maps, one for each of the queries that determine the ANALYZE operator. Each of these maps will store the result of the respective query. The key of the map is the combination of grouper values (which are pretty much the coordinates of each result cell), and the value is the aggregate measure that pertains to these coordinates.
\paragraph{Step 2.} Second, the algorithm constructs and executes the all-encompassing auxiliary query $q^A$ that will serve as the basis to compute all the other query results.
\paragraph{Step 3.} Once the tuples of $q^A$ have been computed, we need to distribute them to the facilitator  queries of the operator. We visit each tuple at a time. All tuples will end up in the siblings, as the selection condition of $q^A$ is exactly that of the siblings. Recall that the selection condition of the siblings is broader than the one of the original query and the drill-downs (which is identical to the original one). However, some of the resulting tuples also pertain to the original and the drill-down queries, and so, they have to be pushed towards the respective maps. In the end, the hash-maps contain the results of the auxiliary queries.

\hrulefill

For example, assume that the user issues the following \textit{ANALYZE} query: \\

\noindent\textsf{ANALYZE sum(store\_sales) FROM sales}\\
\textsf{FOR \textcolor{blue}{state = 'CA'} AND \textcolor{blue}{quarter = '2025-Q4'} }\\
\textsf{GROUP BY state,quarter} 

This query is regarded as the original query $q^{org}$. Step 1 constructs five maps for each facilitator query result $H^{org}, H^{s^A}, H^{s^B}, H^{dd^A}, H^{dd^B}$. In step 2, the all-encompassing query $q^A$ is constructed by replacing the selection condition of $q^{org}$ with their parent levels and values. In addition, four more groupers are added in the current set, which contains the child and parent levels of the $q^{org}$ groupers. The all-encompassing query is: \\

\noindent\textsf{ANALYZE sum(store\_sales) FROM sales}\\ 
\textsf{FOR \textcolor{blue}{country = 'USA'} AND \textcolor{blue}{year = '2025'} }\\
\textsf{GROUP BY \textcolor{red}{city, month}, state, quarter, \textcolor{red}{country, year}}\\

Then, this all-encompassing query is translated to SQL and executed. The result contains the result tuples of all facilitator queries. In Step 3, each result tuple is visited and distributed to a result bucket $H^{org}, H^{s^A}, H^{s^B}, H^{dd^A}, H^{dd^B}$ that was created in Step 1 if it is part of the result of the respective facilitator query ($q^{org}, q^{s^A}, q^{s^B}, q^{dd^A}, q^{dd^B}$), with respect to the cube usability theorem \cite{PV22,DBLP:conf/dolap/Vassiliadis23}.

\hrulefill

Observe also how the tuples are integrated in any of the maps. The method \textit{updateMap()} implements this handling: (a) If the coordinates of the tuple do not already exist in the map, we have to insert the incoming tuple in the map, as the first of its group. (b) Otherwise, if the coordinates already exist, we need to update the aggregate measure. Bear in mind that we cannot directly apply the \textit{agg} function: whereas the aggregate function is directly applicable for the case of $min$, $max$ and $sum$, for the case of $count$, we simply have to increase the counter by one. To facilitate this, we introduce the \textit{adapter} aggregate function that takes care of this problem by adapting the actual combination of the current and the new value accordingly.

\subsection{Mid - Multi Query Optimization Algorithm for the Analyze Operator}
The Max Multi-Query Optimization Algorithm utilizes a
single query on the underlying database to retrieve the result of an
\emph{ANALYZE} query. However, the selection conditions of the all-encompassing
auxiliary query \emph{q\textsuperscript{A}} explore a vast super-set of
tuples that contain the query result for each of the five queries, plus
tuples that are not part of any query result, thus they are not aggregated. The spare tuples add performance latency, as they have to be processed by Max {MQO} to determine whether a tuple belongs to one of the five query result maps.

To reduce the vastness of the explored data space, we can reduce the
number of spare tuples returned by \emph{q\textsuperscript{A}}. Observe
that the selection conditions of the original query and the drill-down
queries are the same. We take advantage of that property to construct a
single query \emph{q\textsuperscript{MID}} that combines the original and drill-down facilitator queries, without touching the siblings. 

The \textbf{Mid {MQO} Algorithm} (Algorithm \ref{algo:Mid-MQO}) launches the two sibling queries along with the merged $q^{MID}$ query. Since the latter pertains to three facilitator queries, we
group the tuples on their original and ancestor levels to distribute the
result tuples. In that way, we get the tuples that
contribute to the original and drill-down queries result in a single
access. The rest of the processing is the same as with Max {MQO}. Compared to \emph{q\textsuperscript{A}}, \emph{q\textsuperscript{MID}} returns less detailed tuples and applies less groupers to them, thus reducing the result size.

Algorithm 2 presents the implementation of the \emph{Mid Multi-Query
Optimization}. First, the algorithm constructs three maps, one for the
original query and one for each drill-down query that determine the
\emph{ANALYZE} operator. Each of these maps will store the result of its
respective query. The key of the map is the combination of grouper
values and the value is the aggregate measure that pertains to these
coordinates (i.e.,
\textless grouper1,grouper\textsubscript{2}\textgreater{}
-\textgreater{} m). The resulting tuples are pushed into the respective
buckets via \emph{updateMap()} as described in section 5.2.

Then, the sibling queries q\textsuperscript{sA}, q\textsuperscript{sB}
are constructed and executed separately to return the exact sibling
result tuples. In this way, we avoid the presence of spare tuples by
applying selection conditions that involve only parent values and result
in exploring a large part of the fact table. Finally, the result tuples
of the three queries are packaged to create the ANALYZE query result.

\hrulefill

For example, assume that the user issues the following \textit{ANALYZE} query: \\

\noindent\textsf{ANALYZE sum(store\_sales) FROM sales}\\
\textsf{FOR state = 'CA' AND quarter = '2025-Q4'}\\
\textsf{GROUP BY state,quarter} 

This query is regarded as the original query $q^{org}$. The Mid-MQO algorithm constructs three maps for each facilitator query result $H^{org}, H^{dd^A}, H^{dd^B}$. Then, the merged query $q^{MID}$ is constructed by maintaining the selection conditions of $q^{org}$ and by adding, two more groupers in the current set. The groupers of the merged query contain the child levels of the $q^{org}$ groupers. The merged query is: \\

\noindent\textsf{ANALYZE sum(store\_sales) FROM sales}\\ 
\textsf{FOR state = 'CA' AND quarter = '2025-Q4'}\\
\textsf{GROUP BY \textcolor{red}{city, month}, state, quarter}\\

Then, the sibling queries $q^{s^A}, q^{s^B}$ are constructed with respect to the syntax discussed in Section \ref{sec:analyze}. Then, the merged query $q^{MID}$ and the sibling queries $q^{s^A}, q^{s^B}$ are translated to SQL and executed. The result of sibling queries contains the exact sibling result tuples. The merged query contains the result tuples of the original and drill-down queries. Similarly, as described in the Max-MQO algorithm, each tuple is visited and distributed to a result bucket $H^{org}, H^{dd^A}, H^{dd^B}$ if it is part of the respective facilitator query ($q^{org},q^{dd^A},q^{dd^B}$) with respect to the cube usability theorem \cite{PV22,DBLP:conf/dolap/Vassiliadis23}.

\hrulefill

\begin{algorithm}
\caption{Algorithm Mid Multi-Query Optimization}\label{algo:Mid-MQO}
\SetAlgoLined
\KwIn{Original query $q^{org}$ = $\langle C^{0}, \phi, [D_{\alpha}.L_{\alpha}^{\gamma}, D_{\beta}.L_{\beta}^{\gamma}, M], agg(M^0) \rangle$, \\
\hspace{2em}$\phi = \phi_{\alpha} \wedge \phi_{\beta} \wedge \phi_{\Box}$, \quad $\phi_i: D_{i}.L_{i}^{\sigma} = v_i, \; i \in \{\alpha,\beta\}$}
\KwOut{Updated results for $q^{org}, q^{s^A}, q^{s^B}, q^{dd^A}, q^{dd^B}$}

\Begin{
    Let $H^{org}, H^{s^A}, H^{s^B}, H^{dd^A}, H^{dd^B}$ be maps of the form \\
    \hspace{2em}$H: \langle dom(L^{\gamma}_1), dom(L^{\gamma}_2) \rangle \rightarrow dom(M^0)$\;
    Let $aggF^{\star} = adapter(aggF)$\;

    \tcc*[h]{Construct and execute Mid MQO queries}\\
    Let $\phi_{\alpha}^{\ast}: L_{\alpha+1}^{\sigma} = anc_{L_{\alpha+1}^{\sigma} L_{\alpha}^{\sigma}}(v_{\alpha})$ \;
    Let $\phi_{\beta}^{\ast}: L_{\beta+1}^{\sigma} = anc_{L_{\beta+1}^{\sigma} L_{\beta}^{\sigma}}(v_{\beta})$ \;

    Let $q^{org\&dd} = \langle C^0, \phi_{\alpha} \wedge \phi_{\beta}, [L_{\alpha-1}^{\gamma}, L_{\beta-1}^{\gamma}, L_{\alpha}^{\gamma}, L_{\beta}^{\gamma}, M^{org\&dd}], agg(M^0) \rangle$\;
    Let $q^{s^A} = \langle C^0, \phi_{\alpha}^{\ast} \wedge \phi_{\beta}, [L_{\alpha+1}^{\gamma}, L_{\beta}^{\gamma}, M^{s^A}], agg(M^0) \rangle$\;
    Let $q^{s^B} = \langle C^0, \phi_{\alpha} \wedge \phi_{\beta}^{\ast}, [L_{\alpha}^{\gamma}, L_{\beta+1}^{\gamma}, M^{s^B}], agg(M^0) \rangle$\;

    $q^{org\&dd}.cells \gets q^{org\&dd}.execute()$\;
    $q^{s^A}.cells \gets q^{s^A}.execute()$\;
    $q^{s^B}.cells \gets q^{s^B}.execute()$\;

    \ForEach{tuple $t \in q^{org\&dd}.cells$}{
        Assume $t = \langle v_{\alpha-1}^{\gamma}, v_{\beta-1}^{\gamma}, v_{\alpha}^{\gamma}, v_{\beta}^{\gamma}, m \rangle$\;

        \If{$\sigma_{\phi_{\alpha} \wedge \phi_{\beta}}(t)$}{
            $updateMap(H^{org}, \langle \langle v_{\alpha}^{\gamma}, v_{\beta}^{\gamma} \rangle, m \rangle, agg)$\;
            $updateMap(H^{dd^A}, \langle \langle v_{\alpha-1}^{\gamma}, v_{\beta}^{\gamma} \rangle, m \rangle, agg)$\;
            $updateMap(H^{dd^B}, \langle \langle v_{\alpha}^{\gamma}, v_{\beta-1}^{\gamma} \rangle, m \rangle, agg)$\;
        }
    }

    $q^{org}.cells \gets H^{org}$\;
    $q^{dd^A}.cells \gets H^{dd^A}$, $q^{dd^B}.cells \gets H^{dd^B}$\;
    \Return $\langle q^{org}.cells, q^{s^A}.cells, q^{s^B}.cells, q^{dd^A}.cells, q^{dd^B}.cells \rangle$\;
}

\SetKwFunction{FupdateMap}{updateMap}
\SetKwProg{Fn}{Function}{:}{}\Fn{\FupdateMap{$H:map, \langle \langle g_1,g_2 \rangle, m \rangle:$\\$ Tuple \langle Pair \langle grouper,grouper \rangle, measure \rangle, aggF^{\star}$}}{
    \If{$H.containsKey(\langle g_1,g_2 \rangle)$}{
        $curValue \gets H.get(\langle g_1,g_2 \rangle)$\;
        $H.put(\langle g_1,g_2 \rangle, aggF^{\star}(curValue, m))$\;
    }
    \Else{
        $H.put(\langle g_1,g_2 \rangle, m)$\;
    }
}
\end{algorithm}
\forcepagebreak

\subsection{A note on the design choices of our algorithms}
Before proceeding to present the experimental evaluation of these algorithms, it is worthwhile to stop for a quick comment on the fundamental design choices that guide them. We introduce an operator that ultimately invokes several database queries (even if we produce an algorithm to merge them into one). Also, the optimizations that we introduce are (a) DBMS-external, and therefore, DBMS-agnostic, and, (b) performed automatically at the syntactic level of query rewriting, rather than exploring alternative multi-query optimizations. 

\result{One important lesson to be learned here is that we can introduce a level of abstraction above the direct database querying that resembles more naturally the thinking of the analyst} (much like Roll-Ups and Drill-Downs were introduced for OLAP). These high level operators, internally, can involve several facilitator queries, and therefore incur execution costs, and, at the same time, provide optimization opportunities. 

At the same time, observe that all our algorithms \result{are using \emph{cube} queries, rather than \emph{database} queries, as the granule of work}. As a result, the rewritings of the original query that these algorithm perform is performed by exploiting the combination of filters and groupers at the hierarchical multidimensional space and do \textit{not} involve the exploration of alternative plans inside the DBMS. 

Overall, the entire design offers another important lesson to be learned: \result{working at the cube query level, \emph{outside} the DBMS can incur significant performance gains}.

Of course, integrating such concepts inside the DBMS is also worth exploring, although outside the scope of the current paper. If no optimizations were introduced, either inside or outside the DBMS, the execution of ANALYZE ends up being the execution of the Min-MQO algorithm, i.e, the batch execution of five database queries, one after the other. Apart from optimizing this outside the DBMS, we can also consider the possibility of expanding the DBMS with the operator, in which case, the optimizer of the DBMS could also be expanded too. This involves both the MQO possibilities listed here, and also, other opportunities --for example, deciding the join order in this context is worth exploring. 
Observe that, in contrast to traditional MQO algorithms, exactly because of the characteristics of the data space, we \textit{do not seek} to find rewritings, we automatically (and therefore efficiently) perform them, by exploiting the hierarchies.

\section{Experiments}\label{sec:exps}
In this section, we experimentally evaluate the execution time of the proposed algorithms under varying configurations of data and query workloads. In all of our experiments, we use the Delian Cubes system \cite{Delian_Cubes_Engine}, which is a cube query answering engine, within which we have incorporated our algorithms. All algorithms have been implemented in Java, as part of the Delian Cubes system.  We have used MySQL 8.0.34 as the underlying database. For our evaluation, the methods run on a Windows 11 2.50 GHz 14-core processor system with 32GB main memory and 1TB SSD. All material is found at: \url{https://github.com/DAINTINESS-Group/DelianCubeEngine}.

\subsection{Experimental Setup}
\paragraph{Experimental Goal and Evaluation Metrics.}
The main goal of our evaluation is to evaluate the efficiency of the 
proposed algorithms to execute the ANALYZE operator. Therefore, 
for all antagonists, we measure the \textit{Total Execution Time} needed to fully compute the answer to an ANALYZE query, as well as its breakdown to different facilitators 
within each algorithm. \textit{Total Query Execution Time} is the sum of the time taken by different parts of each algorithm, and specifically: (i) \textit{Parsing Time} for the string input expression of the ANALYZE query, (ii) \textit{Construction Time} for the facilitator queries, (iii) \textit{Facilitator Execution Time} for executing the facilitator queries, (iv) \textit{Post-processing Time} of the results of the facilitator queries and population of the ANALYZE query result correctly.

\paragraph{Competitor Algorithms.}
In our experiments, we compare the three different algorithms to 
implement the \textit{ANALYZE} operator on \textit{Total Execution 
Time} and its breakdown. For the convenience of the reader, we briefly 
summarize these three methods in both Table \ref{tab:T3} and 
the following short summary. All algorithms receive as input an ANALYZE 
query with two groupers, a distributive aggregate function, and at least 
two selection conditions. Recall that the semantics of the operator 
entail the execution of 5 internal \textit{facilitator} cube queries, 
namely the original one $q^{org}$, two sibling $q^{s^A}, q^{s^B}$ and two drilling cube queries $q^{dd^A}, q^{dd^B}$.
\begin{itemize}
    \item \textit{Min-MQO} is the algorithm that performs the least merging of the operator's underlying cube queries (thus, "Minimum Multi-Query Optimization"). As already mentioned, the algorithm constructs the five facilitator queries as described by the operator's semantics. Practically, this is the plain simple execution of the operator without any optimization and as such, it also avoids having to perform any post-processing to the results obtained from the executed queries. 
    \item \textit{Max-MQO} is the algorithm that performs the maximum merging of the facilitator queries into a single facilitator query (thus, "Maximum Multi-Query Optimization") with four extra groupers to cover the \textit{siblings} and \textit{drill-down} results, substituted selection conditions (input atoms are substituted by their parent levels and parent values). The results are post-processed and the tuples are distributed to the correct placeholder with respect to the operator semantics. 
    \item \textit{Mid-MQO} is the algorithm that strikes a balance in the middle of the two extremes of full -and no - merging of the facilitator queries into one. Specifically, the algorithm constructs three facilitator queries: (i) two \textit{sibling} queries, one for each grouping dimension, and, (ii) a merged \textit{original-n-drill-down} multi-query which is a combination of the \textit{original} and \textit{drill-down} queries, which exploits the fact that these queries have the exact same selection conditions and thus search the same data space of tuples in the fact table. The resulting tuples are post-processed and distributed to the correct placeholder as in the previous case. 
\end{itemize}

\begin{table}[tbh]
\centering
\begin{tabular}{r|r|r}
\textbf{Name} & \textbf{\# Executable Queries} & \textbf{Post-Processing} \\
\hline
Min-MQO & 5 & No \\
Mid-MQO & 3 & Yes \\
Max-MQO & 1 & Yes \\
\end{tabular}
\caption{Details of Methods}
\label{tab:T3}
\end{table}

\paragraph{Datasets.}
We have used the datasets listed in Table \ref{tab:T4} to perform our evaluation. Specifically, these datasets are: 
(i) \textit{Northwind} \cite{Northwind}, which is a synthetic dataset that represents the sales and the operations of a food import/export company, 
(ii) \textit{Foodmart}, which is a cubefied version of \textit{Foodmart} \cite{Foodmart}, which contains synthetic data regarding the sales activity of a retail supermarket, 
(iii) \textit{pkdd99+}, which is an upscaled version of \textit{pkdd99} \cite{DBLP:conf/pkdd/ZhongYO99}, a financial dataset that contains data about loans and transactions, for which we have created a fact table that contains 100 million entries, and, 
(iv) \textit{TPC-DS} \cite{TPC-DS}, which is an industry standard benchmark designed to evaluate the performance of analytical platforms. 
We have tested the scalability of our algorithms over three versions of \textit{TPC-DS} with scale factor 1, 3.5, and, 35. As Delian Cubes operates over data cubes, we do not directly utilize the raw schema of the datasets, but rather adapt it to a cubefied, clean, star-schema version, in order to support hierarchies and cube queries.

\begin{table}[tbh]
\centering
\begin{tabular}{r|r|r|r}

\textbf{Name} & \textbf{Scale Factor} & \textbf{\# Dim. Tables
} & \textbf{\# Facts} \\
\hline
Northwind & 1 & 4 & 2K \\

Foodmart & 1 & 5 & 288K \\

pkdd99+ & 166000 & 3 & 100M \\

TPC-DS & 1,3.5,35 & 6 & 2M,10M,100M \\

\end{tabular}
\caption{Dataset Basic Characteristics}
\label{tab:T4}
\end{table}

\begin{table}[p]
	\centering
	\begin{tabular}{c|c|c|c|c|c|c|c}
		
		\multirow{2}{*}{\textbf{QueryID}}& \multirow{2}{*}{\textbf{Filter Level}} & \multirow{2}{*}{\textbf{Grouper Level}}  & 	\multirow{2}{*}{\textbf{\#Filters}} & \multicolumn{4}{c}{\textbf{Selectivity Ratio}}  \\
		& & & & $q^{org}$ & $q^{sib_1}$ & $q^{sib_2}$ & $q^A$ \\
		\hline
		1 & Mid-High & Mid-High  & 2 & 1\% & 8\% & 3\% & 20\% \\
		
		2 & Mid-High & Mid-High & 2  & 1\% & 4\% & 4\% & 9\% \\
		
		3 & Low-High & Low-High & 2  & 1.5\% & 8\% & 6\% & 20\% \\
		
		4 & Mid-High & Mid-High & 2 & 2\% & 4\% & 8\% & 20\% \\
		
		5 & Mid-High & Mid-High & 2 & 3\% & 8\% & 9\% & 20\% \\
		
		6 & Mid-High & Mid-High & 2 & 3.5\% & 8\% & 3\% & 20\% \\
		
		7 & High-High & High-High & 2  & 6\% & 30\% & 20\% & 100\% \\

		8 & High-High & High-High & 2  & 6.5\% & 30\% & 20\% & 100\% \\

		9 & High-High & High-High & 2  & 8\% & 37\%  & 20\% & 100\% \\
		
		10 & High-High & High-High & 2  & 8\% & 37\% & 20\% & 100\% \\
	\end{tabular}
	\caption{TPC-DS Time Workload Characteristics}
    \label{tab:T5}
\end{table}

\begin{table}[p]
	\centering
	\begin{tabular}{c|c|c|c|c|c|c|c}
		
		\multirow{2}{*}{\textbf{QueryID}}& \multirow{2}{*}{\textbf{Filter Level}} & \multirow{2}{*}{\textbf{Grouper Level}}  & 	\multirow{2}{*}{\textbf{\#Filters}} & \multicolumn{4}{c}{\textbf{Selectivity Ratio}}  \\
		& & & & $q^{org}$ & $q^{sib_1}$ & $q^{sib_2}$ & $q^A$ \\
		\hline
		1 & Low-Low & Low-Low  & 2 & 0.001\% & 0.001\% & 0.35\% & 1\% \\
		
		2 & Mid-Low & Mid-Low & 2  & 0.002\% & 0.01\% & 0.35\% & 2\% \\
		
		3 & High-Low & High-Low & 2  & 0.01\% & 0.03\% & 1\% & 10\% \\
		
		4 & Low-High & Low-High & 2 & 0.2\% & 0.5\% & 2\% & 5\% \\
		
		5 & Low-High & Low-High & 2 & 0.35\% & 0.5\% & 2\% & 9\% \\
		
		6 & Mid-High & Mid-High & 2 & 0.5\% & 1\% & 3.5\% & 20\% \\
		
		7 & Mid-High & Mid-High & 2  & 0.9\% & 2\% & 6\% & 20\% \\
		
		8 & Mid-High & Mid-High & 2  & 1\% & 10\% & 20\% & 20\% \\
		
		9 & High-High & High-High & 2 & 2\% & 2\%  & 9\% & 100\% \\
		
		10 & High-High & High-High & 2 & 2\% & 9\% & 10\% & 100\% \\
	\end{tabular}
	\caption{TPC-DS Item Workload Characteristics}
    \label{tab:T6}
\end{table}

\begin{table}[p]
	\centering
	\begin{tabular}{c|c|c|c|c|c|c|c}
		
		\multirow{2}{*}{\textbf{QueryID}}& \multirow{2}{*}{\textbf{Filter Level}} & \multirow{2}{*}{\textbf{Grouper Level}}  & 	\multirow{2}{*}{\textbf{\#Filters}} & \multicolumn{4}{c}{\textbf{Selectivity Ratio}}  \\
		& & & & $q^{org}$ & $q^{sib_1}$ & $q^{sib_2}$ & $q^A$ \\
		\hline
		1 & Low-Low & Low-Low  & 2 & 0.01\% & 0.01\% & 0.2\% & 0.2\% \\
		
		2 & Mid-Low & Mid-Low & 2  & 0.01\% & 0.25\% & 0.15\% & 3\% \\
		
		3 & Mid-Low & Mid-Low & 2  & 0.2\% & 0.2\% & 2\% & 2\% \\
		
		4 & High-Low & High-Low & 2 & 0.6\% & 3\% & 4\% & 20\% \\
		
		5 & High-High & High-High & 2 & 1.4\% & 17\% & 11\% & 100\% \\
		
		6 & High-High & High-High & 2 & 1.7\% & 17\% & 9\% & 100\% \\
		
		7 & High-Low & High-Low & 2  & 2\% & 2\% & 13\% & 13\% \\
		
		8 & Mid-High & Mid-High & 2  & 3\% & 17\% & 18\% & 100\% \\
		
		9 & High-High & High-High & 2 & 3\% & 17\%  & 18\% & 100\% \\
		
		10 & High-High & High-High & 2 & 3\% & 17\% & 18\% & 100\% \\
	\end{tabular}
	\caption{pkdd99+ Workload Characteristics}
    \label{tab:T7}
\end{table}

\begin{table}[h]
	\centering
	\begin{tabular}{c|c|c|c|c|c|c|c}

		\multirow{2}{*}{\textbf{QueryID}}& \multirow{2}{*}{\textbf{Filter Level}} & \multirow{2}{*}{\textbf{Grouper Level}}  & 	\multirow{2}{*}{\textbf{\#Filters}} & \multicolumn{4}{c}{\textbf{Selectivity Ratio}} \\
		& & & &   $q^{org}$ & $q^{sib_1}$ & $q^{sib_2}$ & $q^A$ \\
		\hline
		1 & Low-Low & Low-Low & 2 & 0.001\% & 1\% & 0.01\% & 3\% \\

		2 & Low-Low & Mid-Mid & 2 & 0.001\% & 1\% & 0.01\% & 3\% \\

		3 & Low-Low & High-High & 2 & 0.001\% & 1\% & 0.01\% & 3\% \\

		4 & Mid-Mid & Low-Low & 2 & 3\% & 9\% & 9\% & 30\% \\

		5 & Mid-Mid & Mid-Mid & 2 & 3\% & 9\% & 9\% & 30\% \\

		6 & Mid-Mid & High-High & 2 & 3\% & 9\% & 9\% & 30\% \\

		7 & High-Mid & Low-Low & 2 & 9\% & 30\% & 19\% & 67\% \\

		8 & High-High & Low-Low & 2 & 30\% & 30\% & 67\% & 100\% \\

		9 & High-High & Mid-Mid & 2 & 30\% & 30\% & 67\% & 100\% \\

		10 & High-High & High-High & 2 & 30\% & 30\% & 67\% & 100\% \\

	\end{tabular}
	\caption{Foodmart Workload Characteristics}
    \label{tab:T8}
\end{table}

\begin{table}[h]
	\centering
	\begin{tabular}{c|c|c|c|c|c|c|c}
		\multirow{2}{*}{\textbf{QueryID}}& \multirow{2}{*}{\textbf{Filter Level}} & \multirow{2}{*}{\textbf{Grouper Level}}  & 	\multirow{2}{*}{\textbf{\#Filters}} & \multicolumn{4}{c}{\textbf{Selectivity Ratio}}  \\
		& & & &   $q^{org}$ & $q^{sib_1}$ & $q^{sib_2}$ & $q^A$ \\
		\hline
		1 & Low-Low & Low-Low & 2 & 0.5\% & 6\% & 1\% & 13\% \\

		2 & Low-Low & Mid-Mid & 2 & 0.5\% & 6\% & 1\% & 13\% \\

		3 & Low-Low & High-High & 2 & 0.5\% & 6\% & 1\% & 13\% \\

		4 & Mid-Mid & Low-Low & 2 & 1\% & 13\% & 1\% & 17\% \\

		5 & Mid-Mid & Mid-Mid & 2 & 1\% & 13\% & 1\% & 17\% \\

		6 & High-Mid & Low-Low & 2 & 6\% & 28\% & 13\% & 74\% \\

		7 & High-Mid & High-High & 2 & 6\% & 28\% & 13\% & 74\% \\

		8 & High-High & Low-Low & 2 & 13\% & 74\% & 17\% & 100\% \\

		9 & High-High & Mid-Mid & 2 & 13\% & 74\% & 17\% & 100\% \\

		10 & High-High & High-High & 2 & 13\% & 74\% & 17\% & 100\% \\

	\end{tabular}
	\caption{Northwind Workload Characteristics}
    \label{tab:T9}
\end{table}

\paragraph{Query Workloads.}
The query workloads cover multiple cases of an ANALYZE query. The characteristics of an ANALYZE query are (i) the \textit{selectivity ratio}, (ii) the \textit{level of filter/grouper values}, and (iii) \textit{the number of filters}. The selectivity ratio refers to the percentage of tuples in the fact table that are involved in the computation of the final query result. The level of filter/grouper values is a relative characterization of the height of the filter/grouper level in the hierarchy of its dimension. 
Tables \ref{tab:T5}, \ref{tab:T6}, \ref{tab:T7}, \ref{tab:T8}, \ref{tab:T9} summarize the workload characteristics of each dataset. To assess the operator with different selectivities over the fact tables, for each data set, we use a workload of 10 queries, where we (a) vary with the various levels of filters/groupers, and (b) employ one large and one small dimension table. For the TPC-DS dataset, and in order to also evaluate the effect of dimension table size, we use two query workloads, both involving the large \textit{Date} dimension: (i) a workload using the large \textit{Time} dimension table and (b) a workload using the small \textit{Item} dimension table. In all workloads, as the QueryID increases, the selectivity of $q^{org}$ also increases in all data sets. 
The two workloads differ only on the selectivity ratio due to the difference in the dimension table size. The \textit{Time} workload involves two relatively large dimension tables and the \textit{Item} workload one large and one small. Both workloads involve the large \textit{Date} dimension. For the rest of the datasets, we employed a single workload of 10 queries with various levels of filters/groupers that involve one large and one small dimension table. To ensure that the result size of each ANALYZE query is relatively small in order to be comprehensible and manageable, we set the grouper levels of each ANALYZE query at the highest level possible with respect to the constraints discussed in Section \ref{sec:background}.

\silence{The query workload covers multiple cases of an ANALYZE query. The 
characteristics of an ANALYZE query are (i) the \textit{selectivity 
ratio}, (ii) the \textit{level of grouper values}, and, (iii) the 
\textit{number of atomic filters per query}. The selectivity ratio 
refers to the percentage of the tuples of the fact table that are 
involved in the computation of the final query result. The level of the 
grouper values is a relative characterization of the height of the 
grouper level in the hierarchy of its dimension (e.g., year is high and 
microsecond low in the hierarchy of Time). The number of filters 
effectively affects the selectivity of the algorithm - however it is a 
syntactic characteristic of the query without the need to estimate the 
selectivity precisely. At the same time, the number of joins between 
fact and dimension tables is practically dictated by the filters and the 
groupers of the query.} 

\silence{\subsubsection{Implementation Details}
All the algorithms have been implemented in Java, as part of the Delian 
Cubes system. We have employed MySQL 8.0.34 as the underlying database. For our 
evaluation, the methods run on an Windows 11 2.50 GHz 14-core processor system with 32GB 
main memory and 1TB SSD. All the material is found at: \url{https://github.com/DAINTINESS-Group/DelianCubeEngine}}

\subsection{Effect of Query Characteristics on the Performance of the 
Algorithms}
In this subsection, we discuss the effect of each of the query 
characteristics on the overall performance of the antagonist algorithms. 
In all the presented Figures, each bar represents the total execution 
time, with Min-MQO shown as a dark blue bar, Mid-MQO as a blue bar, Max-MQO 
as a light blue bar, and the timeout limit as a red dotted line. In the experiments that follow, we want to assess the effect of each of these characteristics on the performance of the different algorithms. To this end, we define a reference query and modify the values 
of the assessed characteristic while keeping the rest of the characteristics fixed. We used the \textit{TPC-DS} dataset that contains 100 million facts. The reference values for the query characteristics are: 20\% selectivity ratio, mid grouper levels, and 2 filters per query. The values of each tested query characteristic are prescribed in the respective subsection.

\subsubsection{Effect of Selectivity Ratio}
In this experiment, we evaluate how the selectivity ratio of the ANALYZE 
query affects its execution time for the three antagonist algorithms. 
Figure \ref{fig:F2} presents the effect of an ANALYZE query 
selectivity ratio on the query execution time for all methods. The 
selectivity ratios that we utilized for our experiment are 
10\%/50\%/90\% of the number of fact table tuples. For all antagonists, 
the query execution time scales, due to the increasing number of tuples 
processed. 

Observe that in all these setups, Mid-MQO consistently outperforms the antagonist 
methods. For low selectivity, Min-MQO and Max-MQO are close because the number of tuples being processed is small. Both Min-MQO and Mid-MQO scale similarly with selectivity; on the other hand, Max-MQO quickly deteriorates as selectivity increases (recall that internally, 
Max-MQO has the largest interim result set to process).

\silence{In all occasions, however, Mid-MQO wins; in particular, in the practically 
quite popular lower-selectivity case, its benefits are quite obvious.}

\begin{figure}[h]
    \centering
    \includegraphics[width=0.8\linewidth]{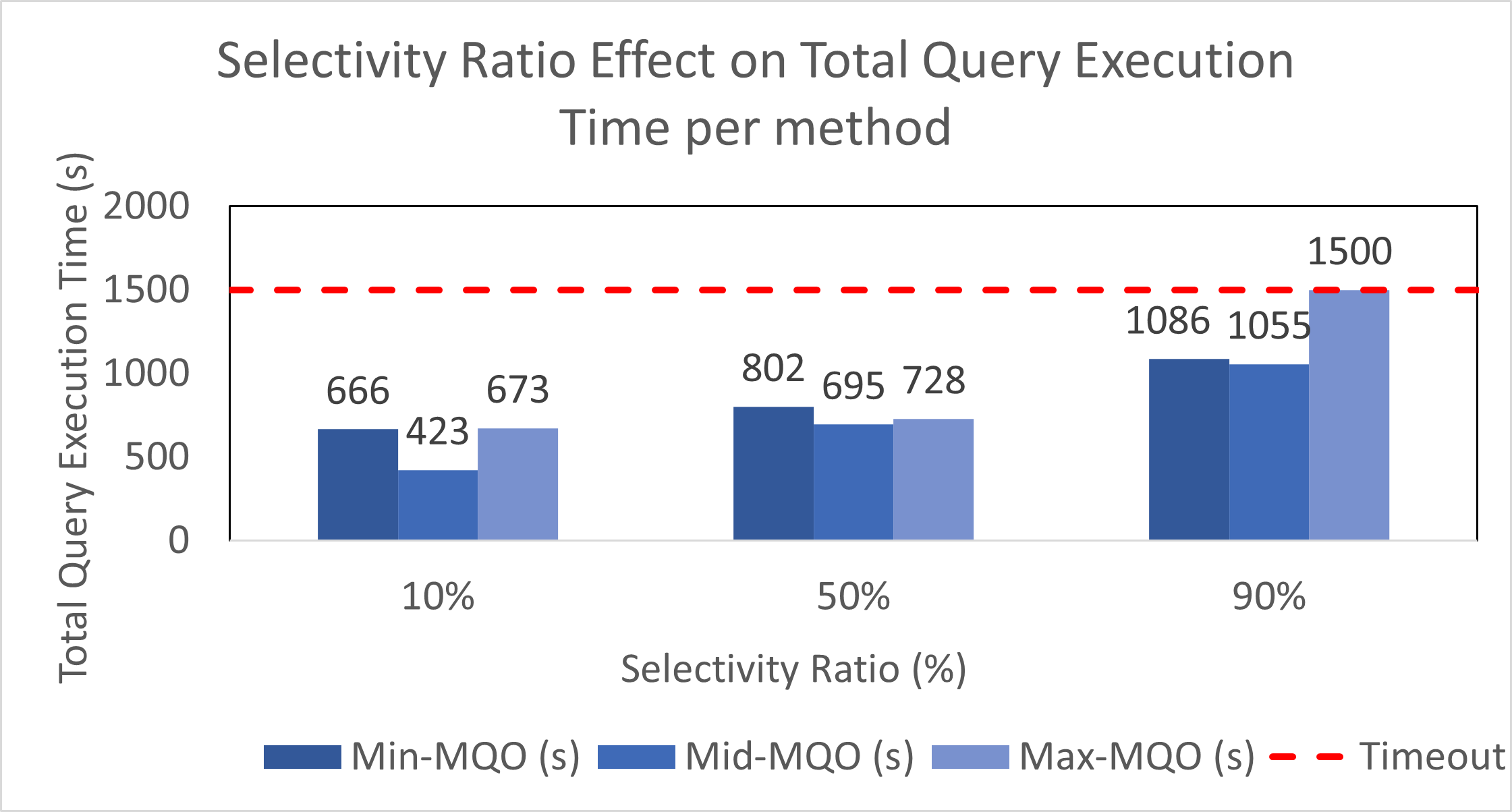}
    \caption{Selectivity Ratio effect on Total Querying Execution Time. \textit{Max-MQO} did not complete its execution on the 90\% selectivity ratio case.}
    \label{fig:F2}
\end{figure}

\silence{\begin{figure}[h]
    \centering
    \includegraphics[width=0.8\linewidth]{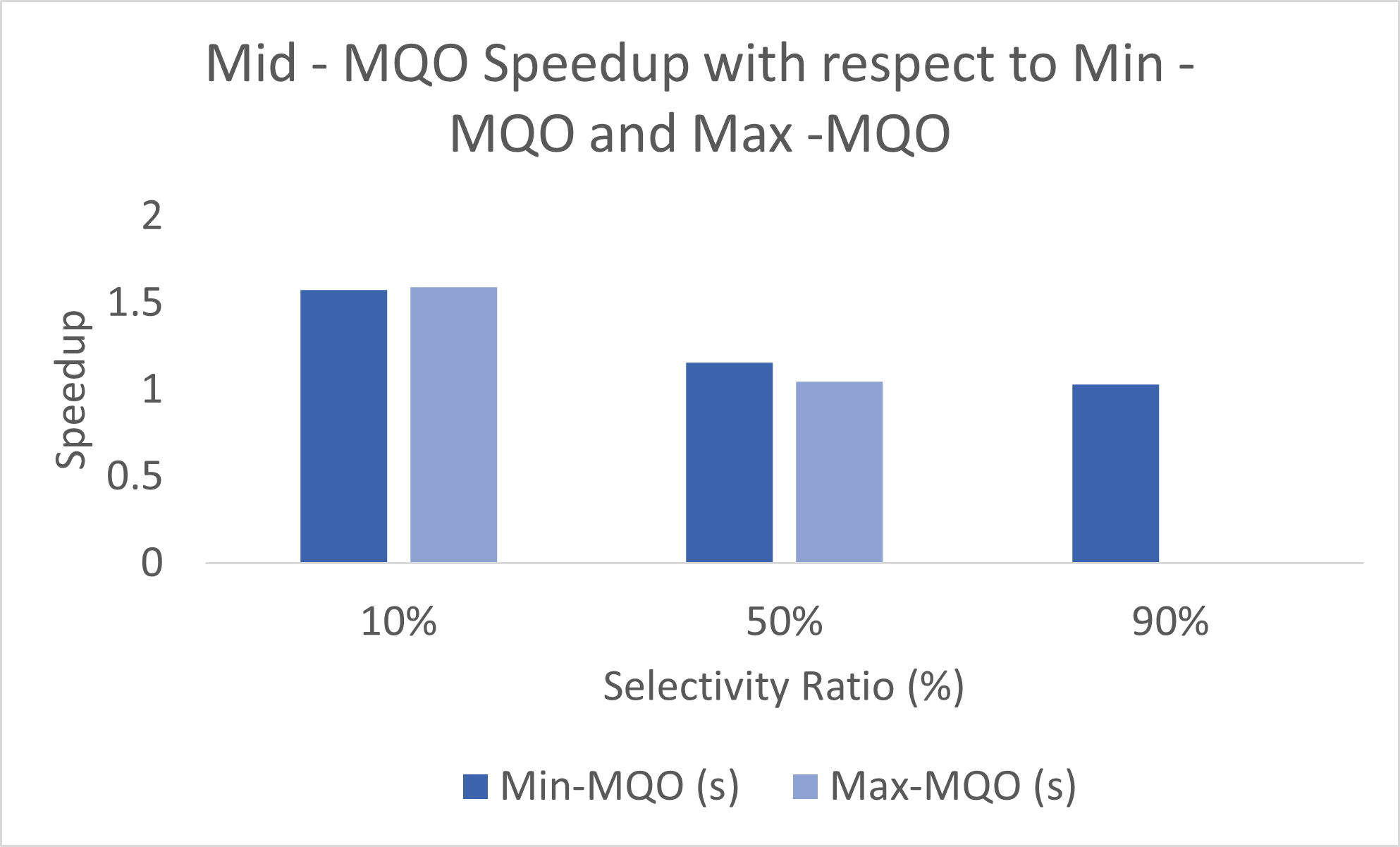}
    \caption{Mid - MQO Speedup with respect to antagonists. \textit{Max-MQO} did not complete its execution on the 90\% selectivity ratio case.}
    \label{fig:F3}
\end{figure}

Figure \ref{fig:F3} presents the speedup of Mid-MQO performance 
with respect to Min-MQO and Max-MQO performance. We define as \textit{speedup} the ratio $\frac{slow~algo}{fast~algo}$. In the low-selectivity 
case, Min-MQO outperforms its antagonists by approximately 1.5x. As the 
selectivity rises, speedup is decreasing to approximately 1.1x. In the 
high selectivity case, as depicted by Figure \ref{fig:F2},
Max-MQO execution reaches the timeout limit without completing the 
execution of the operator. Thus, we are unable to calculate the Min-MQO 
speedup for this case.}

\subsubsection{Effect of Number of Atomic Filters}
In this experiment, we evaluate how the number of atomic filters in 
the selection condition of an ANALYZE query affects its execution time 
for the three antagonist algorithms. Figure \ref{fig:F6}
shows the effect of the number of filters per query on the query 
execution time for all methods. The number of atomic filters utilized 
for our evaluation is 2/3/4/5. It is important to note that 
the presence of join operations and the number of filters are directly 
correlated: more filters require that more dimension tables be joined. 
Naturally, the number of filters also directly affects the selectivity ratio, as the number of atoms of the selection condition is increasing. This is due to the nature of the query and signifies that, although more joins are added, the effect of join processing is 
annulated by the number of tuples involved in the processing of the facilitator queries. As we observe Figure \ref{fig:F6}, a few facts become evident:

\begin{itemize}
\item As already mentioned, the decrease of the number of involved 
tuples directly shrinks the execution time for all algorithms.
\item Max-MQO gains performance much faster than the other algorithms as the data 
volume decreases due to the increase of the selectivity ratio, utilizing the single access to the database. Ultimately, in very low numbers of detailed fact tuples involved, Max-MQO wins.
\item Min-MQO and Mid-MQO are not affected by the shift in the number of filters in a  fashion similar to that of Max-MQO. Both of these methods are accessing the database multiple times and perform multiple join operations on multiple queries due to the extra filters. As the number of tuples is retained at high levels, this need for extra joins reduces the benefits of the tuples involved compared to Max-MQO, which accesses the database 
once. In all occasions but one, Min-MQO is the worst-performing algorithm.
\end{itemize}

\begin{figure}[h]
    \centering
    \includegraphics[width=0.75\linewidth]{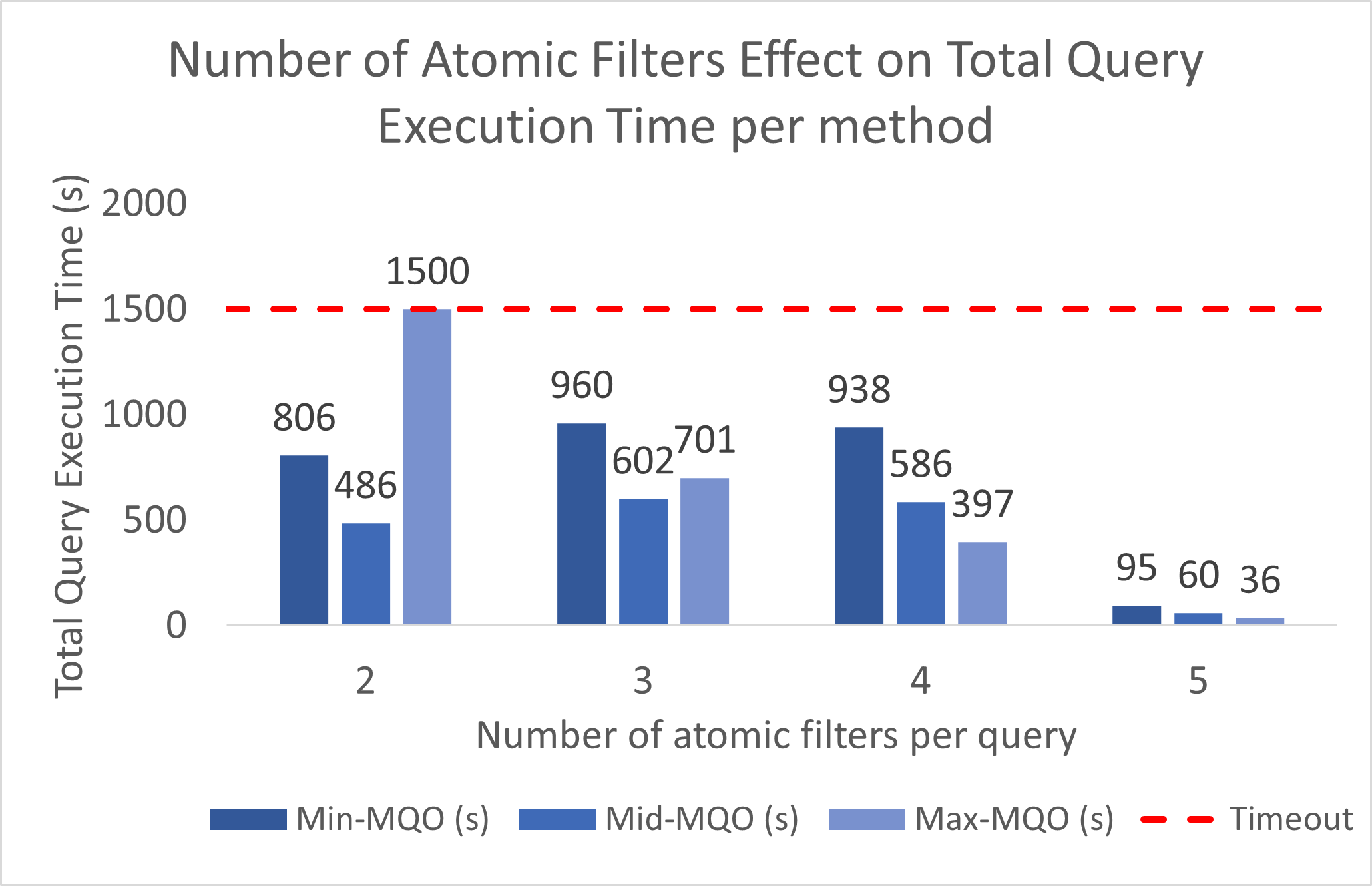}
    \caption{Number of Atomic Filters Effect. \textit{Max-MQO} did not complete its execution for 2 atomic filters.}
    \label{fig:F6}
\end{figure}

\begin{figure}[h]
    \centering
    \includegraphics[width=0.75\linewidth]{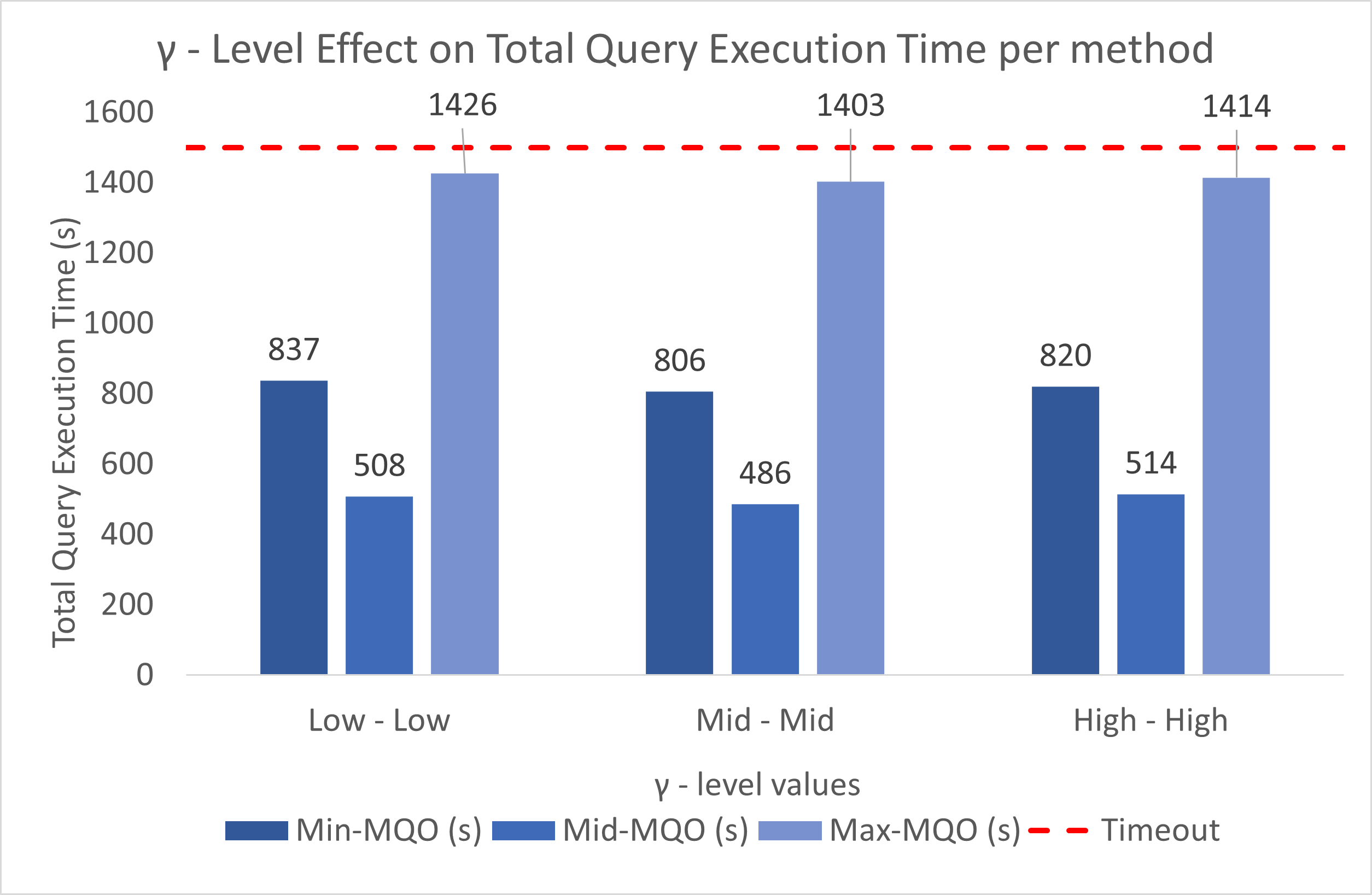}
    \caption{Grouper Effect on Total Execution Time}
    \label{fig:groueperEffect}
\end{figure}

\begin{figure}[h]
    \centering
    \includegraphics[width=0.75\linewidth]{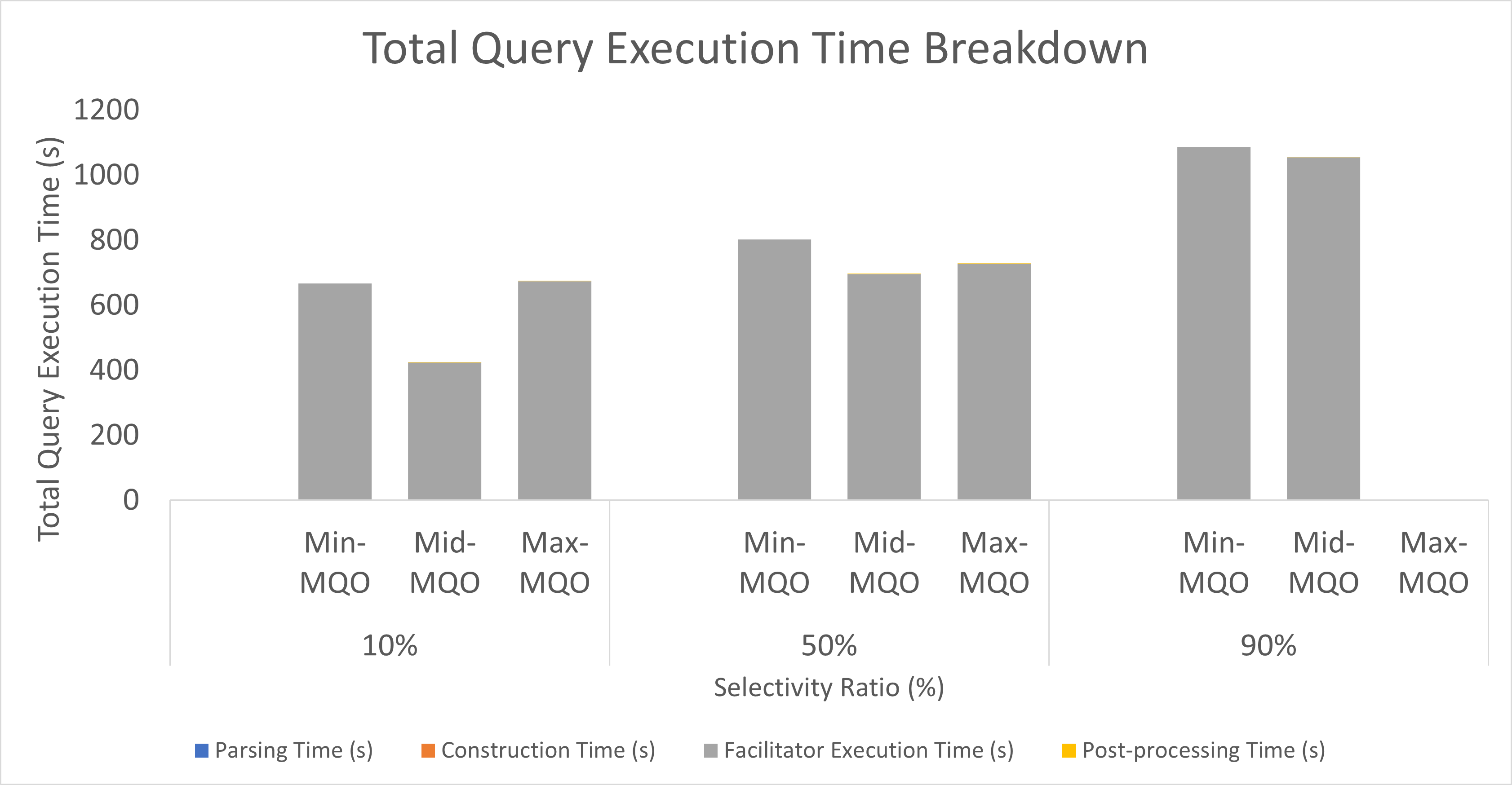}
    \caption{Total Execution Time Breakdown}
    \label{fig:executionBreakdown}
\end{figure}

\subsubsection{Other considerations} There are a couple of results that we briefly summarize here. 
\begin{itemize}
    \item \textit{Effect of the level of aggregation due to the groupers}. We have varied the ``height" of the grouper levels in their respective hierarchy. To this end, we classify the ``height" of a level as \textit{low} (when it is the lowest level of the hierarchy), \textit{high} (the topmost ALL, as well as its immediate child), and \textit{mid} otherwise. Our experiments have demonstrated that the level of aggregation of groupers has no effect on Total Execution Time (Figure \ref{fig:groueperEffect}). 
    \item \textit{Time breakdown}. We have analyzed the internal behavior of the alternative algorithms, to detect possible points of delay. To this end, we have split the query execution process to four stages, specifically (a) parsing, (b) construction of auxiliary facilitator queries, (c) execution time of all facilitator queries, and, (d) post-processing of the intermediate results, to construct the final query answer (no need for this in the case of Min-MQO). To cover the entire spectrum of fact table coverage by the original query, we varied the selectivity ratio of the original query to 10\%, 50\%, 90\% of the fact table tuples. The breakdown of Total Execution Time into the aforementioned four stages, reveals that in all cases, the Facilitator Execution Time takes approximately 99\% of Total Query Execution Time, making it the nexus of any optimization effort (Figure \ref{fig:executionBreakdown}). 
\end{itemize}

\subsection{Effect of Scaling on Performance}
To assess the effect of data size on the performance of the algorithms, we had to subject the system to a workload of queries, each with different characteristics. We utilize the TPC-DS \textit{Item} and \textit{Time} workloads listed in Tables \ref{tab:T5} and \ref{tab:T6} and include a set of ANALYZE queries with various combinations of characteristics. 

\silence{The selectivity ratio is set to be 20\% and 50\%, the level of grouper values is set to take one of the values \textit{low-low, low-mid, mid-mid, high-high} 
and the number of join operation varies from 2 to 5. The aggregation 
function is \textit{sum()} and the number of groupers is two.}

Figures \ref{fig:Fa}, \ref{fig:Fb}, \ref{fig:Fc} present the effect of scaling data size on the query execution time across all methods for the \textit{Item} workload. Min-MQO shown as dark blue , Mid-MQO as blue, and Max-MQO as light blue. The number of fact table entries of the same dataset utilized for our evaluation is 2/10/100 million. It is evident that as the data size scales, the query execution time of all algorithms is increasing. Observe that Max-MQO's performance drops significantly as the number of fact table entries increases and the performance gap between the other antagonists is increasing. Min-MQO's performance closely follows that of Mid-MQO, but Mid-MQO is always faster. 
Figures \ref{fig:Fg}, \ref{fig:Fh}, \ref{fig:Fi} present the effect of scaling data size on the query execution time for all methods for the \textit{Time} workload. It is evident that as the data size scales, the query execution time of all algorithms is increasing. In this case, the dimension tables are large, and the sibling facilitator queries do not dominate the execution time of the ANALYZE query. We observe that Max-MQO's performance scales better with the data size, in comparison to its antagonists, and is the fastest algorithm in almost every query. Min-MQO's performance closely follows that of Mid-MQO, but Mid-MQO is always faster.

\silence{Observe that the gap on the antagonists' performance is decreasing. Max-MQO is the fastest algorithm for queries with more than 2 filters in all datasets. (QueryID 1-3 according to Table \ref{tab:T3}) Max-MQO reaches timeout in 4 cases (QueryID 4-7). Min-MQO is never the fastest antagonist but never reaches the timeout limit. Mid-MQO is the most consistent antagonist, since it never reaches the timeout limit and is the best performing algorithm in most cases as it scores the most wins, regardless of data size.}

\silence{\begin{table}[tb]
\centering
\begin{tabular}{|l|l|l|l|}
\textbf{QueryID} & \textbf{Selectivity Ratio} & \textbf{Filter Level} 
& \textbf{Grouper Level} & \textbf{$Dimension_1$ size} & \textbf{$Dimension_2$ size} \\
\hline
1 & Undetermined & Mid - Mid & 5 \\
2 & Undetermined & High - high & 3 \\
3 & Undetermined & Mid - Mid & 3 \\
4 & 20\% & High - High & 2 \\
5 & 20\% & Mid - Mid & 2 \\
6 & 20\% & Low - Mid & 2 \\
7 & 20\% & Low - Low & 2 \\
8 & 50\% & High - High & 2 \\
9 & 50\% & Mid - Mid & 2 \\
10 & 50\% & Low - Low & 2 \\
\end{tabular}
\caption{Query Workload Characteristics}
\label{tab:T3}
\end{table}}

\silence{

Figures \ref{fig:F789} presents the effect of scaling data size on the query execution time across all methods. Min-MQO shown as dark blue , Mid-MQO as blue, Max-MQO as light blue and the timeout limit as red dotted line. The number of fact table entries of the same dataset utilized for our evaluation are 2/10/100 millions. It is evident that as the data size scales, the query execution time of all algorithms is increasing. Observe that the gap on the antagonists' performance is decreasing. Max-MQO is the fastest algorithm for queries with more than 2 filters in all datasets. (QueryID 1-3 according to Table \ref{tab:T3}) Max-MQO reaches timeout in 4 cases (QueryID 4-7). Min-MQO is never the fastest antagonist but never reaches the timeout limit. Mid-MQO is the most consistent antagonist, since it never reaches the timeout limit and is the best performing algorithm in most cases as it scores the most wins, regardless of data size.}
\begin{figure}[htbp]
    \centering

    \begin{subfigure}[t]{0.3\textwidth}
        \centering
        \includegraphics[width=\linewidth]{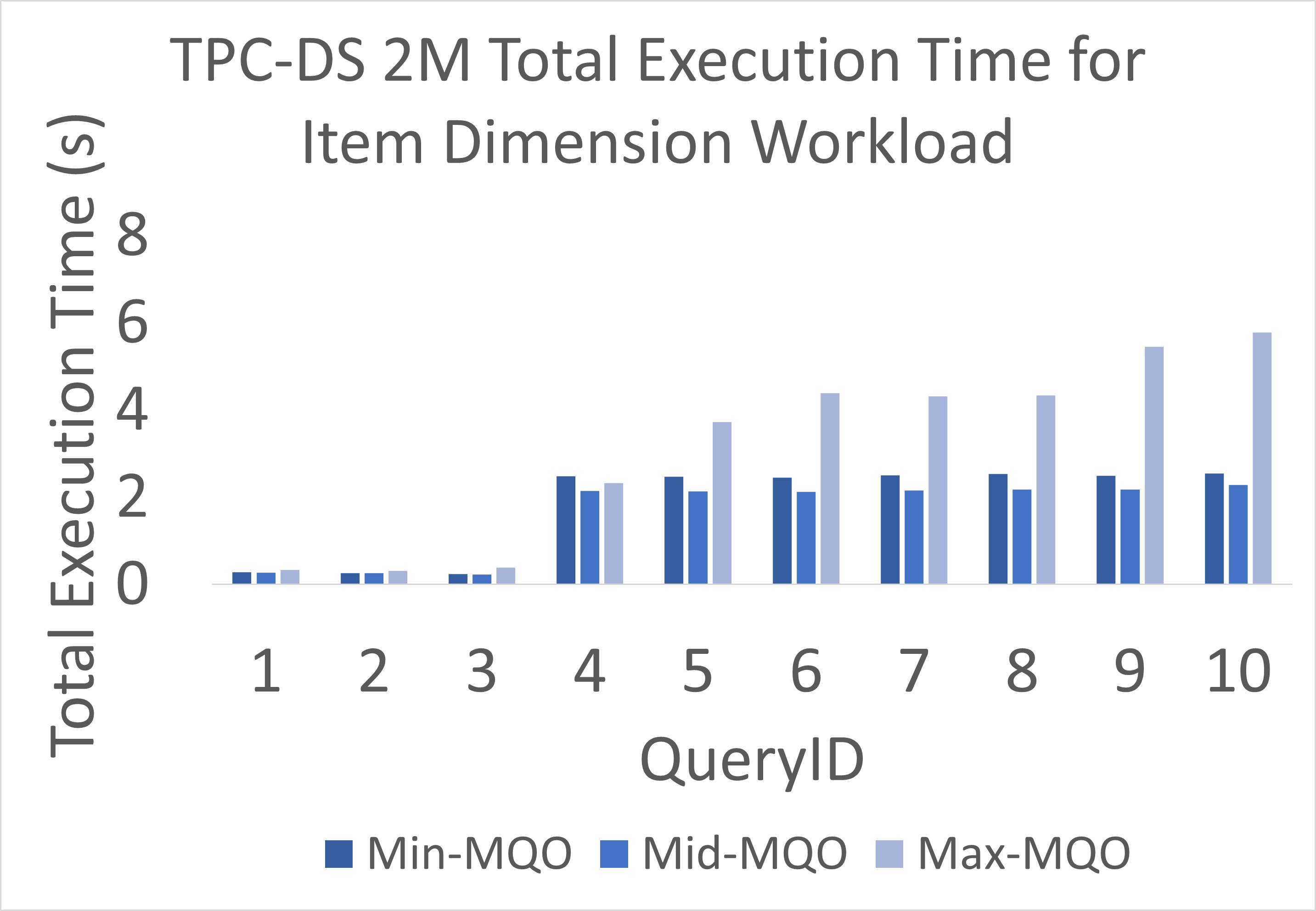}
        \caption{TPC-DS 2M \textit{Item }Workload}
        \label{fig:Fa}
    \end{subfigure}
    \hfill
    \begin{subfigure}[t]{0.3\textwidth}
        \centering
        \includegraphics[width=\linewidth]{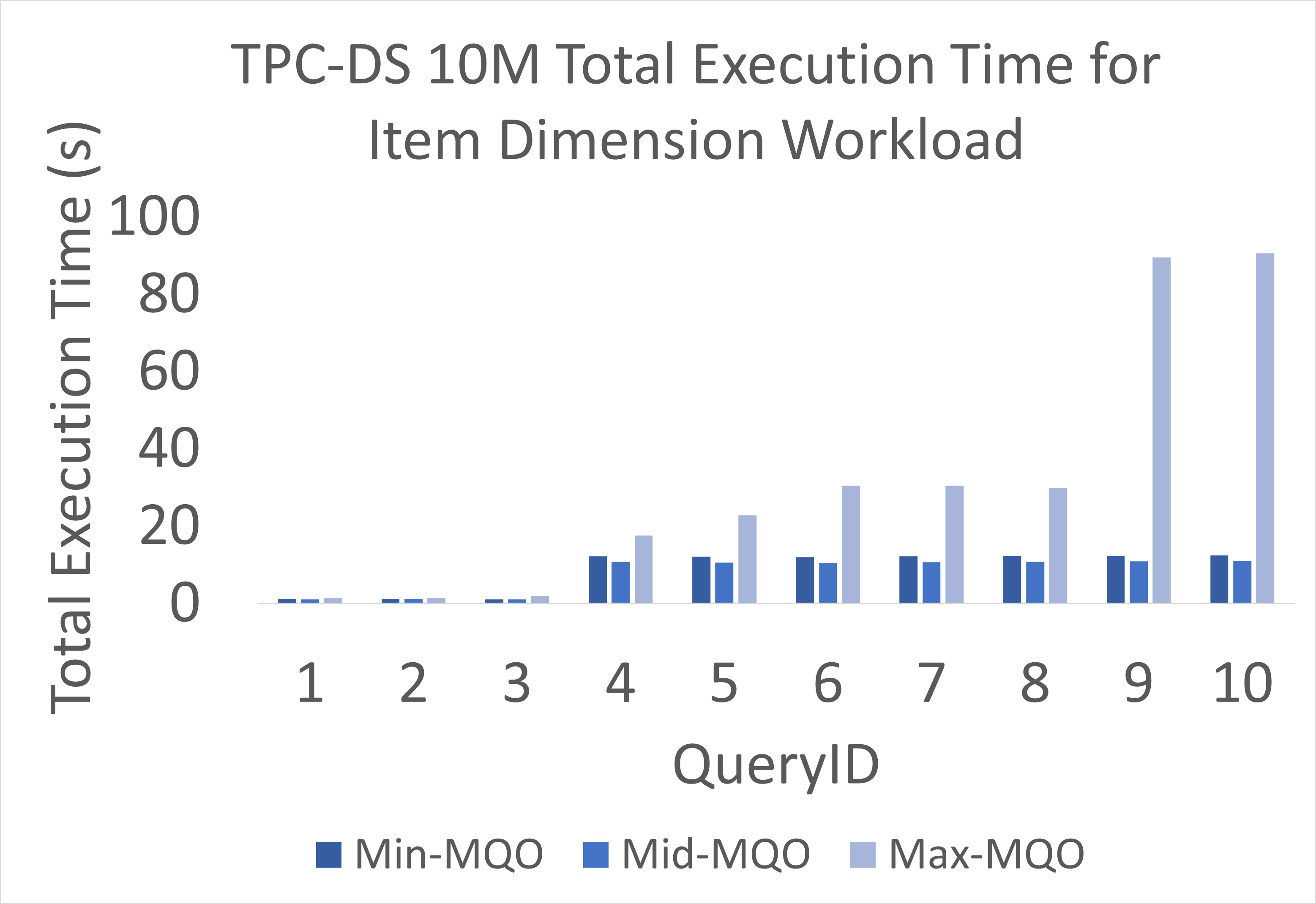}
        \caption{TPC-DS 10M \textit{Item} Workload}
        \label{fig:Fb}
    \end{subfigure}
    \hfill
    \begin{subfigure}[t]{0.3\textwidth}
        \centering
        \includegraphics[width=\linewidth]{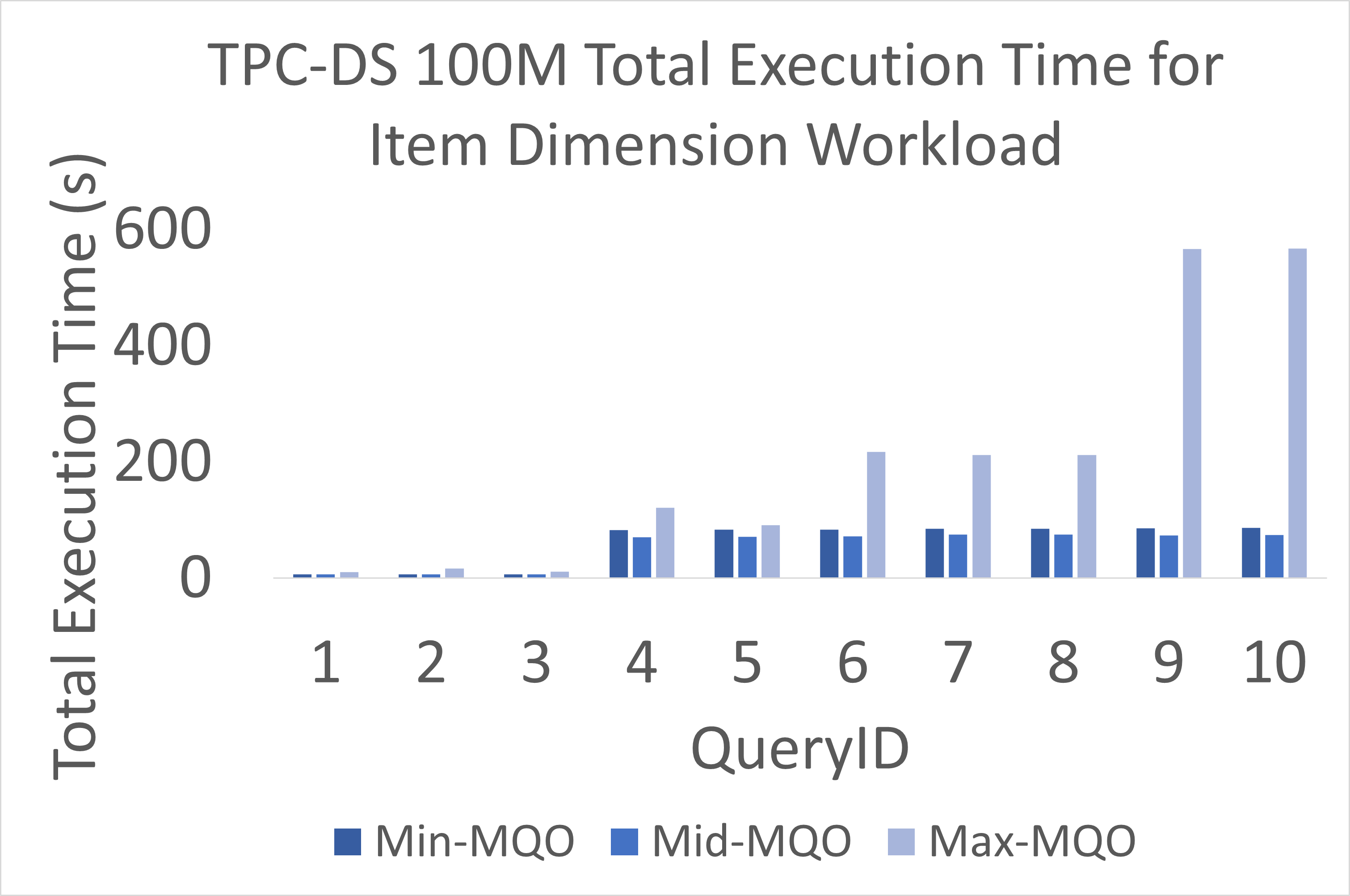}
        \caption{TPC-DS 100M \textit{Item} Workload}
        \label{fig:Fc}
    \end{subfigure}

    \medskip

    \begin{subfigure}[t]{0.3\textwidth}
        \centering
        \includegraphics[width=\linewidth]{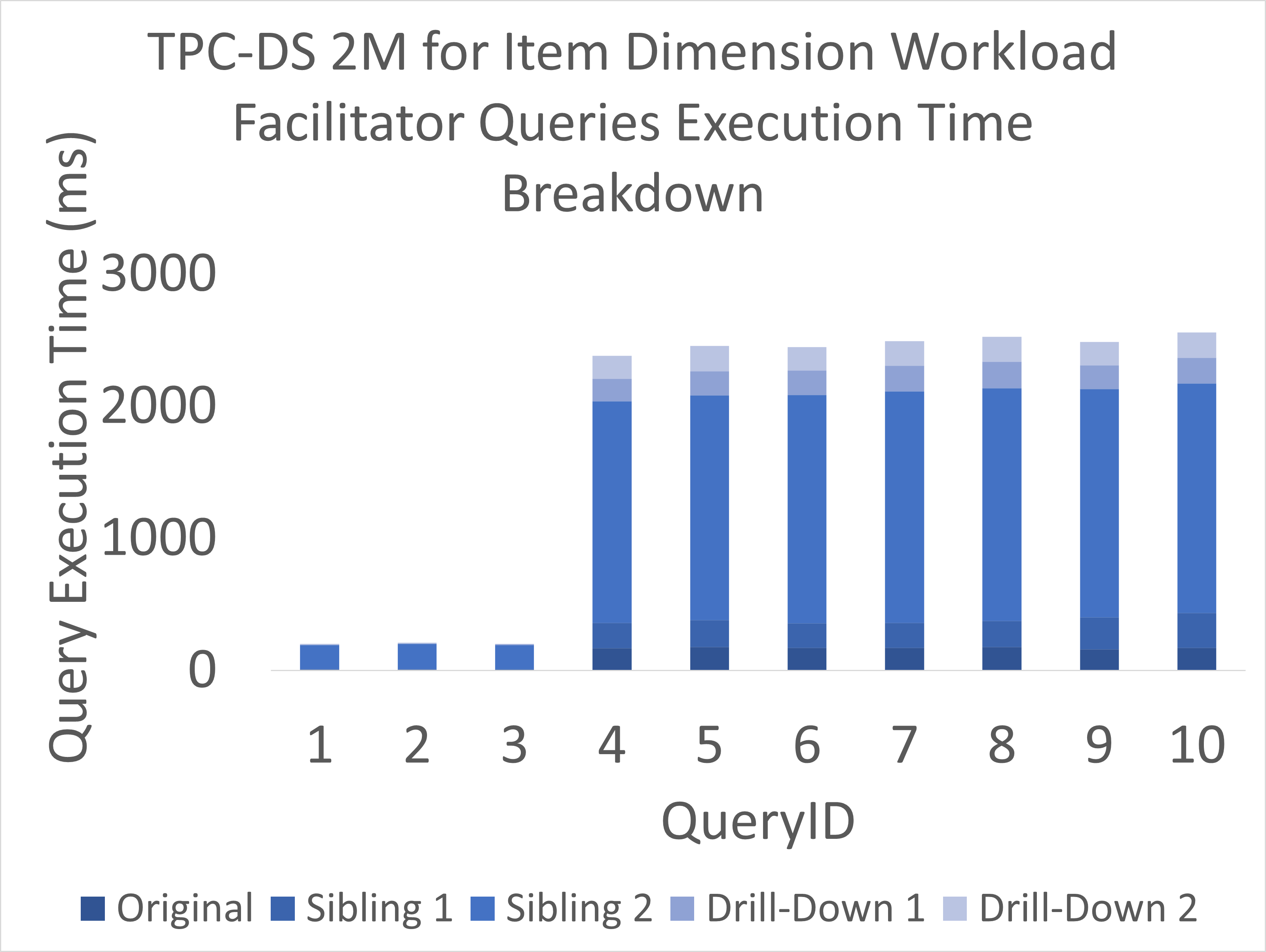}
        \caption{TPC-DS 2M \textit{Item} Workload Facilitator Queries Breakdown}
        \label{fig:Fd}
    \end{subfigure}
    \hfill
    \begin{subfigure}[t]{0.3\textwidth}
        \centering
        \includegraphics[width=\linewidth]{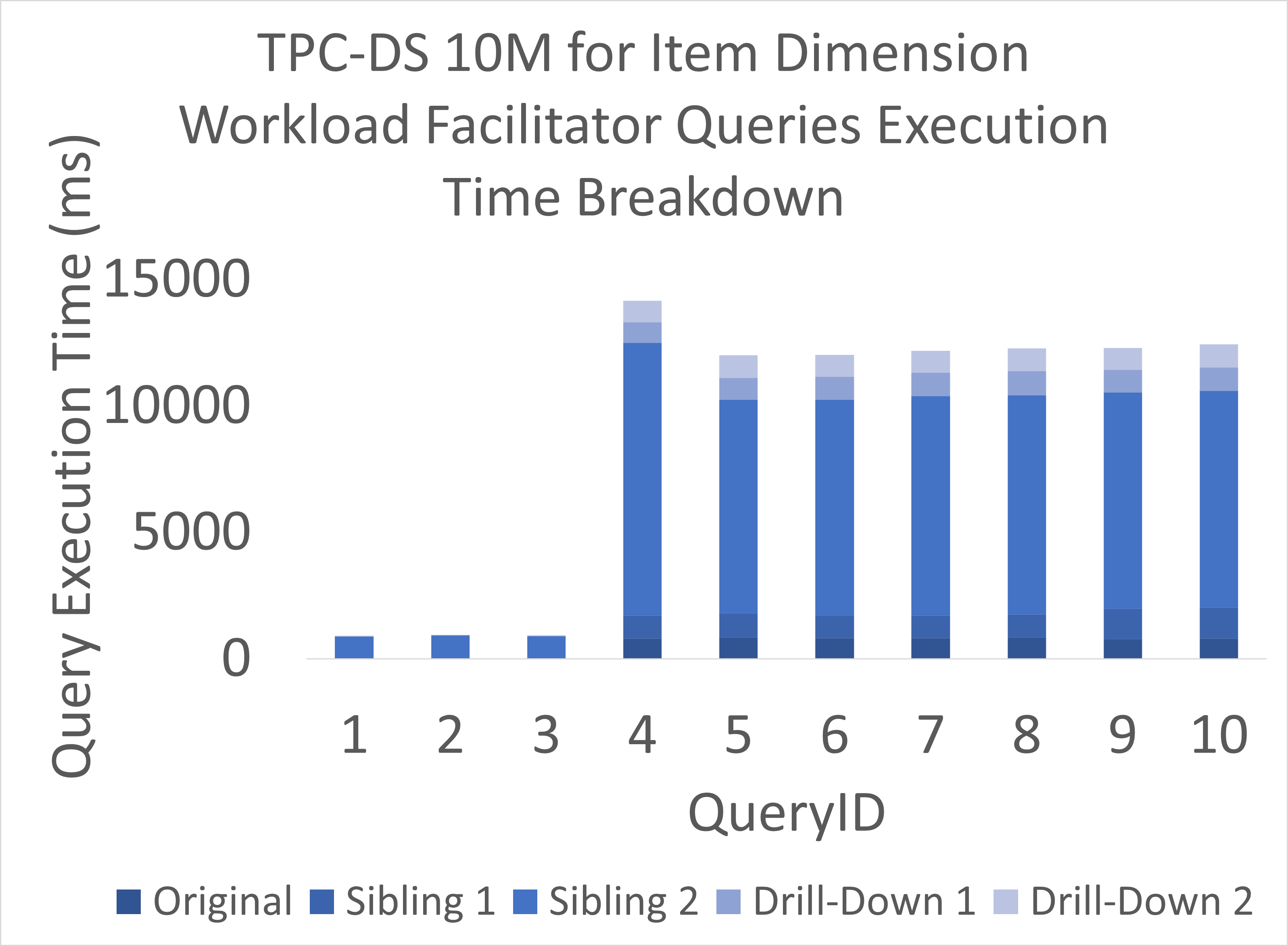}
        \caption{TPC-DS 10M \textit{Item} Workload Facilitator Queries Breakdown}
        \label{fig:Fe}
    \end{subfigure}
    \hfill
    \begin{subfigure}[t]{0.3\textwidth}
        \centering
        \includegraphics[width=\linewidth]{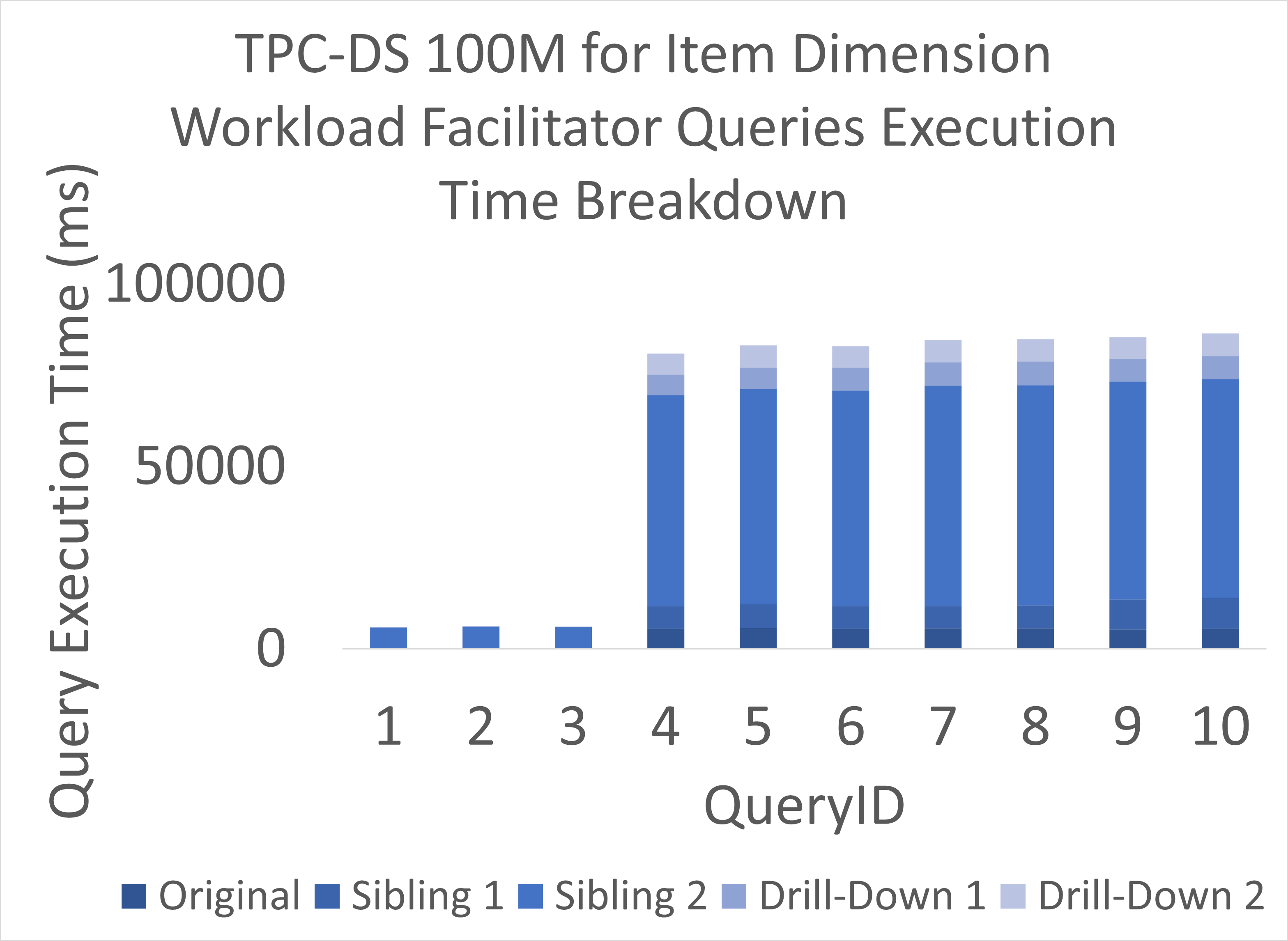}
        \caption{TPC-DS 100M \textit{Item} Workload Facilitator Queries Breakdown}
        \label{fig:Ff}
    \end{subfigure}

\hrule

    \medskip

    \begin{subfigure}[t]{0.3\textwidth}
        \centering
        \includegraphics[width=\linewidth]{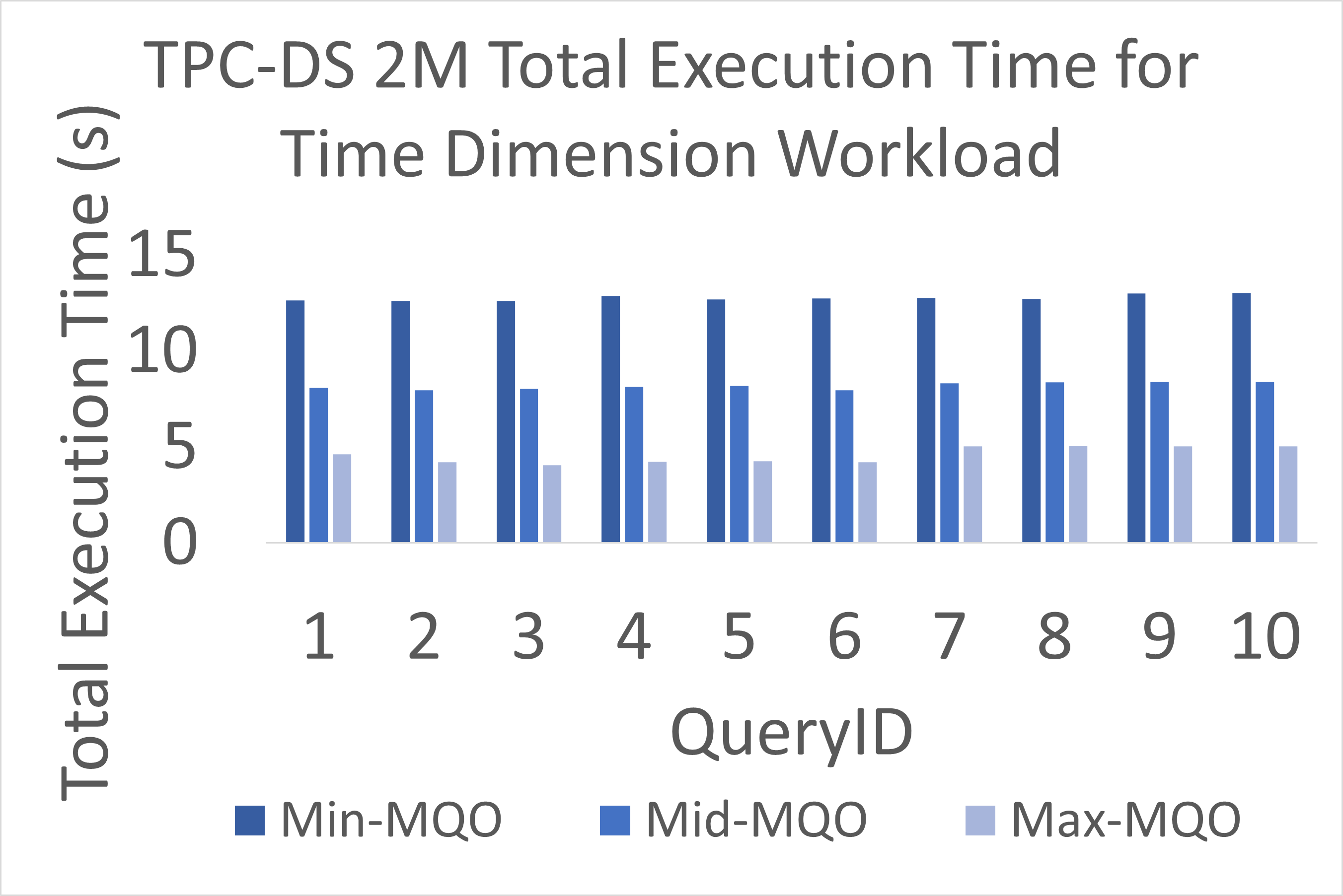}
        \caption{TPC-DS 2M \textit{Time} Workload}
        \label{fig:Fg}
    \end{subfigure}
    \hfill
    \begin{subfigure}[t]{0.3\textwidth}
        \centering
        \includegraphics[width=\linewidth]{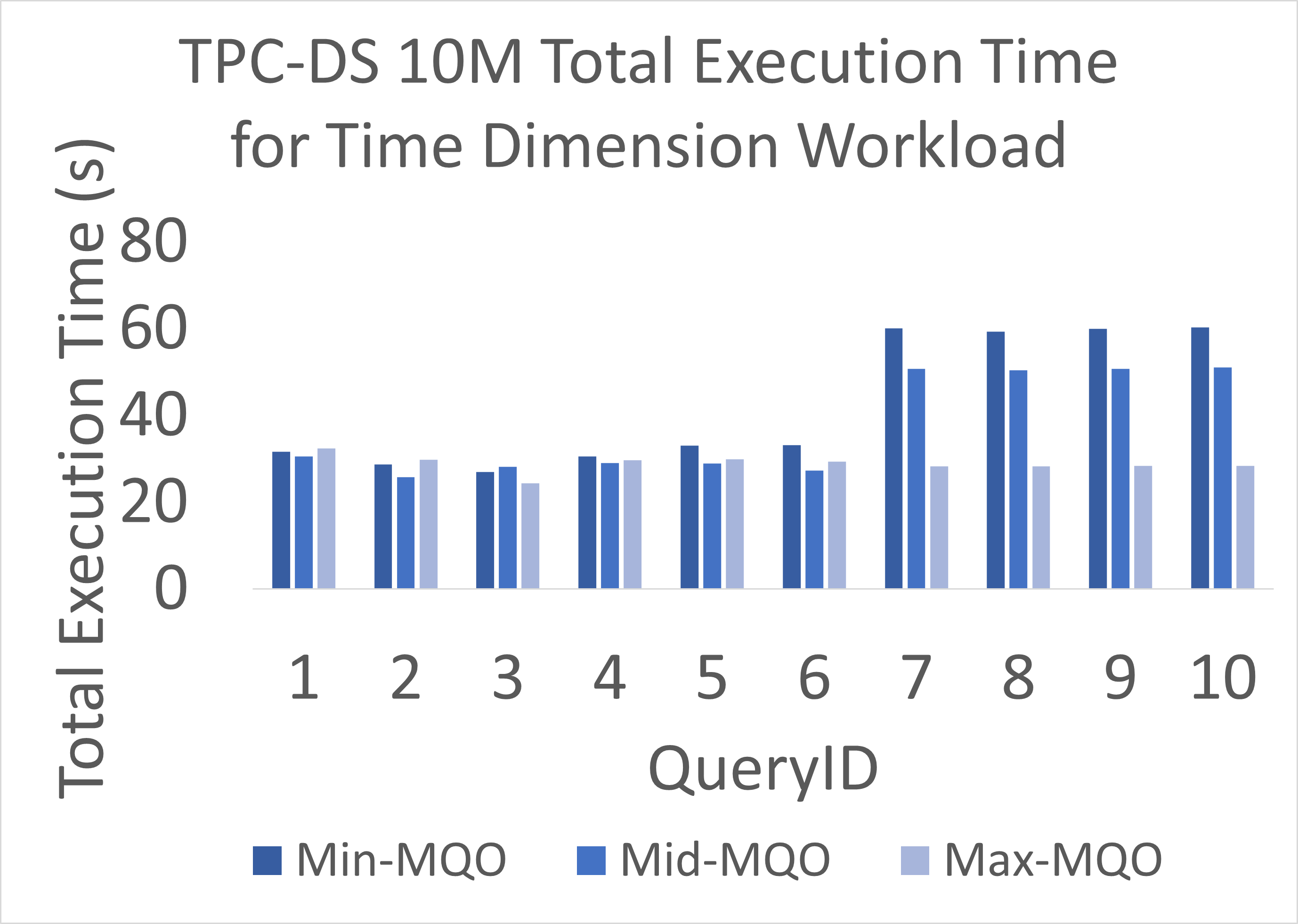}
        \caption{TPC-DS 10M \textit{Time} Workload}
        \label{fig:Fh}
    \end{subfigure}
    \hfill
    \begin{subfigure}[t]{0.3\textwidth}
        \centering
        \includegraphics[width=\linewidth]{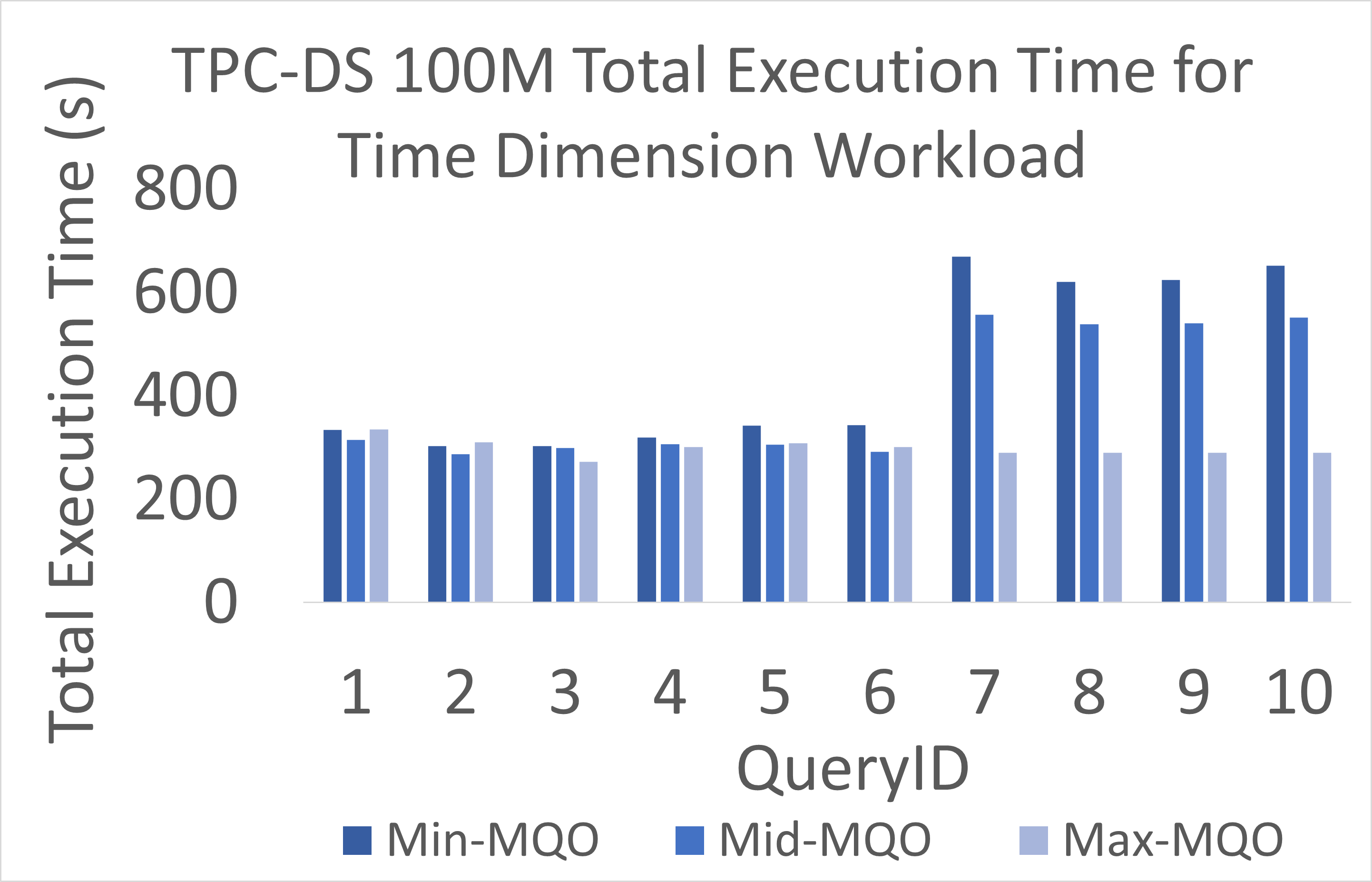}
        \caption{TPC-DS 100M \textit{Time} Workload}
        \label{fig:Fi}
    \end{subfigure}

    \medskip

    \begin{subfigure}[t]{0.3\textwidth}
        \centering
        \includegraphics[width=\linewidth]{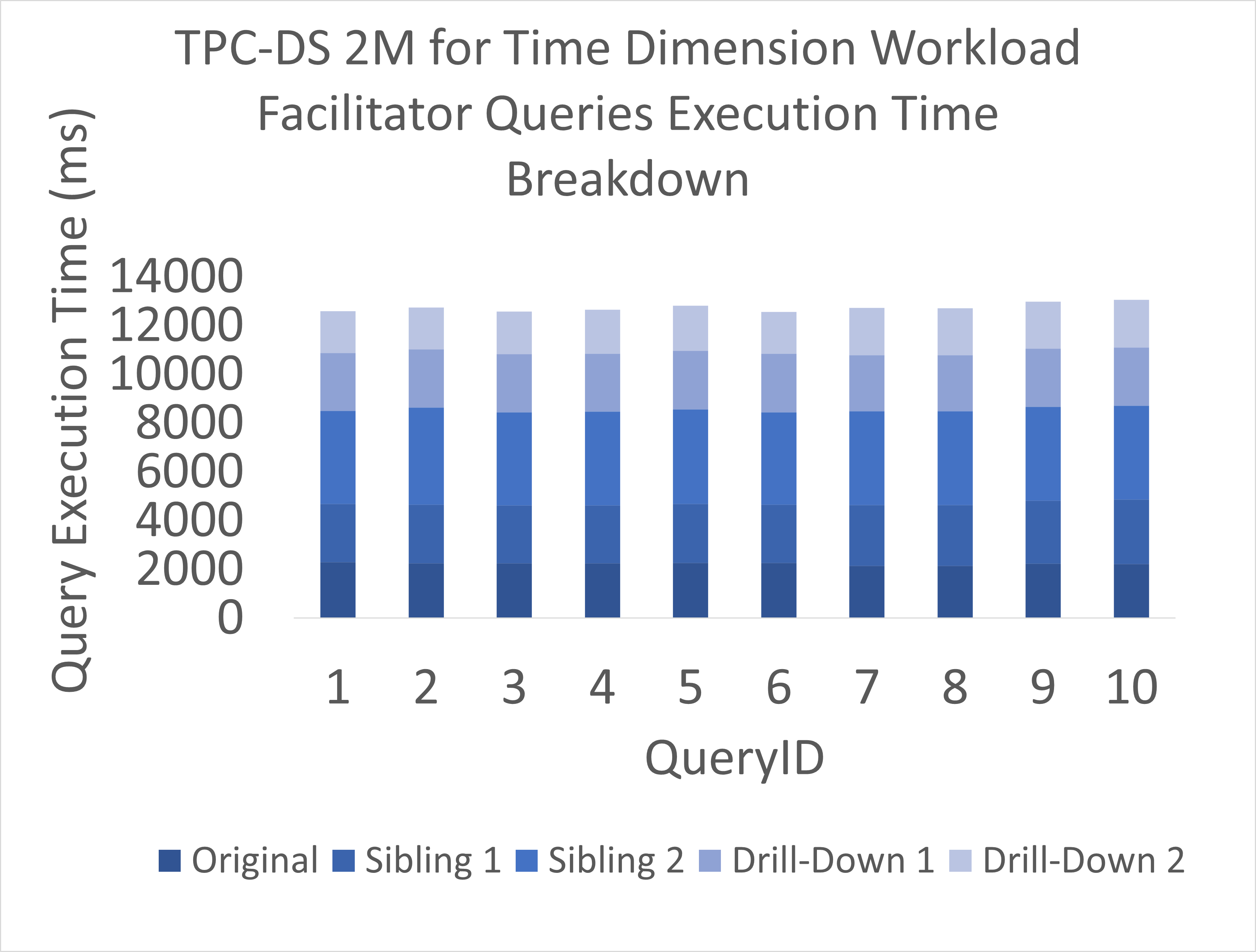}
        \caption{TPC-DS 2M \textit{Time} Workload Facilitator Queries Breakdown}
        \label{fig:Fj}
    \end{subfigure}
    \hfill
    \begin{subfigure}[t]{0.3\textwidth}
        \centering
        \includegraphics[width=\linewidth]{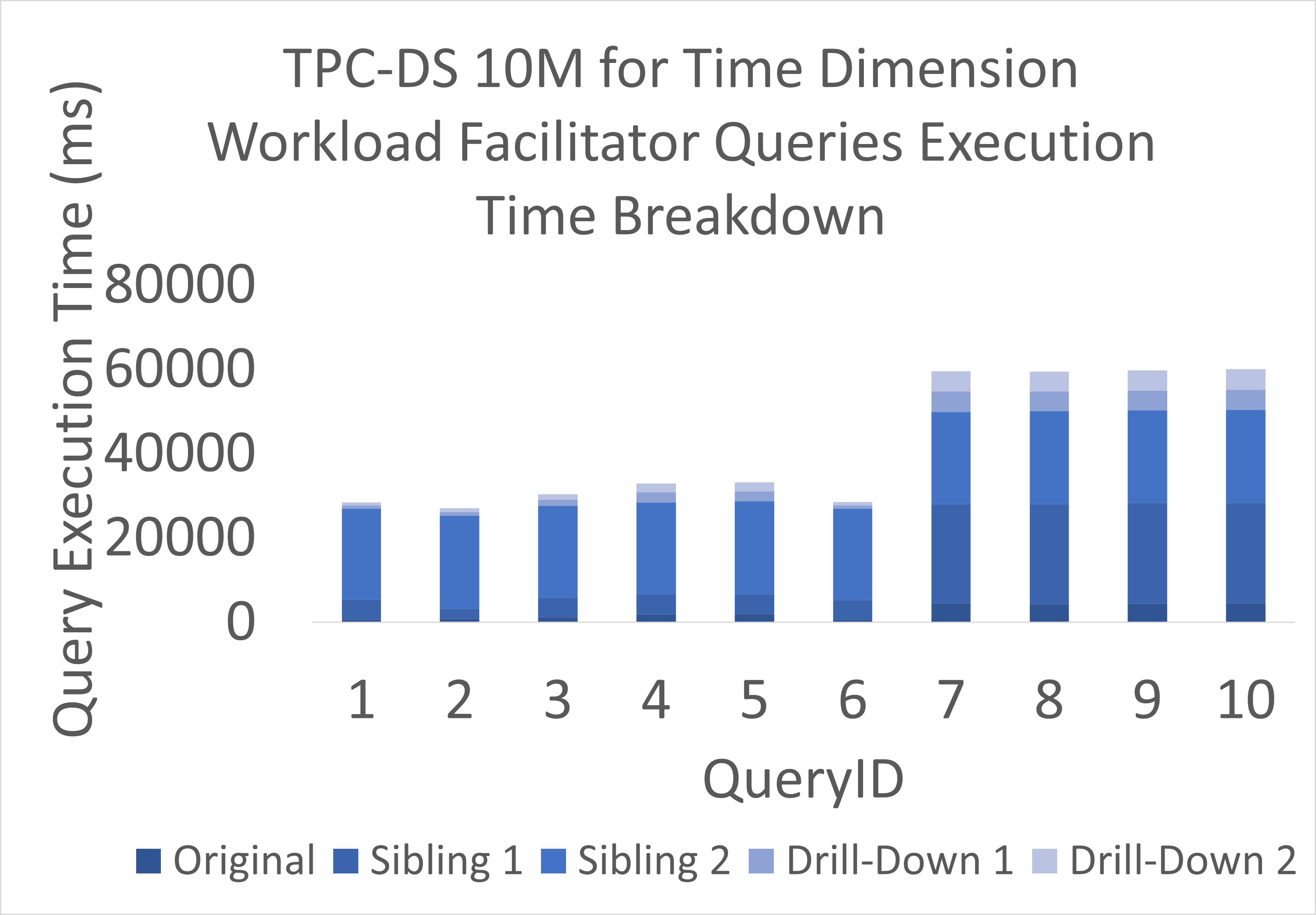}
        \caption{TPC-DS 10M \textit{Time} Workload Facilitator Queries Breakdown}
        \label{fig:Fk}
    \end{subfigure}
    \hfill
    \begin{subfigure}[t]{0.3\textwidth}
        \centering
        \includegraphics[width=\linewidth]{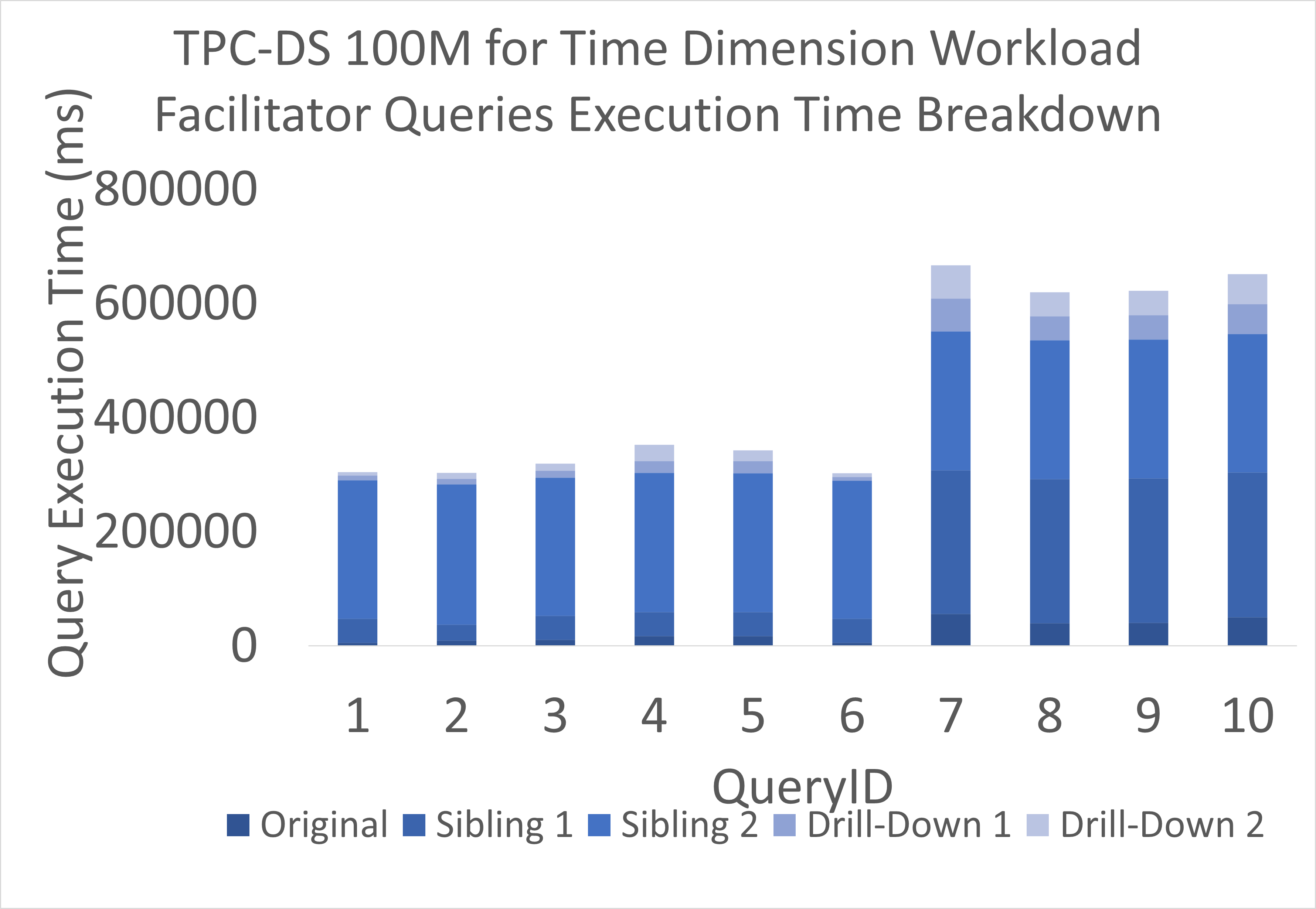}
        \caption{TPC-DS 100M \textit{Time} Workload Facilitator Queries Breakdown}
        \label{fig:Fl}
    \end{subfigure}

\hrule

      \medskip

    \begin{subfigure}[t]{0.3\textwidth}
        \centering
        \includegraphics[width=\linewidth]{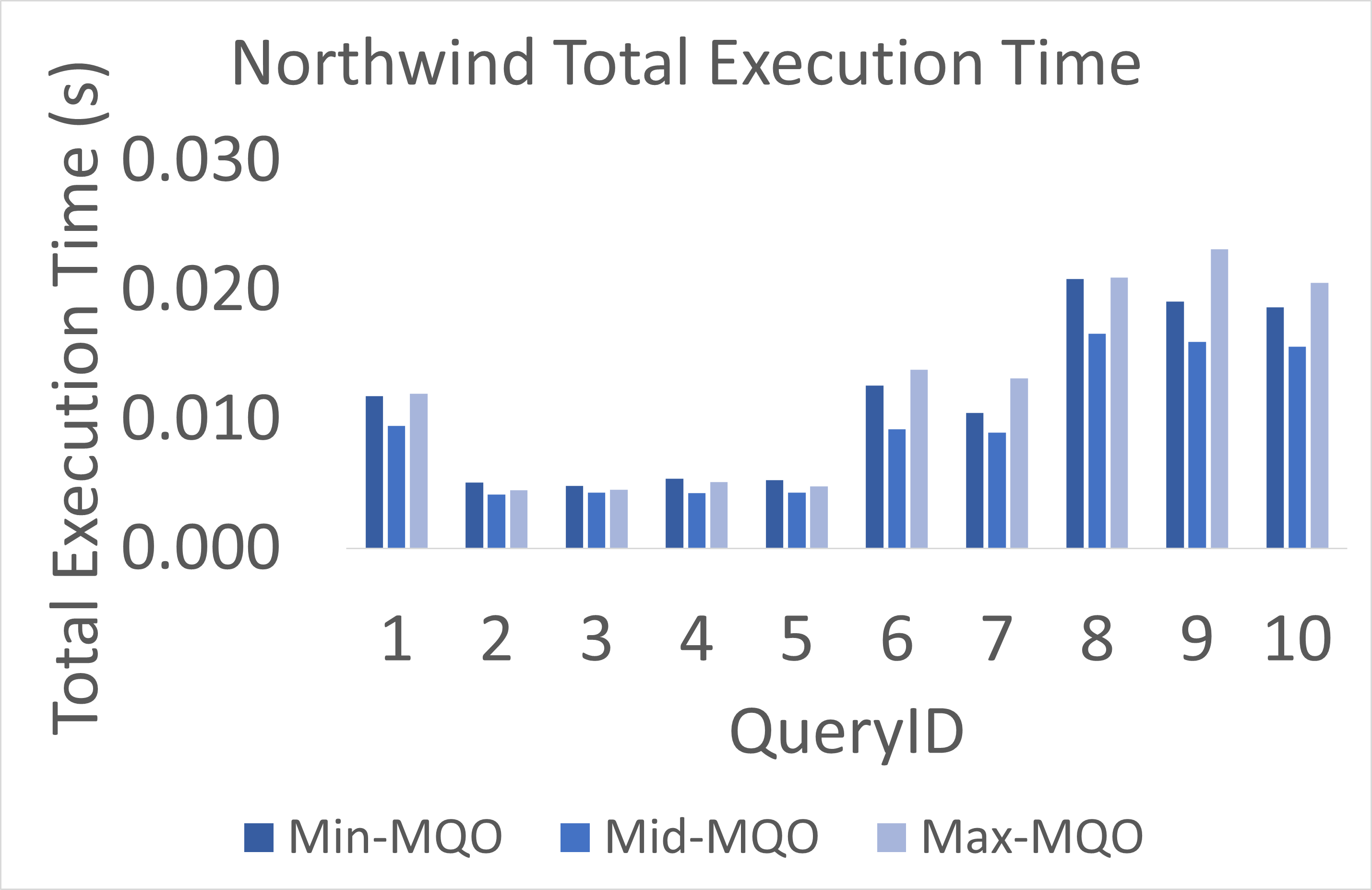}
        \caption{Northwind Workload}
        \label{fig:Fm}
    \end{subfigure}
    \hfill
    \begin{subfigure}[t]{0.3\textwidth}
        \centering
        \includegraphics[width=\linewidth]{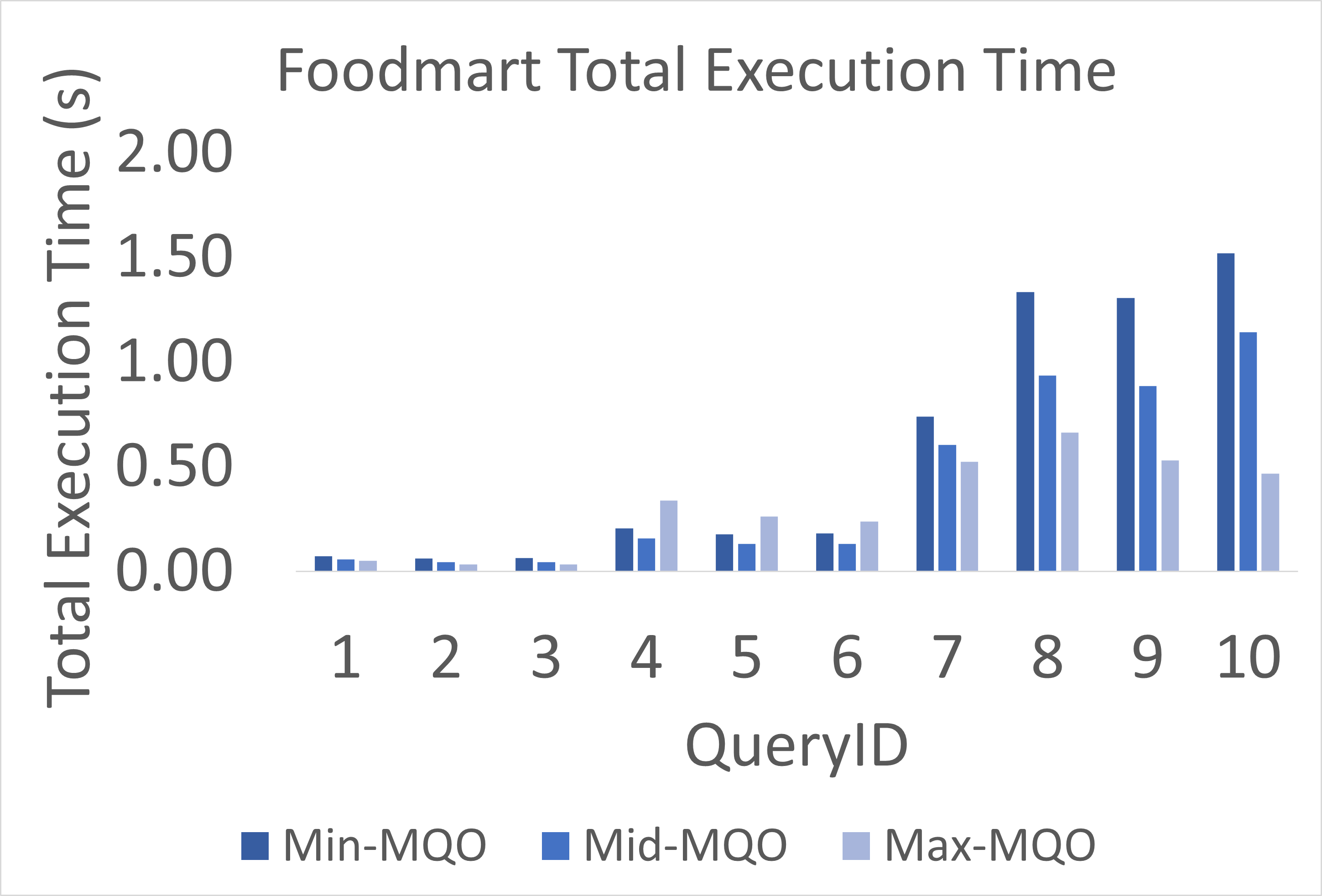}
        \caption{Foodmart Time Workload}
        \label{fig:Fn}
    \end{subfigure}
    \hfill
    \begin{subfigure}[t]{0.3\textwidth}
        \centering
        \includegraphics[width=\linewidth]{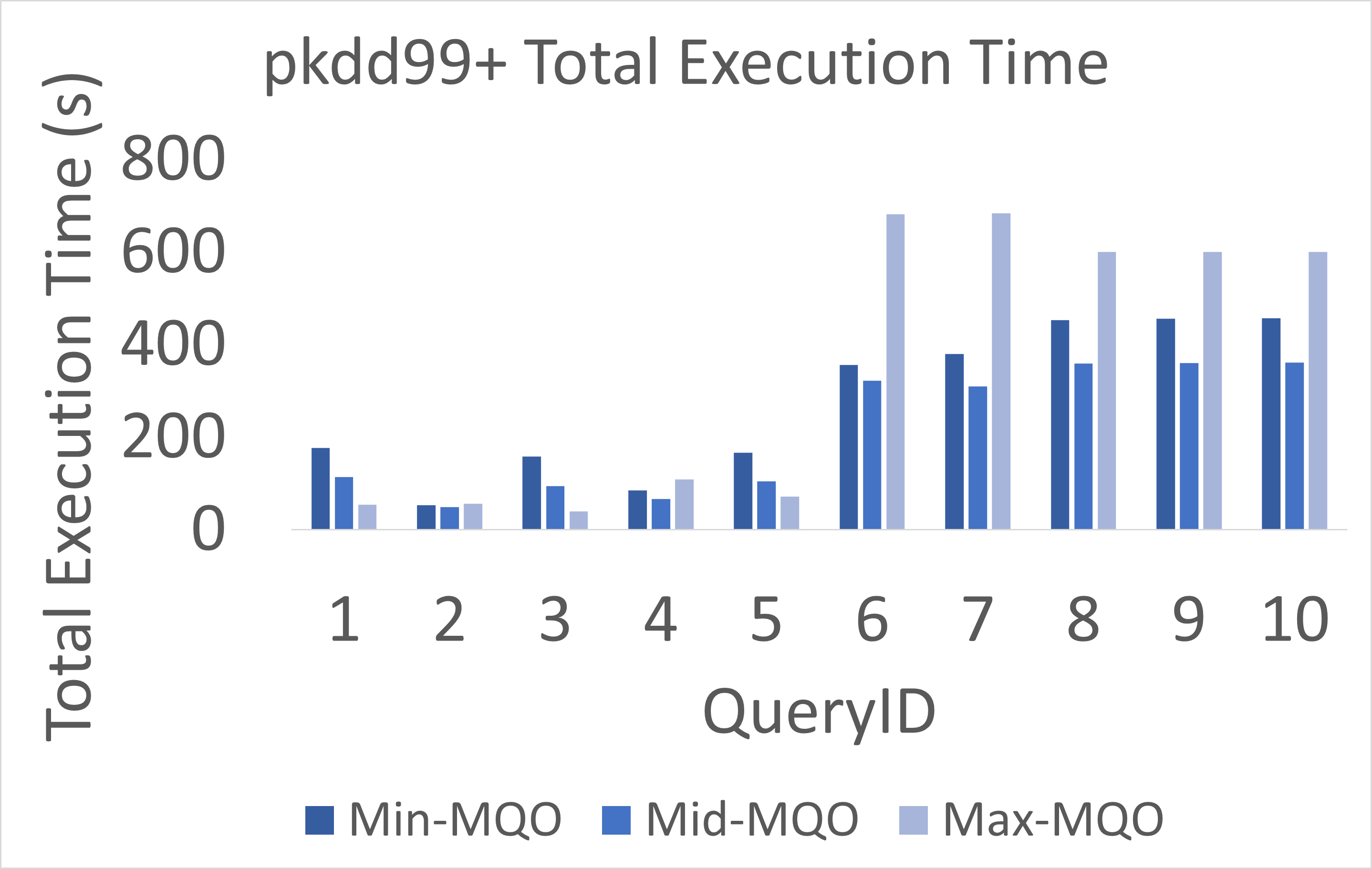}
        \caption{pkdd99+ Time Workload}
        \label{fig:Fo}
    \end{subfigure}

    \caption{Performance Evaluation Figures}
    \label{fig:F}
\end{figure}

\subsection{Performance Evaluation}
To assess the performance of the antagonist methods, we evaluate each algorithm's performance on various datasets and data sizes. Figure \ref{fig:F} presents the performance evaluation for each dataset with respect to the total execution time of each algorithm. For the TPC-DS \textit{Item} and \textit{Time} workloads, Figure \ref{fig:F} also presents the facilitator queries' breakdown for each ANALYZE query of the workload.

\subsubsection{TPC-DS \textit{Item} Workload}
Figures \ref{fig:Fa}, \ref{fig:Fb}, and \ref{fig:Fc} illustrate the \textit{Item} query workload's execution time across all methods for each TPC-DS dataset size variation. Min-MQO is shown as a dark blue line, Mid-MQO as a blue line, Max-MQO as a light blue line. The dimension tables involved in the workload are the \textit{Date} dimension ($\approx$80000 tuples) and the \textit{Item} dimension ($\approx$ 20000 tuples). We observe the following: 
\begin{itemize}
    \item Regardless of the dataset size, Mid-MQO performs the best.
    \item Max-MQO constantly performs the worst, especially when the selectivity ratio is increasing.
    \item Min-MQO and Mid-MQO do not differ significantly in execution time.
\end{itemize}

\silence{(i) Regardless of the dataset size, Mid-MQO performs the best, (ii) Max-MQO constantly performs the worst, especially when the selectivity ratio is increasing, and, (iii) Min-MQO and Mid-MQO do not differ significantly in their execution time.} 

Figures \ref{fig:Fd},\ref{fig:Fe},\ref{fig:Ff} provide the detailed execution time for each facilitator query of the workload's ANALYZE queries. We observe that the drill-down queries that Mid-MQO does not execute in comparison to Min-MQO, are not very significant for the total execution time. Thus, the difference in performance in Min-MQO and Mid-MQO is relatively small. Max-MQO processes a large number of tuples compared to the other antagonists and, in that context, underperforms constantly. 

\subsubsection{TPC-DS \textit{Time} Workload}
Figures \ref{fig:Fg},\ref{fig:Fh},\ref{fig:Fi} illustrate the \textit{Time} query workload's execution time across all methods for each TPC-DS dataset size variation. The dimension tables involved in the workload are the \textit{Date} dimension ($\approx$80000 tuples) and the \textit{Time} dimension ($\approx$ 70000 tuples). We observe the following. 
\begin{itemize}
    \item In the TPC-DS 2M, Max-MQO outperforms the other algorithms, and Min-MQO is the worst performing algorithm in every case.
    \item In the TPC-DS 10M and 100M, in small selectivity ratios, Max-MQO performance is almost stable regarding the selectivity ratio increase, while Min-MQO's and Mid-MQO's performance deteriorates as the selectivity ratio increases.
\end{itemize}

\silence{(i) In the TPC-DS 2M, Max-MQO outperforms the other algorithms and Min-MQO is the worst performing algorithm in every case, (ii) in the TPC-DS 10M and 100M, in small selectivity ratios, Max-MQO performance is almost stable regarding the selectivity ratio increase, while Min-MQO's and Mid-MQO's performance deteriorates as the selectivity ratio increases.}

Here, due to the large size of both dimension tables, the performance of Max-MQO is stable, as it consistently explores a large subset of the fact table once, and avoids the heavy extra cost of siblings of the other algorithms when the selectivity increases (see Figures \ref{fig:Fj}, \ref{fig:Fk}, \ref{fig:Fl} for the breakdown, especially for high-selectivity queries with large QueryID's). The breakdown of the execution time in facilitator queries is presented in Figure \ref{fig:Fj}, \ref{fig:Fk}, \ref{fig:Fl} showing that as the selectivity ratio increases, the execution cost of the drill-down queries justifies the increase in the total execution time difference between Min-MQO and Mid-MQO, thus increasing the performance difference between Min-MQO and Mid-MQO as well.

\subsubsection{Other datasets}
Finally, when comparing the algorithms on the 3 other datasets (Figure \ref{fig:Fm}, \ref{fig:Fn}, \ref{fig:Fo}) we observe.
\begin{itemize}
    \item In the Northwind workload (Table \ref{tab:T9}) that involves two small dimension tables ($\approx$90 tuples and $\approx$9 tuples), Mid-MQO constantly scores wins against the other antagonists, while Max-MQO is underperforming.
    \item In the Foodmart workload (Table \ref{tab:T8}) that involves unbalanced dimension tables ($\approx$10000 tuples and $\approx$1000 tuples), Max-MQO is the fastest algorithm, while Min-MQO is the slowest.
    \item In the pkdd99+ workload (Table \ref{tab:T7}) that involves one fairly large and one small dimension ($\approx$30000 tuples and $\approx$5000), Mid-MQO outperforms the other antagonists in the high selectivity ratios, while Max-MQO performs better in the low selectivity ratios.
\end{itemize}

\silence{(i) in the Northwind workload that involves two small dimension tables ($\approx$90 tuples and $\approx$9 tuples), Mid-MQO constantly scores wins against the other antagonists, while Max-MQO is underperforming, (ii) in the Foodmart workload that involves unbalanced dimension tables ($\approx$10000 tuples and $\approx$1000 tuples), Max-MQO is the fastest algorithm, while Min-MQO is the slowest, (iii) in the pkdd99+ workload that involves one fairly large and one small dimension ($\approx$30000 tuples and $\approx$5000), Mid-MQO outperforms the other antagonists in the high selectivity ratios, while Max-MQO performs better in low selectivity ratios.

We have also conducted extensive experiments over the number of atomic filters, the level of groupers etc, not reported here for lack of space, but not significantly affecting performance in the same extent as the selectivity and the data size.}

\silence{
\paragraph{Choosing the right algorithm.}
\result{Mid-MQO is a safe choice as a query processing strategy, as it is both stable and winning most of the time}. Can we do better than this first result, though? We have analyzed the entire corpus of 70 workloads of the large TPC-DS and pkdd99+ datasets (7 workloads of 10 queries each) to \textit{precisely detect }when each query processing strategy wins.
Ultimately, there is a very simple rule (Figure \ref{fig:DecisionTree}) that requires the estimation of just a single measure: \result{If the fraction $\frac{|q^{org}|}{|q^A|}$ of fact table tuples touched by $q^{org}$ and $q^{A}$, respectively, is between 0.045 and 0.100, then Max-MQO is the best performing algorithm; in all other case Mid-MQO is the optimal MQO strategy.}
Naturally, this range is an engineering default, subject to wider evaluation over time, however, it is consistent with the theoretical setup of $q^A$.
}


\begin{figure}
  \begin{center}
    \includegraphics[width=0.9\textwidth]{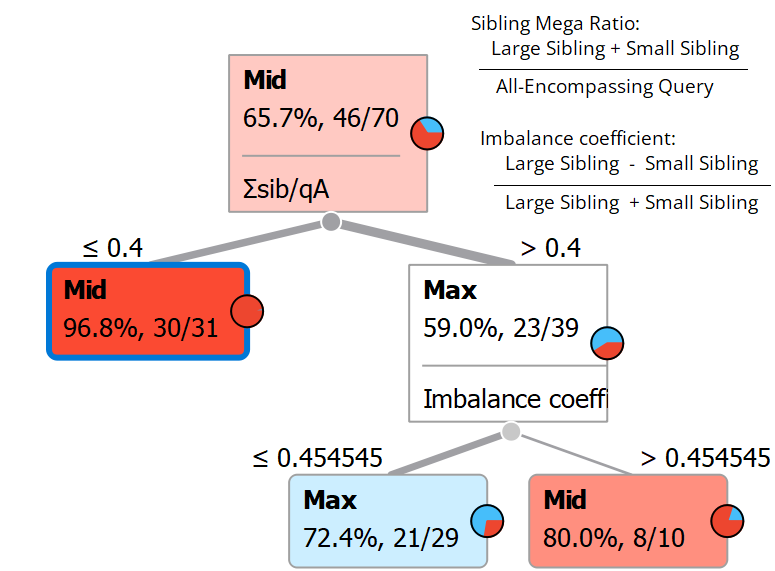}
  \end{center}
  \caption{Optimal Algorithm Criteria}
      \label{fig:DecisionTree}
\end{figure}

\subsection{Effect of Facilitator Queries Performance}
Figures \ref{fig:Fd},\ref{fig:Fe},\ref{fig:Ff} present the execution time for all five facilitator queries for the TPC-DS \textit{Item} workload. We observe that Min-MQO's performance closely follows Mid-MQO's performance regardless of dataset size (Figures \ref{fig:Fa},\ref{fig:Fb},\ref{fig:Fc}). The drill-down queries' execution time for the TPC-DS Item workload is a small percentage of the total execution time of the facilitator queries. However, the execution time of the siblings is a very large percentage of the total execution time. Since both algorithms execute the sibling queries and Mid-MQO omits the drill-down's execution, it is evident that the performance of Min-MQO will follow the performance of Mid-MQO.

Figures \ref{fig:Fj}, \ref{fig:Fk}, \ref{fig:Fl} illustrate the execution time for all five facilitator queries for the TPC-DS \textit{Time} workload. In the case of 2 million facts, the performance gap between Min-MQO and Mid-MQO is significant. We observe that the drill-down queries' execution time is almost 1/3 of the total execution time. In this context, we can state that Mid-MQO which does not execute the drill-down queries, gains a performance advantage over Min-MQO. In all other data sizes (10 million and 100 million), the execution time of the drill-down's is not as large as in the 2 million case and the performance gap between Min-MQO and Mid-MQO is shrinking. 

Taking into account the aforementioned observations, we can state that the execution time of the drill-down queries can affect the performance gap between Min-MQO and Mid-MQO. If the execution time of the drill-down queries is a significant percentage of the total execution time, then Mid-MQO greatly outperforms Min-MQO. However, if the execution time of the drill-down queries is not as significant, then Min-MQO is outperformed by Mid-MQO but follows Mid-MQO's performance closely. 

When does this happen? The crux of the distinction does not lie at the drill down queries, but rather at the extent of the sibling queries. The space of detailed fact tuples that the drill down queries explore is exactly the same with the one of the original query; it is only the eventual aggregation levels that differ. On the other hand, depending on the siblings involved, sibling queries can span a small subset of the detailed facts, or, a very large one. Thus, their ratio over the drill down queries changes too.

\subsection{Choosing Algorithms} 
\result{Mid-MQO is a safe choice as a query processing strategy, as it is both stable and winning most of the time}. However, can we do better than this first result? We have analyzed the entire corpus of 70 workloads (Tables \ref{tab:T5}, \ref{tab:T6}, \ref{tab:T7}) of the large TPC-DS and pkdd99+ datasets (7 workloads of 10 queries each) to \textit{precisely detect }when each query processing strategy wins, via easily computable cost measures. Ultimately (Figure~\ref{fig:DecisionTree}), \result{Max-MQO wins when (a) the siblings touch a fairly large number of fact tuples compared to the all-encompassing query (more than 40\%), and, (b) there is a fairly small imbalance between the sibling queries (less than 45\%). In all other cases, Mid-MQO wins.} The rationale is simple:

\begin{enumerate}
    \item when $q^A$ is too large compared to facilitator queries, then, it explores too much of the data space, hence Max-MQO loses.
    \item when the siblings are both sizable and imbalanced, they do not significantly overlap, thus Max-MQO, visiting them once, does not save time.
\end{enumerate}

The only case where Max-MQO saves time is if sizable siblings overlap (estimated via the imbalance coefficient). Naturally, the exact thresholds are an engineering default, subject to a wider evaluation over time; however, the rationale is consistent with the theoretical setup of $q^A$.

\silence{\subsection{Performance on Various Datasets}
Figures \ref{fig:F10}, \ref{fig:F11}, and \ref{fig:F12} illustrate the query workload execution time across all methods for each dataset. Min-MQO shown as dark blue line, Mid-MQO as blue line, Max-MQO as light blue line and the timeout limit as red dotted line. We observe the following: (i) For \textit{Northwind} which is the smallest dataset available, observe that for QueryID 1-3 (multiple number of atomic filters, according to Table \ref{tab:T3}), Max-MQO is the fastest antagonist. However, Max-MQO's performance drops significantly in every other case. Mid-MQO scores seven out of ten wins and is never the worst-performing algorithm. Min-MQO does not score any wins. (ii) For \textit{Foodmart}, Min-MQO wins for queries that apply more than 2 atomic filters (QueryID 1-3 as Table \ref{tab:T3} describes), but its performance deteriorates rapidly in every other case. Mid-MQO scores five out of 10 wins, but there are cases where it performs the worst (QueryID 2). Max-MQO does not reach timeout level and score three out of ten wins. (iii) For \textit{pkdd99+}, it is evident that Max-MQO wins for queries that apply more than 2 atomic filters (QueryID 1-3 as Table \ref{tab:T3} describes), but reaches the timeout limit in every other case. Mid-MQO scores seven out of 10 wins and is never the worst-performing algorithm, while Min-MQO is the worst-performing algorithm for QueryID 1-3, but never reaches the timeout limit.
Comparing the performance for all datasets, we observe that: (i) Min-MQO is rarely the best-performing antagonist. On the other hand, Min-MQO's execution never times out and always returns results, (ii) Mid-MQO is constantly the fastest antagonist. Mid-MQO achieves better speedup on low-selectivity queries and on every dataset, Mid-MQO scores the most wins. Finally, it rarely is the worst-performing algorithm. Mid-MQO execution never times out and always returns results, and, (iii) Max-MQO is constantly performing well. On every dataset, Max-MQO scores more wins than Min-MQO, but less than Mid-MQO. In cases where datasets contain 
millions of facts, Max-MQO constantly times out.}


\silence{


}

\section{Conclusions}\label{sec:end}
The paper introduces ANALYZE, a novel intentional cube query operator that provides a 360° view of the data by combining original, sibling, and drill-down queries in a single invocation with formal semantics. We have shown that the operator internal queries can be safely merged with theoretical correctness guaranties, enabling principled multi-query optimization at the operator level without modifying the underlying query optimizer. We have introduced, implemented in a data analytics system, and extensively evaluated three execution strategies, demonstrating that partial merging (Mid-MQO) delivers robust and scalable performance across workloads, and full merging (Max-MQO) is beneficial only when the siblings are sizable and significantly overlap, while the not-optimized strategy (Min-MQO) is consistently suboptimal.

As part of future work, several paths can be followed. 
Right now, a plain DBMS would simply execute Min-MQO to process the set of facilitator queries of the operator. A lesson learned here is that the external data analytics engine \textit{can} optimize query execution \textit{outside the DBMS}. This is a lesson to be applied to future intentional operators as well (PREDICT, EXPLAIN, etc). 
Exploiting the formal, algebraic nature of the operator also raises the question of integrating it with cost-based DBMS optimizers or platforms like Apache Spark. 
Several possible extensions concern the internal design choices of the current version of ANALYZE. Relaxing some of the constraints (e.g., by more groupers, siblings based on user-defined or context-dependent benchmarks \cite{DBLP:journals/ijpcc/StefanidisPV07, DBLP:journals/air/MateosB25}) is also open to research.
For example, generalizing our results to more than two groupers is possible; we conjecture that despite the formalization overhead that this incurs, there is no limit to the number of groupers that can be involved in the formulation of an ANALYZE query. As another example, regulating or even automating the benchmark against which the original query is contrasted is also possible (e.g., comparing sales of a quarter to the same quarter of a previous year), with the respective impact to the query processing algorithms. 


\paragraph{Acknowledgments.}
The research project is implemented in the framework of
H.F.R.I call “3rd Call for H.F.R.I.’s Research Projects to Support Faculty Members
\& Researchers” (H.F.R.I. Project Number: 23640). 

D. Gkitsakis helped with early, trial versions of the code.


\section{Artifacts}
The source code, data and/or other artifacts have been made available at \url{https://github.com/DAINTINESS-Group/DelianCubeEngine}. Navigate to the package \textsf{analyzemqoexperiments} at 
\url{https://tinyurl.com/bddf4tzv}
where there is a markdown file with directions on how to repeat the experiments.

\bibliographystyle{alpha}
\bibliography{bibliography}


\end{document}